\PassOptionsToPackage{dvipsnames,HTML}{xcolor}
\PassOptionsToPackage{sort&compress,capitalize,nameinlink}{cleveref}
\pdfoutput=1
\documentclass[a4paper,USenglish,cleveref]{lipics-v2021}
\hideLIPIcs

\title{Fully Fluctuating Sleepy Consensus from Minimal Assumptions}

\EventEditors{Ioannis Chatzigiannakis, Andrea Vitaletti, Keren Censor-Hillel, and William K. Moses Jr.}
\EventNoEds{4}
\EventLongTitle{40th International Symposium on Distributed Computing (DISC 2026)}
\EventShortTitle{DISC 2026}
\EventAcronym{DISC}
\EventYear{2026}
\EventDate{November 9--13, 2026}
\EventLocation{Rome, Italy}
\EventLogo{}
\SeriesVolume{397}
\ArticleNo{44}

\titlerunning{Fully Fluctuating Sleepy Consensus}

\author{Javier Nieto}{University of Illinois Urbana-Champaign, Urbana, IL, USA}{jmnieto2@illinois.edu}{https://orcid.org/0009-0003-8792-4010}{}
\author{Yuval Efron}{Institute for Advanced Study, Princeton, NJ, USA}{Efronyuv@ias.edu}{https://orcid.org/0000-0003-0882-9342}{}
\author{Joachim Neu}{a16z Crypto Research, New York, NY, USA}{jneu@a16z.com}{https://orcid.org/0000-0002-9777-6168}{}
\author{Ling Ren}{University of Illinois Urbana-Champaign, Urbana, IL, USA}{renling@illinois.edu}{https://orcid.org/0000-0003-3437-7570}{}

\authorrunning{J. Nieto, Y. Efron, J. Neu, and L. Ren}

\Copyright{Javier Nieto, Yuval Efron, Joachim Neu, and Ling Ren}

\ccsdesc[500]{Theory of computation~Distributed algorithms}

\keywords{Byzantine agreement, proof of stake, sleepy model, fluctuating participation, graded wakeness, verifiable random function}

\funding{This work is funded in part by the National Science Foundation award \#2143058.}

\nolinenumbers %

\makeatletter
\let\lipics@orig@newtheorem\newtheorem
\def\newtheorem{\@ifnextchar*{\lipics@newtheorem@star}{\lipics@newtheorem@check}}
\def\lipics@newtheorem@star*#1{%
  \@ifundefined{#1}%
    {\lipics@orig@newtheorem*{#1}}%
    {\@gobble}%
}
\def\lipics@newtheorem@check#1{%
  \@ifundefined{#1}%
    {\lipics@orig@newtheorem{#1}}%
    {\lipics@newtheorem@gobble}%
}
\def\lipics@newtheorem@gobble#1{%
  \@ifnextchar[{\lipics@newtheorem@gobbleopt}{}%
}
\def\lipics@newtheorem@gobbleopt[#1]{}
\makeatother

\usepackage{subcaption}
\usepackage{array}
\usepackage{enumerate}
\usepackage{tabularx}
\usepackage{thmtools}
\usepackage{placeins}
\usepackage{./lib/shortsym}

\usepackage{import}
\usepackage[utf8]{inputenc}

\PassOptionsToPackage{hyphens}{url}%
\usepackage{graphicx}

\usepackage{amsmath}
\usepackage{amssymb}
\usepackage{amsfonts}
\usepackage{mathtools}

\usepackage{shortsym}

\usepackage{microtype}

\usepackage{IEEEtrantools}
\allowdisplaybreaks

\usepackage{import}
\theoremstyle{plain}
\newtheorem{lemma}{Lemma}
\newtheorem*{theorem*}{Theorem}
\newtheorem*{conjecture*}{Conjecture}

\theoremstyle{definition}
\makeatletter
\renewcommand*\env@matrix[1][*\c@MaxMatrixCols c]{%
    \hskip -\arraycolsep
    \let\@ifnextchar\new@ifnextchar
    \array{#1}}
\makeatother

\DeclarePairedDelimiter{\abs}{\lvert}{\rvert}
\DeclarePairedDelimiter{\len}{\lvert}{\rvert}
\DeclarePairedDelimiter{\norm}{\lVert}{\rVert}
\DeclarePairedDelimiter{\floor}{\lfloor}{\rfloor}
\DeclarePairedDelimiter{\ceil}{\lceil}{\rceil}

\makeatletter
\let\oldabs\abs
\def\abs{\@ifstar{\oldabs}{\oldabs*}}
\let\oldlen\len
\def\len{\@ifstar{\oldlen}{\oldlen*}}
\let\oldnorm\norm
\def\norm{\@ifstar{\oldnorm}{\oldnorm*}}
\let\oldfloor\floor
\def\floor{\@ifstar{\oldfloor}{\oldfloor*}}
\let\oldceil\ceil
\def\ceil{\@ifstar{\oldceil}{\oldceil*}}
\makeatother

\usepackage{pifont}

\usepackage{dsfont}

\usepackage{xspace}

\usepackage{xcolor} %

\definecolor{myA16zGrayLight}{RGB}{235,235,235}     %
\definecolor{myA16zGrayMedium}{RGB}{196,196,196}    %
\definecolor{myA16zGrayDark}{RGB}{44,34,34}         %
\definecolor{myA16zLavender}{RGB}{208,161,255}      %
\definecolor{myA16zMagenta}{RGB}{195,70,206}        %
\definecolor{myA16zMulberry}{RGB}{113,24,88}        %
\definecolor{myA16zLemonChiffon}{RGB}{250,234,157}  %
\definecolor{myA16zAmber}{RGB}{230,154,48}          %
\definecolor{myA16zRust}{RGB}{174,59,10}            %
\definecolor{myA16zLime}{RGB}{197,222,107}          %
\definecolor{myA16zAquamarine}{RGB}{82,216,145}     %
\definecolor{myA16zPine}{RGB}{60,87,44}             %
\definecolor{myA16zPacific}{RGB}{145,224,235}       %
\definecolor{myA16zTeal}{RGB}{36,197,201}           %
\definecolor{myA16zAzure}{RGB}{18,51,90}            %

\definecolor{myTechnionDeepBlue}{HTML}{002147}           %
\definecolor{myTechnionGoldenOchre}{HTML}{D59F0F}        %
\definecolor{myTechnionBlack}{HTML}{000000}              %
\definecolor{myTechnionWhite}{HTML}{FFFFFF}              %

\definecolor{myTechnionRed}{HTML}{E31D1A}             %
\definecolor{myTechnionPink}{HTML}{EA094B}           %
\definecolor{myTechnionPurple1}{HTML}{AE3B72}         %
\definecolor{myTechnionPurple2}{HTML}{4D4084}         %
\definecolor{myTechnionBlue1}{HTML}{216093}            %
\definecolor{myTechnionBlue2}{HTML}{5686DA}           %
\definecolor{myTechnionTeal}{HTML}{32B1CA}            %
\definecolor{myTechnionGreen1}{HTML}{EA094B}           %
\definecolor{myTechnionGreen2}{HTML}{A3D65C}           %
\definecolor{myTechnionGreen3}{HTML}{94D60A}           %
\definecolor{myTechnionYellow}{HTML}{FDD700}          %
\definecolor{myTechnionOrange}{HTML}{FF6B00}         %
\definecolor{myTechnionBrown}{HTML}{97775C}          %
\definecolor{myTechnionBeige}{HTML}{D9D1C3}          %
\definecolor{myTechnionGray1}{HTML}{A2A9AE}            %
\definecolor{myTechnionGray2}{HTML}{5A6771}            %

\definecolor{mySuCardinalRed}{HTML}{8c1515}
\definecolor{mySuCardinalRedLight}{HTML}{B83A4B}
\definecolor{mySuCardinalRedDark}{HTML}{820000}
\definecolor{mySuWhite}{HTML}{ffffff}
\definecolor{mySuCoolGrey}{HTML}{53565A}
\definecolor{mySuBlack}{HTML}{2e2d29}
\definecolor{mySuBlack100}{HTML}{2e2d29}
\definecolor{mySuBlack90}{HTML}{43423E}
\definecolor{mySuBlack80}{HTML}{585754}
\definecolor{mySuBlack70}{HTML}{6D6C69}
\definecolor{mySuBlack60}{HTML}{767674}
\definecolor{mySuBlack50}{HTML}{979694}
\definecolor{mySuBlack40}{HTML}{ABABA9}
\definecolor{mySuBlack30}{HTML}{C0C0BF}
\definecolor{mySuBlack20}{HTML}{D5D5D4}
\definecolor{mySuBlack10}{HTML}{EAEAEA}

\definecolor{mySuPaloAlto}{HTML}{175E54}
\definecolor{mySuPaloAltoLight}{HTML}{2D716F}
\definecolor{mySuPaloAltoDark}{HTML}{014240}
\definecolor{mySuPaloVerde}{HTML}{279989}
\definecolor{mySuPaloVerdeLight}{HTML}{59B3A9}
\definecolor{mySuPaloVerdeDark}{HTML}{017E7C}
\definecolor{mySuOlive}{HTML}{8F993E}
\definecolor{mySuOliveLight}{HTML}{A6B168}
\definecolor{mySuOliveDark}{HTML}{7A863B}
\definecolor{mySuBay}{HTML}{6FA287}
\definecolor{mySuBayLight}{HTML}{8AB8A7}
\definecolor{mySuBayDark}{HTML}{417865}
\definecolor{mySuSky}{HTML}{4298B5}
\definecolor{mySuSkyLight}{HTML}{67AFD2}
\definecolor{mySuSkyDark}{HTML}{016895}
\definecolor{mySuLagunita}{HTML}{007C92}
\definecolor{mySuLagunitaLight}{HTML}{009AB4}
\definecolor{mySuLagunitaDark}{HTML}{006B81}
\definecolor{mySuPoppy}{HTML}{E98300}
\definecolor{mySuPoppyLight}{HTML}{F9A44A}
\definecolor{mySuPoppyDark}{HTML}{D1660F}
\definecolor{mySuSpirited}{HTML}{E04F39}
\definecolor{mySuSpiritedLight}{HTML}{F4795B}
\definecolor{mySuSpiritedDark}{HTML}{C74632}
\definecolor{mySuIlluminating}{HTML}{FEDD5C}
\definecolor{mySuIlluminatingLight}{HTML}{FFE781}
\definecolor{mySuIlluminatingDark}{HTML}{FEC51D}
\definecolor{mySuPlum}{HTML}{620059}
\definecolor{mySuPlumLight}{HTML}{734675}
\definecolor{mySuPlumDark}{HTML}{350D36}
\definecolor{mySuBrick}{HTML}{651C32}
\definecolor{mySuBrickLight}{HTML}{7F2D48}
\definecolor{mySuBrickDark}{HTML}{42081B}
\definecolor{mySuArchway}{HTML}{5D4B3C}
\definecolor{mySuArchwayLight}{HTML}{766253}
\definecolor{mySuArchwayDark}{HTML}{2F2424}
\definecolor{mySuStone}{HTML}{7F7776}
\definecolor{mySuStoneLight}{HTML}{D4D1D1}
\definecolor{mySuStoneDark}{HTML}{544948}
\definecolor{mySuFog}{HTML}{DAD7CB}
\definecolor{mySuFogLight}{HTML}{F4F4F4}
\definecolor{mySuFogDark}{HTML}{B6B1A9}

\definecolor{mySuDigitalRed}{HTML}{B1040E}
\definecolor{mySuDigitalRedLight}{HTML}{E50808}
\definecolor{mySuDigitalRedDark}{HTML}{820000}
\definecolor{mySuDigitalBlue}{HTML}{006CB8}
\definecolor{mySuDigitalBlueLight}{HTML}{6FC3FF}
\definecolor{mySuDigitalBlueDark}{HTML}{00548f}
\definecolor{mySuDigitalGreen}{HTML}{008566}
\definecolor{mySuDigitalGreenLight}{HTML}{1AECBA}
\definecolor{mySuDigitalGreenDark}{HTML}{006F54}

\definecolor{myParula1Blue}{RGB}{0,114,189}
\definecolor{myParula2Orange}{RGB}{217,83,25}
\definecolor{myParula3Yellow}{RGB}{237,177,32}
\definecolor{myParula4Purple}{RGB}{126,47,142}
\definecolor{myParula5Green}{RGB}{119,172,48}
\definecolor{myParula6LightBlue}{RGB}{77,190,238}
\definecolor{myParula7Red}{RGB}{162,20,47}

\usepackage{multirow}
\usepackage{booktabs}   %
\usepackage{makecell}
\usepackage{threeparttable}   %

\usepackage{algorithm}
\usepackage{algorithmicx}
\usepackage[noend]{algpseudocode}

\makeatletter
\AddToHook{env/algorithmic/begin}{\def\@currentcounter{ALG@line}}
\renewcommand{\theHALG@line}{\thealgorithm.\arabic{ALG@line}}
\makeatother

\algnewcommand{\LineComment}[1]{\State {\textcolor{gray}{/\!/ #1}}}

\algrenewcommand{\alglinenumber}[1]{\scriptsize\textcolor{gray}{\texttt{#1}}}
\algrenewcommand{\algorithmicindent}{1em}

\algnewcommand{\algfontsize}[0]{}
\AddToHook{env/algorithmic/begin}{\algfontsize}

\algnewcommand{\algorithmicswitch}{\textbf{switch}}
\algdef{SE}[SWITCH]{Switch}{EndSwitch}[1]{\algorithmicswitch\ #1\ \algorithmicdo}{\algorithmicend\ \algorithmicswitch}%
\algtext*{EndSwitch}%

\algnewcommand{\algorithmiccase}{\textbf{case}}
\algdef{SE}[CASE]{Case}{EndCase}[1]{\algorithmiccase\ #1}{\algorithmicend\ \algorithmiccase}%
\algtext*{EndCase}%

\algnewcommand{\algorithmicon}{\textbf{on}}
\algdef{SE}[ON]{On}{EndOn}[1]{\algorithmicon\ #1\ \algorithmicdo}{\algorithmicend\ \algorithmicon}%
\algtext*{EndOn}%

\algnewcommand{\algorithmicat}{\textbf{at}}
\algdef{SE}[AT]{At}{EndAt}[1]{\algorithmicat\ #1\ \algorithmicdo}{\algorithmicend\ \algorithmicat}%
\algtext*{EndAt}%

\algnewcommand{\algorithmicrealfunction}{\textbf{function}}
\algdef{SE}[REALFUNCTION]{RealFunction}{EndRealFunction}[1]{\algorithmicrealfunction\ #1\ \algorithmicdo}{\algorithmicend\ \algorithmicrealfunction}%
\algtext*{EndRealFunction}%

\algnewcommand{\algorithmicthroughout}{\textbf{do throughout}}
\algdef{SE}[Throughout]{Throughout}{EndThroughout}[1]{\algorithmicthroughout\ #1\ \algorithmicdo}{\algorithmicend\ \algorithmicthroughout}%
\algtext*{EndThroughout}%

\algnewcommand{\algorithmictry}{\textbf{try}}
\algnewcommand{\algorithmiccatch}{\textbf{catch}}
\algdef{SE}[TRY]{Try}{EndTry}{\algorithmictry\ \algorithmicdo}{\algorithmicend\ \algorithmictry}
\algdef{C}[TRY]{TRY}{Catch}[1]{\algorithmiccatch\ #1\ \algorithmicdo}
\algtext*{EndTry}

\algrenewcommand{\algorithmicdo}{}
\algrenewcommand{\algorithmicthen}{}

\algnewcommand{\algorithmicgoto}{\textbf{goto}}%
\algnewcommand{\Goto}[1]{\algorithmicgoto~\ref{#1}}%

\algnewcommand{\algorithmicassert}{\textbf{assert}}%
\algnewcommand{\Assert}[1]{\algorithmicassert~{#1}}%

\algnewcommand{\algorithmicbreak}{\textbf{break}}%
\algnewcommand{\Break}[0]{\algorithmicbreak}%
\algnewcommand{\BreakOutOf}[1]{\algorithmicbreak~out~of~#1}%

\algnewcommand{\algorithmicwaiton}{\textbf{wait on}}%
\algnewcommand{\WaitOn}[1]{\algorithmicwaiton~{#1}}%

\algnewcommand{\InlineRequire}[1]{\textbf{require} {#1}}

\algblock{ManualIndent}{EndManualIndent}
\algnotext{ManualIndent}
\algnotext{EndManualIndent}

\algdef{SE}[GENERICBLOCK]{GenericBlock}{EndGenericBlock}[1]{#1}{}%
\algtext*{EndGenericBlock}%

\AtBeginEnvironment{appendices}{%
    \crefalias{section}{appendix}%
    \crefalias{subsection}{subappendix}%
    \crefalias{subsubsection}{subsubappendix}%
    \crefalias{subsubsubsection}{subsubsubappendix}%
}
\AddToHook{cmd/appendix/after}{%
    \crefalias{section}{appendix}%
    \crefalias{subsection}{subappendix}%
    \crefalias{subsubsection}{subsubappendix}%
    \crefalias{subsubsubsection}{subsubsubappendix}%
}

\hypersetup{hypertexnames=false}

\crefalias{ALG@line}{line}

\crefname{figure}{Fig.}{Figs.}
\Crefname{figure}{Fig.}{Figs.}

\crefname{table}{Tab.}{Tabs.}
\Crefname{table}{Tab.}{Tabs.}

\crefname{section}{Sec.}{Secs.}
\Crefname{section}{Sec.}{Secs.}
\crefname{subsection}{Sec.}{Secs.}
\Crefname{subsection}{Sec.}{Secs.}
\crefname{subsubsection}{Sec.}{Secs.}
\Crefname{subsubsection}{Sec.}{Secs.}
\crefname{subsubsubsection}{Sec.}{Secs.}
\Crefname{subsubsubsection}{Sec.}{Secs.}
\crefname{appendix}{App.}{Apps.}
\Crefname{appendix}{App.}{Apps.}
\crefname{subappendix}{App.}{Apps.}
\Crefname{subappendix}{App.}{Apps.}
\crefname{subsubappendix}{App.}{Apps.}
\Crefname{subsubappendix}{App.}{Apps.}
\crefname{subsubsubappendix}{App.}{Apps.}
\Crefname{subsubsubappendix}{App.}{Apps.}

\crefname{algorithm}{Alg.}{Algs.}
\Crefname{algorithm}{Alg.}{Algs.}
\crefname{line}{ln.}{lns.}
\Crefname{line}{ln.}{lns.}

\crefname{proposition}{Prop.}{Props.}
\Crefname{proposition}{Prop.}{Props.}
\crefname{lemma}{Lem.}{Lems.}
\Crefname{lemma}{Lem.}{Lems.}
\crefname{theorem}{Thm.}{Thms.}
\Crefname{theorem}{Thm.}{Thms.}
\crefname{corollary}{Cor.}{Cors.}
\Crefname{corollary}{Cor.}{Cors.}
\crefname{definition}{Def.}{Defs.}
\Crefname{definition}{Def.}{Defs.}
\crefname{observation}{Obs.}{Obs.}
\Crefname{observation}{Obs.}{Obs.}
\usepackage{tikz}
\usetikzlibrary{fit}
\usetikzlibrary{math}
\usetikzlibrary{calc}
\usetikzlibrary{positioning}
\usetikzlibrary{decorations.pathmorphing}
\usetikzlibrary{decorations.pathreplacing}
\usetikzlibrary{backgrounds}
\usetikzlibrary{patterns}
\usetikzlibrary{matrix}
\usetikzlibrary{arrows.meta}
\usetikzlibrary{shapes.callouts}

\makeatletter
\NewDocumentCommand {\getnodedimen} {O{\nodewidth} O{\nodeheight} m} {
  \begin{pgfinterruptboundingbox}
    \begin{scope}[local bounding box=bb@temp]
      \node[inner sep=0pt, fit=(#3)] {};
    \end{scope}
    \path ($(bb@temp.north east)-(bb@temp.south west)$);
  \end{pgfinterruptboundingbox}
  \pgfgetlastxy{#1}{#2}
}
\makeatother

\AtBeginDocument{\sbox0{$x$}}%
\tikzset{baseshift/.style={yshift=-\the\dimexpr\fontdimen22\textfont2}}
\tikzset{
  MAT/.style={ampersand replacement=\&},
  XS/.style={scale=0.6},
  SM/.style={scale=0.75},
  MD/.style={scale=1.25},
  LG/.style={scale=1.5},
  MEMBER/.style={draw, circle, very thick, minimum size=4mm, font=\normalsize},
  ADV/.style={draw=red!60, text=red},
  ASLEEP/.style={opacity=0.55, densely dotted},
  BOOTING/.style={draw=black!70, dashed},
  NONMEMBER/.style={draw=red!70!gray, text=red!70!gray, dotted, minimum size=4mm, font=\normalsize},
  VOTESET/.style={draw, rounded corners=3pt, inner xsep=1.5mm, inner ysep=0.75mm, thick},
  EQUIVSET/.style={VOTESET, dashed, draw=red!70, thin},
  GOODBOX/.style={draw=none, fill=green!80!black!50, opacity=0.2, rounded corners=5pt, inner xsep=3mm, inner ysep=1mm},
  BADBOX/.style={draw=none, fill=red!75, opacity=0.2, rounded corners=5pt, inner xsep=3mm, inner ysep=1mm},
  RANGE/.style={-latex, draw=red!60, opacity=0.75, line width=0.25mm},
  SIM/.style={-{Computer Modern Rightarrow[width=1.5mm]}, draw=red!60, opacity=0.75, line width=0.25mm, decorate, decoration={snake, amplitude=0.25mm, segment length=3mm}},
  MSGSENT/.style={rectangle callout, rounded corners=2pt, callout pointer width=1.5mm, inner xsep=1mm, inner ysep=1mm, font=\small, fill=white, draw=gray!60},
  MSGRECV/.style={cloud callout, cloud puffs=8, cloud puff arc=110, callout pointer segments=2, inner xsep=-0.75mm, inner ysep=-0.5mm, font=\small, fill=white, draw=gray!60},
  LABELCALLOUT/.style={rectangle callout, rounded corners=1pt, callout pointer width=0.75mm, inner xsep=1.25mm, inner ysep=0.75mm, font=\normalsize, fill=white, draw=black, thick},
  CALLOUT/.style={draw, thick, rounded corners, font=\scriptsize, align=center},
}
\subimport{./lib/}{defer.tex}
\AddToHook{env/lemma/begin}{\crefalias{theorem}{lemma}}
\AddToHook{env/corollary/begin}{\crefalias{theorem}{corollary}}
\AddToHook{env/proposition/begin}{\crefalias{theorem}{proposition}}
\AddToHook{env/exercise/begin}{\crefalias{theorem}{exercise}}
\AddToHook{env/definition/begin}{\crefalias{theorem}{definition}}
\AddToHook{env/conjecture/begin}{\crefalias{theorem}{conjecture}}
\AddToHook{env/observation/begin}{\crefalias{theorem}{observation}}
\AddToHook{env/example/begin}{\crefalias{theorem}{example}}
\AddToHook{env/note/begin}{\crefalias{theorem}{note}}
\AddToHook{env/remark/begin}{\crefalias{theorem}{remark}}
\AddToHook{env/claim/begin}{\crefalias{theorem}{claim}}
\usepackage{tcolorbox} %
\crefalias{ALG@line}{line}

\crefname{figure}{Fig.}{Figs.}
\Crefname{figure}{Fig.}{Figs.}

\crefname{table}{Tab.}{Tabs.}
\Crefname{table}{Tab.}{Tabs.}

\crefname{section}{Sec.}{Secs.}
\Crefname{section}{Sec.}{Secs.}
\crefname{subsection}{Sec.}{Secs.}
\Crefname{subsection}{Sec.}{Secs.}
\crefname{subsubsection}{Sec.}{Secs.}
\Crefname{subsubsection}{Sec.}{Secs.}
\crefname{subsubsubsection}{Sec.}{Secs.}
\Crefname{subsubsubsection}{Sec.}{Secs.}
\crefname{appendix}{App.}{Apps.}
\Crefname{appendix}{App.}{Apps.}
\crefname{subappendix}{App.}{Apps.}
\Crefname{subappendix}{App.}{Apps.}
\crefname{subsubappendix}{App.}{Apps.}
\Crefname{subsubappendix}{App.}{Apps.}
\crefname{subsubsubappendix}{App.}{Apps.}
\Crefname{subsubsubappendix}{App.}{Apps.}

\crefname{algorithm}{Alg.}{Algs.}
\Crefname{algorithm}{Alg.}{Algs.}
\crefname{line}{ln.}{lns.}
\Crefname{line}{ln.}{lns.}

\crefname{proposition}{Prop.}{Props.}
\Crefname{proposition}{Prop.}{Props.}
\crefname{lemma}{Lem.}{Lems.}
\Crefname{lemma}{Lem.}{Lems.}
\crefname{theorem}{Thm.}{Thms.}
\Crefname{theorem}{Thm.}{Thms.}
\crefname{corollary}{Cor.}{Cors.}
\Crefname{corollary}{Cor.}{Cors.}
\crefname{definition}{Def.}{Defs.}
\Crefname{definition}{Def.}{Defs.}
\crefname{conjecture}{Conj.}{Conjs.}
\Crefname{conjecture}{Conj.}{Conjs.}
\crefname{remark}{Rem.}{Rems.}
\Crefname{remark}{Rem.}{Rems.}

\usepackage{subcaption}
\usepackage{fontawesome5}
\usepackage{pifont}
\usepackage{nicefrac}

\usepackage{xspace}

\MakeRobust{\Call}

\newcommand{\code}[1]{\Call{#1}{}}
\newcommand{\authcite}[1]{%
  \ifcsname authname@#1\endcsname
    \csname authname@#1\endcsname~\cite{#1}%
  \else
    \GenericError{}{authcite: no generated name for key '#1' (run make authnames)}{}{}%
  \fi
}

\expandafter\def\csname authname@ABY22\endcsname{Abraham et al.}%
\expandafter\def\csname authname@ADKS22\endcsname{Abraham et al.}%
\expandafter\def\csname authname@AKSW22\endcsname{Attiya et al.}%
\expandafter\def\csname authname@AMN+20\endcsname{Abraham et al.}%
\expandafter\def\csname authname@AND97\endcsname{Anderson}%
\expandafter\def\csname authname@BBBF18\endcsname{Boneh et al.}%
\expandafter\def\csname authname@BDNP14\endcsname{Bonnet et al.}%
\expandafter\def\csname authname@BFFPT25\endcsname{Bonomi et al.}%
\expandafter\def\csname authname@BFT23\endcsname{Bonomi et al.}%
\expandafter\def\csname authname@BGH95\endcsname{Buhrman et al.}%
\expandafter\def\csname authname@BGK+18\endcsname{Badertscher et al.}%
\expandafter\def\csname authname@BGP89\endcsname{Berman et al.}%
\expandafter\def\csname authname@BHK+20\endcsname{Buterin et al.}%
\expandafter\def\csname authname@BM99\endcsname{Bellare and Miner}%
\expandafter\def\csname authname@BSA14\endcsname{Bessani et al.}%
\expandafter\def\csname authname@BSIW12\endcsname{Banu et al.}%
\expandafter\def\csname authname@CASV95\endcsname{Cristian et al.}%
\expandafter\def\csname authname@CGG+21\endcsname{Choudhuri et al.}%
\expandafter\def\csname authname@CL99\endcsname{Castro and Liskov}%
\expandafter\def\csname authname@CM19\endcsname{Chen and Micali}%
\expandafter\def\csname authname@CMSK07\endcsname{Chun et al.}%
\expandafter\def\csname authname@DDG+23\endcsname{David et al.}%
\expandafter\def\csname authname@DGKR17\endcsname{David et al.}%
\expandafter\def\csname authname@DGNW19\endcsname{Drijvers et al.}%
\expandafter\def\csname authname@DKT21\endcsname{Deb et al.}%
\expandafter\def\csname authname@DNTT23\endcsname{D'Amato et al.}%
\expandafter\def\csname authname@DPP19\endcsname{Deirmentzoglou et al.}%
\expandafter\def\csname authname@DPS16\endcsname{Daian et al.}%
\expandafter\def\csname authname@DSTZ24\endcsname{D'Amato et al.}%
\expandafter\def\csname authname@DZ22\endcsname{Duan and Zhang}%
\expandafter\def\csname authname@Dol82\endcsname{Dolev}%
\expandafter\def\csname authname@ENP25\endcsname{Efron et al.}%
\expandafter\def\csname authname@ENRT26\endcsname{Efron et al.}%
\expandafter\def\csname authname@ET25\endcsname{Efron and Tas}%
\expandafter\def\csname authname@FLPA26\endcsname{Farahbakhsh et al.}%
\expandafter\def\csname authname@G94\endcsname{Garay}%
\expandafter\def\csname authname@GDCL22\endcsname{Gao et al.}%
\expandafter\def\csname authname@GHK+21\endcsname{Gentry et al.}%
\expandafter\def\csname authname@GHM+17a\endcsname{Gilad et al.}%
\expandafter\def\csname authname@GKL15\endcsname{Garay et al.}%
\expandafter\def\csname authname@GL02\endcsname{Gilbert and Lynch}%
\expandafter\def\csname authname@GL23\endcsname{Gafni and Losa}%
\expandafter\def\csname authname@GP92\endcsname{Garay and Perry}%
\expandafter\def\csname authname@HM90\endcsname{Halpern and Moses}%
\expandafter\def\csname authname@IR01\endcsname{Itkis and Reyzin}%
\expandafter\def\csname authname@JM14\endcsname{Jehl and Meling}%
\expandafter\def\csname authname@KRDO17\endcsname{Kiayias et al.}%
\expandafter\def\csname authname@LAB+06\endcsname{Lorch et al.}%
\expandafter\def\csname authname@LMZ09\endcsname{Lamport et al.}%
\expandafter\def\csname authname@LMZ10\endcsname{Lamport et al.}%
\expandafter\def\csname authname@LR23\endcsname{Lewis-Pye and Roughgarden}%
\expandafter\def\csname authname@LR23b\endcsname{Lewis-Pye and Roughgarden}%
\expandafter\def\csname authname@LR24\endcsname{Budish et al.}%
\expandafter\def\csname authname@LSP82\endcsname{Lamport et al.}%
\expandafter\def\csname authname@LSS24\endcsname{Loss et al.}%
\expandafter\def\csname authname@MMR23\endcsname{Malkhi et al.}%
\expandafter\def\csname authname@MNR19\endcsname{Malkhi et al.}%
\expandafter\def\csname authname@MR22\endcsname{Momose and Ren}%
\expandafter\def\csname authname@MRV99\endcsname{Micali et al.}%
\expandafter\def\csname authname@NNR26\endcsname{Nieto et al.}%
\expandafter\def\csname authname@NTT20a\endcsname{Neu et al.}%
\expandafter\def\csname authname@Nak09\endcsname{Nakamoto}%
\expandafter\def\csname authname@OO14\endcsname{Ongaro and Ousterhout}%
\expandafter\def\csname authname@OY91\endcsname{Ostrovsky and Yung}%
\expandafter\def\csname authname@PP25\endcsname{Pan and Potop-Butucaru}%
\expandafter\def\csname authname@PS17\endcsname{Pass and Shi}%
\expandafter\def\csname authname@R85\endcsname{Reischuk}%
\expandafter\def\csname authname@SRMJ12\endcsname{Shraer et al.}%
\expandafter\def\csname authname@SYKM13\endcsname{Sasaki et al.}%
\expandafter\def\csname authname@TSZ23\endcsname{Tzinas et al.}%
\expandafter\def\csname authname@TY26\endcsname{Tang and Ye}%
\expandafter\def\csname authname@WTW+24\endcsname{Wei et al.}%

\newcolumntype{Y}{>{\centering\arraybackslash}X}

\newcommand{\ba}{\mathsf{BA}}
\newcommand{\ga}{\mathsf{GA}}

\newcommand{\gw}{\mathsf{GW}}

\newenvironment{protocolsteps}{%
  \begin{list}{}{%
    \setlength{\leftmargin}{1.75em}%
    \setlength{\labelsep}{0.5em}%
    \setlength{\labelwidth}{\dimexpr\leftmargin-\labelsep\relax}%
  }%
}{\end{list}}

\newcommand{\nodes}{\mathcal{P}}

\newcommand{\hon}{\mathcal{H}}
\newcommand{\hons}[2]{\hon_{\awake}({#1}, {#2})}

\newcommand{\adv}{\mathcal{A}}
\newcommand{\advs}[2]{\adv_{\awake}({#1}, {#2})}

\newcommand{\awake}{\mathcal{W}}

\newcommand{\ins}{\mathcal{I}}
\newcommand{\valspace}{\mathcal{V}}

\newcommand{\sbl}{T_{\mathrm{s}}}

\newcommand{\rec}{T_{\mathrm{r}}}
\newcommand{\gmax}{g_{\mathrm{max}}}
\newcommand{\stable}[1][$\start$]{{#1}-stable\xspace}
\newcommand{\vstable}[1][$\start$]{{#1}-very-stable\xspace}

\newcommand{\negl}[1]{\Call{negl}{#1}}

\newcommand{\oracle}{\mathcal{O}}

\newcommand{\osig}{\oracle_{\code{S}}}

\newcommand{\ovrf}{\oracle_{\code{V}}}

\newcommand{\vrf}{\mathsf{VRF}}
\newcommand{\vrfo}[2][\sk]{\vrf_{#1}(#2)}

\newcommand{\sk}{\code{sk}}

\newcommand{\signed}[2][]{\left\langle #2 \right\rangle_{#1}}

\newcommand{\nonce}{r}
\newcommand{\nonces}{R}

\newcommand{\equivs}{E}
\newcommand{\gawake}[1][]{W_{#1}}
\newcommand{\noinput}{\emptyset}
\newcommand{\inputters}{I}

\newcommand{\editcolor}{red}
\newcommand{\edit}[1]{\textcolor{\editcolor}{#1}}
\newcommand{\medit}[1]{\mathcolor{\editcolor}{#1}}

\newcommand{\proposemsgtype}{\mathsf{PROPOSE}}

\newcommand{\signedproposemsg}[2][]{\signed[#1]{\proposemsgtype, #2}}

\newcommand{\inputmsgtype}{\mathsf{IN}}

\newcommand{\signedinputmsg}[2][]{\signed[#1]{\inputmsgtype, #2}}

\newcommand{\pingmsgtype}{\mathsf{PING}}

\newcommand{\signedpingmsg}[2][]{\signed[#1]{\pingmsgtype, #2}}
\newcommand{\ackmsgtype}{\mathsf{ACK}}

\newcommand{\signedackmsg}[2][]{\signed[#1]{\ackmsgtype, #2}}

\newcommand{\poly}{\ensuremath{\operatorname{poly}}}

\DeclareRobustCommand{\honpic}{\tikz[baseline=(current bounding box.base)]{\node[MEMBER, XS, yshift=3] {};}\xspace}

\DeclareRobustCommand{\advpic}{\tikz[baseline=(current bounding box.base)]{\node[MEMBER, XS, ADV, yshift=3] {};}\xspace}

\newdefergroup{ga2proofs}[notdeferred]
\newdefergroup{ga3proofs}[notdeferred]
\newdefergroup{baproofs}[notdeferred]

\begin{document}%
\maketitle%
\begin{abstract}
    Bitcoin's proof-of-work (PoW)-based protocol is remarkable for how little it asks of its participants.
    Not only can miners take breaks from work whenever they please, but it is almost unique in offering a path of contrition: corrupt miners can reclaim honest status simply by resuming mining on the longest chain.
    The protocol only requires that honest miners hold the majority of computational power at any given time.
    Analogous proof-of-stake (PoS) protocols, usually formalized via the sleepy model of~\authcite{PS17}, have fallen short of matching this robustness.
    In fact, sleepy consensus protocols in the plain PKI model \emph{must} heavily restrict fluctuations in adversarial participation over time.
    The recent work of~\authcite{ENP25} enables fully fluctuating participation in the sleepy model by introducing the external adversary model.
    Their protocol, however, relies on verifiable delay functions (VDFs), a strong cryptographic primitive that
    somewhat resembles PoW, by assuming that the adversary cannot compute \emph{sequential} work significantly faster than honest nodes.

    In this work, we design a sleepy consensus protocol for fully fluctuating participation with an external adversary under an honest majority, from minimal assumptions: a public key infrastructure (PKI) and a verifiable random function (VRF).
    In particular, we make no VDF or hardware assumptions.
    Our key technique is \emph{graded wakeness}, a novel primitive that allows nodes to form consistent opinions on which other nodes are awake.
    We further extend our protocol to handle \emph{uncorruption}, where corrupt nodes return to honesty.
    This extension requires only a mild additional assumption on the unpredictability of VRF outputs for liveness.

\end{abstract}%

\section{Introduction}
\label{intro}

The \emph{Byzantine agreement} problem was first described by~\authcite{LSP82} as a group of generals who must reach consensus on a battle plan, despite the presence of traitors among them.
This abstraction laid the foundation for fault-tolerant distributed algorithms for decades.
The classic setting appeals to our intuition via an analogy to a standing army, in which participation is mandatory.

Modern decentralized systems, such as permissionless blockchains, defy this rigid structure.
Bitcoin's proof-of-work (PoW)~\cite{Nak09}, for instance, is indifferent to how many miners are participating.
Here, miners more closely resemble a group of volunteers, where the system proceeds with whoever happens to be participating at any given time.
Accordingly, Bitcoin has accommodated growth from just a handful of miners to many thousands, as well as occasional notable declines due to various factors such as outages, government regulations, block reward ``halving,'' or price crashes.
The same flexibility naturally applies to the adversary: sustaining adversarial control is costly, so corrupt miners, too, may take a break.
In summary, honest and adversarial nodes alike may fluctuate in their participation.
We call this model \emph{fully fluctuating participation}.\footnote{Like most prior works in the literature, we assume a static membership set of eligible participants throughout and treat \emph{reconfiguration} (i.e., membership changes over time) as an orthogonal problem.}

The recent work of~\authcite{ENP25} introduces a model of fully fluctuating participation and gives the first protocol provably secure in this model without resorting to PoW.
Their protocol, however, replaces PoW with verifiable delay functions (VDFs)~\cite{BBBF18}, which are proofs of sequential work assumed to require approximately the same time to compute for any node, honest or adversarial.
Observe that a VDF is not only a cryptographic assumption, but also a strong and debatable hardware assumption: that the adversary cannot acquire or build any alternative hardware platform to speed up the VDF computation.

This leaves a fundamental question open: \emph{can consensus under fully fluctuating participation be achieved without any hardware assumptions such as PoW or VDFs?}

In this work, we answer this open question affirmatively and present a consensus protocol secure under fully fluctuating participation. %
Our protocol assumes only a standard public key infrastructure (PKI) setup and a well-established cryptographic primitive called a verifiable random function (VRF)~\cite{MRV99}.

We also extend the fully fluctuating participation model to allow \emph{uncorruption}, i.e., a corrupt node can return to honesty and restore the corruption budget.
As with participation, Bitcoin already exhibits this.
Once the adversary can no longer sustain control, the miner may simply resume work on the canonical chain as honest miners do.
Our protocol cleanly extends to this more general setting with only minor changes and a mild additional assumption on the unpredictability of VRF outputs for liveness.

\subsection{Technical Overview}
\label{intro:technical}
In the conventional adversarial model, a central adversary seizes the entire state of corrupt nodes, including their cryptographic secrets (private signing keys).
This model immediately rules out the possibility of corrupt nodes taking a break or returning to honesty, since the adversary can indefinitely impersonate them using their signing keys.
Prior works under this model must then restrict honest nodes not on break to always outnumber all nodes \emph{ever} corrupted since the start of execution.

To enable corrupt nodes to take breaks,~\authcite{ENP25} introduce the \emph{external adversary model}, which treats the node's cryptographic secrets as external to the protocol, just like mining hardware in Bitcoin.
This model is well justified because, in practice, corrupt nodes may perform actions on the adversary's behalf but are unlikely to hand over their secret keys for fear of, e.g., being slashed or losing their cryptocurrency assets.

But making cryptographic secrets external is only the start.
In Bitcoin, participation requires continuous use of hardware.
Without binding a hardware resource as PoW does, participation with a PKI is \emph{costless} and instantaneous with no restrictions on signing.
Consequently, even though the external adversary cannot make corrupt nodes sign messages once they are on break, it can ask corrupt nodes to ``pre-sign'' messages for the future before they take a break.
Honest nodes then cannot distinguish messages genuinely sent in the ongoing round from pre-signed ones.\footnote{Additionally, corrupt nodes may ``post-sign'' messages for past rounds after returning from a break.
Prior works have addressed this issue by ignoring past messages and relying only on messages from the latest round~\cite{MMR23,DNTT23}.
We inherit this approach.}

To distinguish \emph{present} messages from pre-signed ones, consider an ideal randomness beacon that emits a random value at each round, and a protocol in which each node signs the ongoing round's beacon value alongside its message, thereby binding each message to the round.
No node can predict future beacon values, so a corrupt node cannot pre-sign messages for rounds after it takes a break.
Thus, honest nodes can ignore messages that do not carry a signature on the current beacon value.
Unfortunately, realizing such an ideal beacon in fully fluctuating participation may be no easier than solving consensus itself, since nodes must agree on the randomness.
Instead,~\authcite{ENP25} implement a randomness beacon using VDFs.

\begin{figure}[t]
    \centering
    \begin{tikzpicture}[
            scale=0.92,
            every node/.style={transform shape},
            msgarrow/.style={-latex, thick},
            fwdarrow/.style={-latex, thick, densely dashed, draw=gray!55},
        ]
        \def\xR{3.5}
        \def\yT{2.5}
        \def\xM{1.75}
        \def\diagOffX{1.6mm}
        \def\diagOffY{1.15mm}
        \def\diagLabelV{0mm}
        \def\diagLabelH{-1mm}
        \def\horizLabelGap{0.5mm}
        \def\nodeLabelGap{0.5mm}
        \def\nonceTextGap{4mm}
        \def\horizEndGap{1.8mm}
        \def\diagEndGap{1.8mm}
        \newcommand{\stepmark}[1]{{\Large\ding{#1}}}

        \node[MEMBER, MD] (hi) at (0,   0)   {};
        \node[MEMBER, MD] (hj) at (\xR, 0)   {};
        \node[MEMBER, MD, ADV] (a)  at (\xM, \yT) {};

        \node[below=\nodeLabelGap of hi, font=\normalsize] {$p_i$};
        \node[below=\nodeLabelGap of hj, font=\normalsize] {$p_j$};
        \node[above=\nodeLabelGap of a,  font=\normalsize, text=red!60] {$p_k$};

        \node[font=\normalsize, text=black!55, align=center,
            below=\nonceTextGap of hi]
        {picks $\nonce_i \xleftarrow{\$} \{0,1\}^\lambda$};

        \draw[msgarrow, transform canvas={shift={(-\diagOffX, \diagOffY)}}, shorten <=\diagEndGap, shorten >=\diagEndGap] (hi) --
        node[pos=0.5, above left=\diagLabelV and \diagLabelH, font=\normalsize]
        {\stepmark{182}~$\nonce_i$}
        (a);

        \draw[msgarrow, transform canvas={shift={(\diagOffX, -\diagOffY)}}, shorten <=\diagEndGap, shorten >=\diagEndGap] (a) --
        node[pos=0.5, below right=\diagLabelV and \diagLabelH, font=\normalsize]
        {\stepmark{183}~$\signed[k]{\nonce_i}$}
        (hi);

        \draw[fwdarrow, shorten <=\horizEndGap, shorten >=\horizEndGap] (hi.east) --
        node[below=\horizLabelGap, pos=0.5, font=\normalsize]
        {\stepmark{184}~$\signed[k]{\nonce_i}$}
        (hj.west);

        \node[LABELCALLOUT,
            callout absolute pointer={($(hj.north) + (0, 0.5mm)$)},
            draw=gray!60, font=\normalsize, fill=white,
            inner xsep=2mm, inner ysep=1mm]
        at ($(hj) + (0, 1.1cm)$)
        {$\nonce_i$ valid?};

    \end{tikzpicture}
    \caption[Challenge-response]{%
        Challenge-response between honest challenger~$p_i$, adversarial prover~$p_k$, and honest observer~$p_j$ (\honpic~honest; \advpic~adversarial).
        \ding{182}~$p_i$ samples a fresh random nonce~$\nonce_i \in \{0,1\}^\lambda$ (for security parameter~$\lambda$) and sends it to~$p_k$.
        \ding{183}~$p_k$ responds with a signature~$\signed[k]{\nonce_i}$; since $\nonce_i$ was unpredictable to $p_k$ in advance, a valid response proves that $p_k$ was participating at the time of the challenge.
        \ding{184}~$p_i$ may forward the proof~$\signed[k]{\nonce_i}$ to~$p_j$, but the proof is not \emph{transferable}: $p_j$ cannot verify that $\nonce_i$ was sampled freshly and at random by $p_i$, and so cannot conclude that $p_k$ was participating.%
    }
    \label{fig:visual-gw1}
\end{figure}

Our approach replaces the VDF with a simple protocol that achieves the same unpredictability \emph{interactively}, where each node samples its own fresh randomness instead of consulting a beacon.
A node $p_i$ samples a fresh random nonce and sends it to a node $p_k$ as a challenge; if $p_k$ returns a signature on the nonce (\cref{fig:visual-gw1}), $p_i$ concludes $p_k$ is currently participating since $p_k$ could not have pre-signed the response before it received the nonce.
However, this verification is not \emph{transferable}: $p_i$ has no way to prove $p_k$'s participation to a third node $p_j$.
So even when $p_i$ accepts $p_k$'s messages, $p_j$ may rightfully ignore them.
By contrast, the freshness of an ideal randomness beacon or VDF output is verifiable by any node.

\FloatBarrier
\begin{figure*}[t]
    \centering
    \captionsetup[subfigure]{skip=2pt}
    \def\xR{3.5}
    \def\yT{2.5}
    \def\xM{1.75}
    \def\diagOffX{1.6mm}
    \def\diagOffY{1.15mm}
    \def\horizLaneOff{1.6mm}
    \def\diagLabelV{-2mm}
    \def\diagLabelH{0mm}
    \def\horizLabelGap{0mm}
    \def\horizLowerLabelGap{0.5mm}
    \def\nodeLabelGap{0.5mm}
    \def\nonceTextGap{4mm}
    \def\horizEndGap{1.8mm}
    \def\diagEndGap{1.8mm}
    \newcommand{\stepmark}[1]{{\Large\ding{#1}}}
    \begin{subfigure}[b]{0.48\textwidth}
        \centering
        \begin{tikzpicture}[
                scale=0.86,
                every node/.style={transform shape},
                msgarrow/.style={-latex, thick},
                fwdarrow/.style={-latex, thick, densely dashed, draw=gray!55},
            ]
            \node[MEMBER, MD] (l-p2) at (0,   0)   {};
            \node[MEMBER, MD] (l-p3) at (\xR, 0)   {};
            \node[MEMBER, MD, ADV] (l-p1) at (\xM, \yT) {};
            \node[below=\nodeLabelGap of l-p3, font=\normalsize] {$p_j$};
            \node[below=\nodeLabelGap of l-p2, font=\normalsize] {$p_i$};
            \node[above=\nodeLabelGap of l-p1, font=\normalsize, text=red!60] {$p_k$};
            \node[font=\normalsize, text=black!55, align=center,
                below=\nonceTextGap of l-p2]
            {picks $r_i \xleftarrow{\$} \{0,1\}^\lambda$};
            \node[font=\normalsize, text=black!55, align=center,
                below=\nonceTextGap of l-p3]
            {picks $r_j \xleftarrow{\$} \{0,1\}^\lambda$};
            \draw[msgarrow, transform canvas={shift={(0, \horizLaneOff)}}, shorten <=\horizEndGap, shorten >=\horizEndGap] (l-p3.west) --
            node[above=\horizLabelGap, pos=0.5, font=\normalsize]
            {\stepmark{182}~$r_j$}
            (l-p2.east);
            \draw[msgarrow, transform canvas={shift={(-\diagOffX, \diagOffY)}}, shorten <=\diagEndGap, shorten >=\diagEndGap] (l-p2) --
            node[pos=0.53, above left=\diagLabelV and \diagLabelH, font=\normalsize]
            {\stepmark{183}~$r_i, r_j$}
            (l-p1);
            \draw[msgarrow, transform canvas={shift={(\diagOffX, -\diagOffY)}}, shorten <=\diagEndGap, shorten >=\diagEndGap] (l-p1) --
            node[pos=0.47, below right=\diagLabelV and \diagLabelH, font=\normalsize]
            {\stepmark{184}~$\signed[k]{r_i}$}
            (l-p2);
            \node[below=10mm of l-p2, font=\normalsize]
            {$W_0=\{p_k\}$};
            \node[below=10mm of l-p3, font=\normalsize]
            {$W_0 = \emptyset$};
        \end{tikzpicture}
        \caption{}
        \label{fig:visual-gw2a}
    \end{subfigure}%
    \hfill
    \begin{subfigure}[b]{0.48\textwidth}
        \centering
        \begin{tikzpicture}[
                scale=0.86,
                every node/.style={transform shape},
                msgarrow/.style={-latex, thick},
                fwdarrow/.style={-latex, thick, densely dashed, draw=gray!55},
            ]
            \node[MEMBER, MD] (r-p2) at (0,   0)   {};
            \node[MEMBER, MD] (r-p3) at (\xR, 0)   {};
            \node[MEMBER, MD, ADV] (r-p1) at (\xM, \yT) {};
            \node[below=\nodeLabelGap of r-p2, font=\normalsize] {$p_i$};
            \node[below=\nodeLabelGap of r-p3, font=\normalsize] {$p_j$};
            \node[above=\nodeLabelGap of r-p1, font=\normalsize, text=red!60] {$p_k$};
            \node[font=\normalsize, text=black!55, align=center,
                below=\nonceTextGap of r-p2]
            {picks $r_i \xleftarrow{\$} \{0,1\}^\lambda$};
            \node[font=\normalsize, text=black!55, align=center,
                below=\nonceTextGap of r-p3]
            {picks $r_j \xleftarrow{\$} \{0,1\}^\lambda$};
            \draw[msgarrow, transform canvas={shift={(0, \horizLaneOff)}}, shorten <=\horizEndGap, shorten >=\horizEndGap] (r-p3.west) --
            node[above=\horizLabelGap, pos=0.5, font=\normalsize]
            {\stepmark{182}~$r_j$}
            (r-p2.east);
            \draw[msgarrow, transform canvas={shift={(-\diagOffX, \diagOffY)}}, shorten <=\diagEndGap, shorten >=\diagEndGap] (r-p2) --
            node[pos=0.53, above left=\diagLabelV and \diagLabelH, font=\normalsize]
            {\stepmark{183}~$r_i, r_j$}
            (r-p1);
            \draw[msgarrow, transform canvas={shift={(\diagOffX, -\diagOffY)}}, shorten <=\diagEndGap, shorten >=\diagEndGap] (r-p1) --
            node[pos=0.47, below right=\diagLabelV and \diagLabelH, font=\normalsize]
            {\stepmark{184}~$\signed[k]{r_i}, \mathbf{\signed[k]{r_j}}$}
            (r-p2);
            \draw[fwdarrow, transform canvas={shift={(0, -\horizLaneOff)}}, shorten <=\horizEndGap, shorten >=\horizEndGap] (r-p2.east) --
            node[below=\horizLowerLabelGap, pos=0.5, font=\normalsize]
            {\stepmark{185}~$\signed[k]{r_j}$}
            (r-p3.west);
            \node[below=10mm of r-p2, font=\normalsize]
            {$W_1=\{p_k\}$};
            \node[below=10mm of r-p3, font=\normalsize]
            {$W_0=\{p_k\}$};
        \end{tikzpicture}
        \caption{}
        \label{fig:visual-gw2b}
    \end{subfigure}
    \caption[Two-grade challenge-response]{%
        Two scenarios of two-grade challenge-response among honest nodes~$p_i, p_j$ and adversarial node~$p_k$ (
        \honpic~honest; \advpic~adversarial).
        \ding{182}~$p_j$ samples~$r_j \in \{0,1\}^\lambda$ and sends it to~$p_i$.
        \ding{183}~$p_i$ sends~$r_i, r_j$ to~$p_k$.
        \ding{184}~In (\subref{fig:visual-gw2a}), $p_k$ signs only~$r_i$; in (\subref{fig:visual-gw2b}), $p_k$ signs both~$r_i$ and~$r_j$.
        \ding{185}~$p_i$ forwards what $p_k$ signed to~$p_j$, yielding different grade assignments in the two scenarios.
    }
    \label{fig:visual-gw2}
\end{figure*}

To add transferability to the verification of $p_k$'s participation, we additionally have node $p_j$ send a random nonce to $p_i$, which $p_i$ sends to $p_k$ along with its own.
In one scenario (\cref{fig:visual-gw2a}), $p_k$ only signs $p_i$'s nonce, in which case we are in the prior example.
However, $p_k$ may now also sign $p_j$'s nonce (\cref{fig:visual-gw2b}), and $p_i$ can meaningfully forward the signature to $p_j$.
In the first scenario, only $p_i$ knows that $p_k$ is participating, so we say $p_i$ considers $p_k$ participating with \emph{grade $0$}.
In the second, $p_i$ additionally holds evidence that it can \emph{transfer} to $p_j$, so $p_k$ has \emph{grade $1$} for $p_i$ and grade $0$ for $p_j$.

These grades resemble the hierarchy of distributed knowledge of~\authcite{HM90}: at grade $0$, a node $p_i$ knows $p_k$ is participating; at grade $1$, $p_i$ knows that $p_j$ knows it too, \emph{and}, therefore, that $p_j$ will accept any messages $p_i$ forwards from $p_k$.
However, even if $p_j$ accepts a forwarded message from $p_k$, it cannot distinguish whether or not $p_i$ forwarded the message to all honest nodes.
At grade $2$, then, $p_i$ knows that $p_j$ knows that all honest nodes will accept the message, allowing $p_j$ to forward the message itself.
With every additional grade, the message can be forwarded one further hop while remaining acceptable.
We call this primitive \emph{graded wakeness} (\cref{immobile:gw}).
It underpins our \emph{$2$-grade graded agreement} protocol (\cref{immobile:ga}, adapted from~\authcite{DSTZ24}), where each node's input must be forwarded for two rounds, requiring graded wakeness with three grades.
Chaining two instances of $2$-grade graded agreement yields $3$-grade graded agreement (\cref{immobile:ga3}), which we use to obtain consensus secure under fully fluctuating participation (\cref{immobile:ba,mobile}).

Note that our focus in this work is to establish feasibility: that PoW and VDFs are not required for consensus with fully fluctuating participation.
We leave improvements relevant to practical deployments, such as latency and message complexity, for future work.

\section{Model Preliminaries}
\label{model}

\subparagraph*{Nodes}
We now formalize the fully fluctuating participation model, which builds upon the model of~\authcite{ENP25}.
We operate in the permissioned setting with a predefined set $\nodes$ of $n$ nodes.
We assume a public key infrastructure (PKI), so each node's public key is known to all others.
The adversary is a probabilistic polynomial-time (PPT) algorithm that can exert control over three axes---communication, sleepiness, and corruption---subject to various constraints introduced below.

\subparagraph*{Communication}
Time proceeds in lock-step \emph{rounds}.
We assume a \emph{synchronous} network such that honest nodes at round $t$ receive all messages sent to them from any prior round $t' < t$.
(Note that consensus with sleepy nodes is \emph{impossible} in partial synchrony~\cite{GL02,NTT20a,LR23}.)
For simplicity, one can assume messages sent to asleep nodes are buffered until they wake up.
In practical systems, nodes do not need to buffer messages while asleep because they can retrieve relevant messages on demand from a peer-to-peer network after they wake up.
The adversary sees every message sent over the network and controls the timing of message delivery, subject to the synchrony constraint.

\subparagraph*{Sleepiness}
The adversary can put nodes to \emph{sleep} (i.e., make them take a break) in a \emph{mildly adaptive} way, where it may only put nodes to sleep at the start of a round $t$. %
This reflects the model's intention that nodes consciously go to sleep at appropriate round boundaries.
Sleeping mid-round without executing all protocol steps is corrupt behavior.
\emph{Asleep} nodes do not receive or send messages and do not execute the protocol.
We denote the set of awake nodes in round $t$ as $\awake_t \subseteq \nodes$.

\subparagraph*{Corruption}
The adversary can adaptively corrupt nodes during a round $t$.
We denote the set of nodes corrupt at any time in the round as $\adv_t$.
Nodes not corrupted by the adversary (called \emph{honest}) follow the protocol when they are awake.
We denote the set of nodes honest for the entire round as $\hon_t = \nodes \setminus \adv_t$.
In \cref{immobile}, we consider an \emph{immobile} adversary where corrupt nodes remain corrupt (i.e., $\adv_t \subseteq \adv_{t + 1}$ for all $t$).
In \cref{mobile}, we consider a \emph{mobile} adversary that can uncorrupt previously corrupt nodes.

\emph{Uncorruption} resets the node's protocol state to its initial state; the node is then asleep and upon waking behaves as if it woke up for the first time, receiving all messages sent to it in prior rounds.
This definition is made quite natural by the external adversary model (discussed below) since signing and VRF keys are never handed to the adversary.
Furthermore, the fully fluctuating sleepy model naturally resolves the question of post-uncorruption state since it already models a node waking up for the first time.
Resetting the uncorrupted node's state ensures the node completely rids itself of the adversary's presence, whether that be from malware, bribery, or software bugs.

\subparagraph*{Schedules}
A \emph{schedule} is defined as an infinite sequence of rounds of the adversary's choice of awake and corrupt nodes at each round.
For rounds $0 \leq t_1 \leq t_2$ in a schedule, we define the following sets of nodes across the interval:
\begin{itemize}
    \item $\hons{t_1}{t_2} = \bigcap_{t_1 \leq t \leq t_2} \hon_{t} \cap \awake_{t}$ --- nodes honest and awake at \emph{every} round in $[t_1, t_2]$;
    \item $\advs{t_1}{t_2} = \bigcup_{t_1 \leq t \leq t_2} \adv_{t} \cap \awake_{t}$ --- nodes corrupt and awake at \emph{some} round in $[t_1, t_2]$.
\end{itemize}
Note that $\hons{t_1}{t_2}$ defines honest nodes that do \emph{not} fluctuate in their participation over the interval.

We distinguish schedules of the adversary as \emph{admissible} in the $(\rec, \sbl, \rho)$-fully fluctuating sleepy model if for every round $t \geq 0$,
\[
    |\hons{t - \rec}{t + \sbl}| > \rho \cdot |\advs{t}{t + \sbl}|.
\]
For $t < 0$, let $\awake_t = \awake_0$ and $\hon_t = \hon_0$.
The two time parameters $\rec$ and $\sbl$ are non-negative integers that adjust the interval sizes of the honest and corrupt sets; $\rho$ is the minimum non-negative ratio allowed between these sets.
The parameter $\rec$ captures how many rounds an honest node must remain awake to \emph{recover} from being asleep.
The \emph{stable} period parameter $\sbl$ specifies how long a node must remain honest and awake to meaningfully participate and count toward the honest set.
A corrupt node counts toward the corrupt set if it is awake at any point during the stable period since it may pre-sign and post-sign messages to impersonate a stable node.
In \cref{def:stable-nodes}, we define two named honest sets that will be useful.
\begin{definition}[Stable Nodes]
    \label{def:stable-nodes}
    For a round $t$, an honest node $p$ is said to be \textbf{\stable[$t$]} if $p \in \hons{t}{t + \sbl}$, and \textbf{\vstable[$t$]} if $p \in \hons{t - \rec}{t + \sbl}$.
\end{definition}

\subparagraph*{Cryptographic Primitives and External Adversary}

Let $\lambda$ be a security parameter.
We make use of two cryptographic primitives: signatures and verifiable random functions.
We model them in an idealized fashion using oracles $\osig$ and $\ovrf$ that are defined as follows:
\begin{enumerate}
    \item Oracle $\osig$ accepts queries of the form $(p, m)$ from an \emph{awake} node $p$ with a message $m$.
          The oracle responds with signature $\signed[p]{m}$ in the same round as the query, where $\signed[p]{m}$ is shorthand for a signed message and is assumed to carry node $p$'s identity.
    \item Oracle $\ovrf$ accepts queries of the form $(p, m)$ from an \emph{awake} node $p$ and responds with $(\vrfo{m}, \pi)$.
          Here, $\sk$ is the secret key of node $p$ that is only known by $\ovrf$.
\end{enumerate}

In the external adversary model, nodes are not given secret keys to the cryptographic primitives used.
Instead, each node $p$ (honest or corrupt) can only issue queries to the oracles of the form $(p, \cdot)$ when awake.
Under the PKI, any node may verify signatures and VRF outputs using the known verification algorithms.

Throughout, we will say a function $\negl{\lambda}$ is \emph{negligible} if for all $c > 0$, there exists a $\lambda_0$ such that $\negl{\lambda} < \frac{1}{\lambda^{c}}$ for all $\lambda > \lambda_0$.

\subparagraph*{Byzantine Agreement}

Let $\valspace$ be some value space.
We define consensus in this work as the Byzantine agreement problem in \cref{def:agreement}.

\begin{definition}[Byzantine Agreement]
    \label{def:agreement}
    Each node has an input value in $\valspace$ at round $0$ and outputs a value in $\valspace$.
    \begin{itemize}
        \item \emph{Safety:} If an honest node outputs $x$ and an honest node outputs $x'$, then $x = x'$.
        \item \emph{Validity:} If each honest node awake at round $0$ inputs $x$, then no honest node outputs $x' \neq x$.
        \item \emph{Liveness:} There exists a round $t_0$ such that for every $t \geq t_0$, every \vstable[$t$] node outputs.
    \end{itemize}
\end{definition}

\section{Consensus in the Fully Fluctuating Sleepy Model}

\label{immobile}

In this section, we give our base agreement protocol in the original external adversary model of~\authcite{ENP25} under fully fluctuating participation.
As such, the adversary in this section cannot uncorrupt nodes.
In \cref{mobile}, we extend the protocol to handle uncorruption.
The protocol is largely the same; uncorruption comes with different liveness guarantees, so we begin with a complete analysis without uncorruption that captures the core ideas.

Our Byzantine agreement ($\ba$) protocol proceeds in sequential \emph{views} with each view calling an instance of our $3$-grade graded agreement protocol $\ga3$ (\cref{immobile:ga3}).
Our $3$-grade graded agreement protocol is built from two instances of our $2$-grade graded agreement protocol $\ga2$ (\cref{immobile:ga}).
The $\ga2$ protocol is based on~\authcite{DSTZ24}'s construction,  which we upgrade to tolerate fully fluctuating participation through the use of our graded wakeness protocol $\gw3$ (\cref{immobile:gw}).
At the end of every view of $\ba$, stable nodes participate in a VRF-based leader election to determine the input for the next view.
In this work, we focus on single-shot agreement, but the structure closely resembles the total-order broadcast protocol of~\authcite{DSTZ24}.

\subsection{Graded Wakeness}
\label{immobile:gw}

Recall from \cref{intro:technical} that graded wakeness allows nodes to filter for messages from awake nodes \emph{and} ensure that other awake nodes accept such messages when forwarded.
We represent this knowledge as \emph{graded wakeness sets}, where nodes in the grade $1$ set $\gawake[1]^{i}$ are included in $\gawake[0]^{j}$ for any stable node $p_j$.
In the example of \cref{fig:visual-gw2b}, node $p_i$, with $\gawake[1]^{i} = \{p_k\}$, knows that $p_j$, with $\gawake[0]^{j} = \{p_k\}$, will accept messages originating from $p_k$.
In \cref{immobile:ga}, we show that three grades of graded wakeness suffice for the graded agreement protocol.

\Cref{def:graded-wakeness} formally defines the desired properties of these sets, where the property just described is \emph{graded delivery}.
The last two properties naturally follow from what we expect from a wakeness set: that nodes asleep at the start of the protocol are excluded (\emph{asleep exclusion}) and that stable nodes are included in all the sets (\emph{honest inclusion}).
Additionally, honest inclusion is ``all-or-nothing,'' such that a node honest since $t_s$ is either in \emph{all} of a \stable node's graded wakeness sets or in none of them.
\begin{definition}[Graded Wakeness]
    \label{def:graded-wakeness}
    Parameterized by a start round $t_s$, each \stable node $p_i$ outputs $\gawake[g]^{i} \subseteq \nodes$ for each grade $g \in \{0, ..., \gmax - 1\}$ with the following properties, except with negligible probability:
    \begin{itemize}
        \item \emph{Graded Delivery:} For any \stable node $p_j$ and every $g \geq 1$, $\gawake[g]^i \subseteq \gawake[g-1]^j$.
        \item \emph{Asleep Exclusion:} $\gawake[0]^i$ includes only nodes awake after $t_s$.
        \item \emph{Honest Inclusion:} $\gawake[\gmax - 1]^i$ includes all \stable nodes and every node in $\gawake[0]^i$ that has remained honest since $t_s$.
    \end{itemize}
\end{definition}
Note the unconventional notion that the protocol may start at any round $t_s \geq 0$.
This captures the adversary's ability to pre-sign messages \emph{before} the protocol starts during long-running executions (such as during agreement).

\begin{figure}[tb]
  \centering
  \begin{tcolorbox}
    A node $p_i$ executes the below steps starting from a round $t_s$ and, at every round, forwards every received $\pingmsgtype$ and $\ackmsgtype$ message to all nodes.
    Let $t$ denote the rounds since $t_s$, where $t_s + t$ is the absolute round in the entire execution.
    During execution, node $p_i$ stores $\nonces^{t}$ as the set of nonces received in $\pingmsgtype$ messages by round $t$, including its own.
    A node $p_j$ has \emph{acknowledged} a nonce $\nonce$ once $p_i$ receives $\signedackmsg[j]{\nonce}$.
    Only a node awake since round $t = 0$ executes the steps at rounds $t \geq 3$.
    \begin{protocolsteps}
      \item[$(t = 0):$] %
            Sample nonce $\nonces^{0} \overset{\$}{\gets} \{0, 1\}^{\lambda}$.
            Multicast $\signedpingmsg[i]{\nonces^{0}}$.
      \item[$(t = 3):$] %
            Multicast $\signedackmsg[i]{\nonce}$ for every nonce $\nonce \in \nonces^{3}$.
      \item[$(t = 4):$] %
            Output $\gawake[2]$ as the nodes that acknowledged every nonce in $\nonces^{2}$.
      \item[$(t = 5):$] %
            Output $\gawake[1]$ as the nodes that acknowledged every nonce in $\nonces^{1}$.
      \item[$(t = 6):$] %
            Output $\gawake[0]$ as the nodes that acknowledged $\nonces^{0}$.
    \end{protocolsteps}
  \end{tcolorbox}
  \caption{Graded wakeness with $3$ grades for a node $p_i$.}
  \label{fig:gw3}
\end{figure}

We detail the graded wakeness protocol ($\gw3$) for $\gmax = 3$ grades in \cref{fig:gw3}.
For simplicity, we reference the rounds of the protocol relative to its start time when it is clear from context, so round $0$ of the protocol maps to round $t_s$ in the entire execution.

Nodes start by sampling a \emph{nonce} challenge and multicasting it alongside a signature from the oracle $\osig$ in a $\pingmsgtype$ message.
The nonces are then forwarded for two rounds; in each round, nodes store the nonces received so far.
At round $3$, nodes \emph{acknowledge} every received nonce by signing and multicasting it in an $\ackmsgtype$ message.
From round $4$ on, nodes begin to output the graded wakeness sets from highest to lowest.
For grade $2$ and a node $p_i$, if a node $p_j$ has acknowledged all the nonces $p_i$ received by round $2$, then $p_j$ is included in $\gawake[2]^{i}$.
Node $p_i$ then forwards the acknowledgments to all nodes.
This continues until round $6$, once nodes have output graded wakeness sets for every grade.
As a result, graded delivery easily follows from the implication that nodes that acknowledge $p_i$'s round $2$ nonces also acknowledge all round $1$ nonces of all other nodes that forward them to $p_i$.
We now prove it precisely along with the other properties of \cref{def:graded-wakeness}.

\begin{lemma}[Graded Delivery]
    \label{lem:immobile-gw-gd}
    For any \stable nodes $p_i$ and $p_j$, $\gawake[2]^{i} \subseteq \gawake[1]^{j}$ and $\gawake[1]^{i} \subseteq \gawake[0]^{j}$.
\end{lemma}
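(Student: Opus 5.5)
The plan is to track how nonce sets and acknowledgments propagate between two \stable nodes $p_i$ and $p_j$, both awake and honest throughout $[t_s, t_s+\sbl]$. Since $\sbl$ covers the protocol's rounds $0$ through $6$, both nodes execute every step and forward every $\pingmsgtype$ and $\ackmsgtype$ message they receive.

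The first step is a nesting property for nonce sets: $\nonces^{1,j} \subseteq \nonces^{2,i}$ and $\nonces^{0,j} \subseteq \nonces^{1,i}$. The argument is that any nonce in $p_j$'s set by round $t$ is forwarded by $p_j$ at round $t$. By synchrony, $p_i$ receives it by round $t+1$. For $t=0$, this is $p_j$'s own nonce, which it multicasts at round $0$.

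The second step is a matching statement for acknowledgments. Suppose $p_k \in \gawake[2]^i$. Then by round $4$, $p_i$ holds $\signedackmsg[k]{\nonce}$ for every $\nonce \in \nonces^{2,i}$. By the first step, this includes every nonce in $\nonces^{1,j}$. Since $p_i$ forwards every $\ackmsgtype$ it receives, $p_j$ receives all these acknowledgments by round $5$. Hence $p_k$ has acknowledged every nonce in $\nonces^{1,j}$ from $p_j$'s view at round $5$, which gives $p_k \in \gawake[1]^j$.

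The same argument, shifted by one round, handles the lower grade. Suppose $p_k \in \gawake[1]^i$, so by round $5$, $p_i$ holds acknowledgments of all nonces in $\nonces^{1,i}$. Since $\nonces^{0,j} \subseteq \nonces^{1,i}$, this set includes $\nonces^{0,j}$. These acknowledgments reach $p_j$ by round $6$, giving $p_k \in \gawake[0]^j$.

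The main obstacle is round bookkeeping. I need "received by round $t$" to mean received at or before the point where the output step at round $t$ executes. The forwarding at round $t$ must cover messages received in that same round, so that they arrive at the next round. I must also confirm that acknowledgments forwarded at round $4$ (respectively $5$) arrive in time for $p_j$'s output at round $5$ (respectively $6$).

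I will also check that forwarded signatures count as acknowledgments. Under the definition, $p_j$ has acknowledged $\nonce$ once the signed message is received, regardless of sender. No negligible-probability event is needed here; signature unforgeability matters only for the other properties.
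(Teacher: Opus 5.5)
Your proposal is correct and follows essentially the same route as the paper. The paper likewise first derives $\nonces^{0,j} \subseteq \nonces^{1,i}$ and $\nonces^{1,j} \subseteq \nonces^{2,i}$ from the forwarding of $\pingmsgtype$ messages, then adds one more hop of forwarding for the $\ackmsgtype$ messages $p_i$ holds by round $4$ (resp.\ $5$), so that $p_j$ has them in time for its output at round $5$ (resp.\ $6$); your extra round bookkeeping and your remark that the argument is deterministic are both consistent with the model.
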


\begin{proof}
    Since $p_i, p_j \in \hons{t_s}{t_s + 3}$, all nonces are forwarded from rounds $0$ to $2$, so
    $\nonces^{0, j} \subseteq \nonces^{1, i}$ and $\nonces^{1, j} \subseteq \nonces^{2, i}$.
    All nodes in $\gawake[2]^{i}$ acknowledged all the nonces in $\nonces^{2, i}$.
    Node $p_i$ receives these acknowledgments by round $4$, and they are forwarded such that node $p_j$ receives them by round $5$.
    Since $\nonces^{1, j} \subseteq \nonces^{2, i}$, all nodes in $\gawake[2]^{i}$ acknowledged all the nonces in $\nonces^{1, j}$, so $\gawake[2]^{i} \subseteq \gawake[1]^{j}$.

    Similarly, all nodes in $\gawake[1]^{i}$ acknowledged all the nonces in $\nonces^{1, i}$.
    Node $p_i$ receives these acknowledgments by round $5$, and they are forwarded such that node $p_j$ receives them by round $6$.
    Since $\nonces^{0, j} \subseteq \nonces^{1, i}$, all nodes in $\gawake[1]^{i}$ acknowledged all the nonces in $\nonces^{0, j}$, so $\gawake[1]^{i} \subseteq \gawake[0]^{j}$.
\end{proof}

\begin{lemma}[Asleep Exclusion]
    \label{lem:immobile-gw-exclusion}
    For any \stable node $p_i$, $\gawake[0]^{i}$ includes only nodes awake after $t_s$.
\end{lemma}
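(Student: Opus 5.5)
The plan is to argue directly from how $\gawake[0]^{i}$ is computed in \cref{fig:gw3} and from the oracle model. By the protocol, $p_k \in \gawake[0]^{i}$ means that by round $6$, $p_i$ holds a valid signature $\signedackmsg[k]{\nonces^{0,i}}$ on its own round-$0$ nonce. We need to show that such a $p_k$ was awake at some round $\geq t_s$. Equivalently, if $p_k$ was asleep at every round from $t_s$ on (through round $t_s+6$), then $p_i$ cannot hold this acknowledgment, except with negligible probability.

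The argument proceeds in three steps.

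\textbf{Step 1: the signature had to be produced after $t_s$.} In the external adversary model, a valid signature $\signed[k]{m}$ can only arise from a query $(p_k, m)$ to $\osig$, and only while $p_k$ is awake. This holds whether $p_k$ is honest or corrupt, since nobody holds secret keys. Signature unforgeability in the idealized oracle model covers the rest: except with negligible probability, no PPT adversary produces a valid $\signedackmsg[k]{\nonce}$ that $\osig$ never issued.

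\textbf{Step 2: the query had to come after the nonce was sampled.} Because $p_i$ is \stable, it is honest and awake at $t_s$. It samples $\nonces^{0,i}$ uniformly from $\{0,1\}^{\lambda}$ at round $t_s$, and no information about it exists earlier. So any query $(p_k, (\ackmsgtype, \nonce))$ made before round $t_s$ uses a value $\nonce$ fixed independently of $\nonces^{0,i}$. The adversary makes at most polynomially many such queries. A union bound then gives that some pre-$t_s$ query hits $\nonces^{0,i}$ with probability at most $\poly(\lambda)/2^{\lambda}$, which is negligible.

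\textbf{Step 3: conclude.} Outside the negligible events of Steps 1 and 2, the query $(p_k,(\ackmsgtype,\nonces^{0,i}))$ was issued at some round $\geq t_s$. Since $\osig$ answers only awake nodes, $p_k$ was awake at that round. Hence every node in $\gawake[0]^{i}$ was awake after $t_s$.

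The main subtlety is Step 2: pre-signing is the only way a node asleep from $t_s$ onward could contribute, so the proof must carefully rule out guessing the fresh nonce. It also has to handle the adversary's adaptivity. Nonces from other nodes or from corrupt parties play no role, because membership in $\gawake[0]^{i}$ requires acknowledging $p_i$'s own honestly sampled nonce.
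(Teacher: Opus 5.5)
Your proposal is correct and follows the paper's own argument. A node asleep throughout the protocol never sees $p_i$'s freshly sampled nonce $\nonces^{0,i}$, so it could only have acknowledged it by a pre-$t_s$ guess, which succeeds with probability at most $\poly(\lambda)/2^{\lambda}$; you simply spell out the oracle-model details (signatures obtainable only via $\osig$ while awake, a union bound over pre-$t_s$ queries) that the paper compresses into one line.
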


\begin{proof}
    Note that any node $p_j$ that is not awake after $t_s$ is not awake during the execution of the protocol.
    Such a node does not receive the nonce $\nonces^{0, i}$.
    Therefore, except with probability $\poly(n)/2^{\lambda}$,
    it cannot acknowledge $\nonces^{0, i}$, so $p_j \not\in \gawake[0]^i$.
\end{proof}

\begin{lemma}[Honest Inclusion]
    \label{lem:immobile-gw-inclusion}
    For any \stable node $p_i$, $\gawake[2]^{i}$ includes all \stable nodes and every node in $\gawake[0]^{i}$ that has remained honest since $t_s$, i.e., $\hons{t_s}{t_s + 6} \cup (\gawake[0]^{i} \setminus \advs{t_s}{t_s + 6}) \subseteq \gawake[2]^i$.
\end{lemma}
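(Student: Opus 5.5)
The proof splits the claimed inclusion into its two parts. We show that every node in $\hons{t_s}{t_s+6}$ lies in $\gawake[2]^i$, and that every node in $\gawake[0]^i \setminus \advs{t_s}{t_s+6}$ does too. Both reduce to one fact about the grade-$2$ check at round $4$: a node $p_k$ enters $\gawake[2]^i$ exactly when $p_i$ has received, by round $4$, a valid $\signedackmsg[k]{\nonce}$ for every $\nonce \in \nonces^{2,i}$. The plan is to show that any honest node awake from round $0$ through round $3$ acknowledges every such nonce in time.

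\textbf{Step 1 (nonce propagation).} Let $p_k$ be honest and awake at rounds $0,\dots,3$ of the protocol, and let $p_i$ be \stable. Every nonce in $\nonces^{2,i}$ was received by $p_i$ by round $2$. Since $p_i$ forwards every $\pingmsgtype$ message it receives, and synchrony delivers to $p_k$ by round $3$ everything $p_i$ sent at round $2$ or earlier, we get $\nonces^{2,i} \subseteq \nonces^{3,k}$. Because $p_k$ is awake at round $0$, it is a node \emph{awake since round $0$} and therefore executes the round-$3$ step. There it multicasts $\signedackmsg[k]{\nonce}$ for every $\nonce \in \nonces^{3,k}$, which includes all of $\nonces^{2,i}$. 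These acknowledgments reach $p_i$ by round $4$, so $p_k \in \gawake[2]^i$.

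\textbf{Step 2 (stable nodes).} Every $p_k \in \hons{t_s}{t_s+6}$ is honest and awake throughout rounds $0$ to $3$, so Step 1 applies directly.

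\textbf{Step 3 (honest members of $\gawake[0]^i$).} This is the main obstacle. Take $p_k \in \gawake[0]^i$ with $p_k \notin \advs{t_s}{t_s+6}$; we must show $p_k$ was awake at every round $0,\dots,3$. Membership in $\gawake[0]^i$ means $p_i$ holds $\signedackmsg[k]{\nonces^{0,i}}$, and $p_k$ is never corrupt while awake in the window, so this signature came from honest execution. An honest node sends $\ackmsgtype$ messages only at round $3$, and only if it has been awake since round $0$. Could an honest node sign this message at any other time? The nonce $\nonces^{0,i}$ is fresh at $t_s$, so $p_k$ could not have signed it before $t_s$ except with probability $\poly(n)/2^{\lambda}$; this is the argument of \cref{lem:immobile-gw-exclusion}. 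Any signing after $t_s$ happens while $p_k$ is honest, hence only through the protocol step. So $p_k$ was awake at rounds $0$ through $3$ and executed round $3$, and Step 1 gives $p_k \in \gawake[2]^i$.

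Two details need care in Step 3. First, the round-$3$ condition \emph{awake since round $0$} must be read as awake continuously from round $0$. Second, honest nodes must not sign $\ackmsgtype$ messages for this instance outside the protocol; we need to argue that an honest-but-sleepy node cannot produce the acknowledgment through some other code path, and that a node corrupt outside the window (before $t_s$) cannot have signed it, which again follows from the unpredictability of $\nonces^{0,i}$. A union bound over all nodes and nonces keeps the total failure probability negligible.
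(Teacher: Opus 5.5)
Your proposal is correct and matches the paper's proof. Stable nodes receive $\nonces^{2,i}$ by round $3$ through $p_i$'s forwarding and acknowledge it in time for $p_i$'s round-$4$ check, and a member of $\gawake[0]^i$ outside $\advs{t_s}{t_s+6}$ can only have produced its acknowledgment of $\nonces^{0,i}$ in the honest round-$3$ step, where it also acknowledged all of $\nonces^{2,i}$. Your remarks on nonce freshness and on how to read ``awake since round $0$'' just make explicit what the paper's proof leaves implicit, and because messages to asleep nodes are buffered, being awake at round $3$ already suffices for receiving $\nonces^{2,i}$.
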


\begin{proof}
    Let $p_j$ be a \stable node that receives $p_i$'s nonce set $\nonces^{2, i}$ by round $3$.
    At round $3$, node $p_j$ acknowledges every nonce in $\nonces^{2, i}$, and the acknowledgments are forwarded such that $p_i$ receives them by round $4$.
    Therefore, $p_j \in \gawake[2]^i$, so $\hons{t_s}{t_s + 6} \subseteq \gawake[2]^i$.

    For a node $p_j \in \gawake[0]^i \setminus \advs{t_s}{t_s + 6}$, $p_j$ acknowledged $p_i$'s nonce $\nonces^{0, i}$ at round $3$.
    Node $p_j$ must have also received $p_i$'s nonces $\nonces^{2, i}$ and acknowledged them all, which $p_i$ received by round $4$.
    Thus, in round $4$, $p_i$ must have included $p_j$ in $\gawake[2]^i$, and we have that $\hons{t_s}{t_s + 6} \cup (\gawake[0]^{i} \setminus \advs{t_s}{t_s + 6}) \subseteq \gawake[2]^i$.
\end{proof}

\begin{theorem}
    \label{thm:immobile-gw3}
    \Cref{fig:gw3} implements graded wakeness with start round $t_s \geq 0$ and $\gmax = 3$ for schedules admissible in the fully fluctuating sleepy model with $\sbl \geq 6$.
\end{theorem}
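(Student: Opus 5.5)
The theorem follows by assembling the three preceding lemmas and checking that their hypotheses hold whenever $\sbl \geq 6$. We instantiate \cref{def:graded-wakeness} with $\gmax = 3$, so the outputs are $\gawake[0]^i, \gawake[1]^i, \gawake[2]^i$ for every \stable node $p_i$. The key observation is that a \stable node is one in $\hons{t_s}{t_s + \sbl}$. With $\sbl \geq 6$, it is therefore honest and awake at every round $t_s, \dots, t_s + 6$. This covers the whole execution of \cref{fig:gw3}: it samples and multicasts its nonce at round $0$, forwards $\pingmsgtype$ and $\ackmsgtype$ messages in every round, acknowledges at round $3$, and outputs at rounds $4$ through $6$. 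In particular, $\hons{t_s}{t_s+\sbl} \subseteq \hons{t_s}{t_s+6}$, which is the set appearing in \cref{lem:immobile-gw-inclusion}.

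The properties then map directly onto the lemmas. \emph{Graded delivery} for $g \in \{1,2\}$ is \cref{lem:immobile-gw-gd}, which uses only that $p_i, p_j$ are awake and honest through round $t_s+6$ so that forwarding occurs as claimed. \emph{Asleep exclusion} is \cref{lem:immobile-gw-exclusion}. \emph{Honest inclusion} is \cref{lem:immobile-gw-inclusion}. For that lemma, a node in $\gawake[0]^i$ that has remained honest since $t_s$ lies outside $\advs{t_s}{t_s+6}$, since it was never corrupt while awake in that window. Hence it is covered by the set $\gawake[0]^{i} \setminus \advs{t_s}{t_s + 6}$, and all \stable nodes are covered by $\hons{t_s}{t_s+6}$.

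The remaining point is the ``except with negligible probability'' clause. The only probabilistic failure is an adversarial or asleep node acknowledging a nonce it never received, which requires guessing one of at most $n$ fresh uniform $\lambda$-bit nonces. Its acknowledgments also need a valid signature, and in the external adversary model the oracle $\osig$ only signs for awake nodes. We would take a union bound over the polynomially many nonces and adversarial queries, giving failure probability at most $\poly(n,\lambda)/2^{\lambda}$, which is negligible.

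The step needing the most care is honest inclusion for a node $p_j$ that is honest but woke up after $t_s$. Such a $p_j$ may fail to have received all of $\nonces^{2,i}$ by round $3$. We would argue that such a $p_j$ cannot be in $\gawake[0]^i$ in the first place unless it executed round $3$, which by the protocol requires being awake since round $0$. By synchrony and the forwarding in rounds $1$ and $2$, it then holds all of $\nonces^{2,i}$ by round $3$. Admissibility of the schedule is not needed beyond guaranteeing that \stable nodes exist.
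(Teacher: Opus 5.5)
Your proposal is correct and matches the paper's proof: $\sbl \geq 6$ places every \stable node in $\hons{t_s}{t_s+6}$, so it executes and outputs through round $6$, and the three properties then follow from \cref{lem:immobile-gw-gd,lem:immobile-gw-exclusion,lem:immobile-gw-inclusion}. Your additional remarks on the union bound and on honest nodes that wake after $t_s$ (which never acknowledge and hence never enter $\gawake[0]^i$) only re-check details already handled inside those lemmas' proofs.
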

\begin{proof}%
    With $\sbl \geq 6$, every \stable node $p_i$ is in $\hons{t_s}{t_s + 6}$.
    Therefore, every \stable node outputs $\gawake[g]^i$ for each grade $g$.
    We then have \emph{graded delivery} by \cref{lem:immobile-gw-gd}, \emph{asleep exclusion} by \cref{lem:immobile-gw-exclusion}, and \emph{honest inclusion} by \cref{lem:immobile-gw-inclusion}.
\end{proof}

\subsection{Graded Agreement}
\label{immobile:ga}

We now outline how we adapt the graded agreement protocol from~\authcite{DSTZ24} with $\gmax = 2$ to tolerate fully fluctuating participation.
\Cref{def:graded-agreement} gives our definition of graded agreement.
\begin{definition}[Graded Agreement]
    \label{def:graded-agreement}
    Parameterized by a start round $t_s$, each node has an input value in $\valspace \cup \{\noinput, \bot\}$, where $\bot$ indicates a default value, and $\noinput$ indicates the node has no input.
    Each node outputs a single pair $(x, g)$ where $x \in \valspace \cup \{\bot\}$ and $g \in \{0, ..., \gmax - 1\}$.
    \begin{itemize}
        \item \emph{Graded Delivery:} For any $g > 0$, if a \stable node outputs $(x, g)$, then every \stable node outputs $(x, g')$ for $g - 1 \leq g' \leq g + 1$.
        \item \emph{Consistency:} If an honest node outputs $(x, g)$ for $g > 0$, then no honest node outputs $(x', \cdot)$ for $x' \neq x$.
        \item \emph{Validity:} If every honest and awake node at round $t_s$ with an input in $\valspace$ inputs the same $x$, then every \stable node outputs $(x, \gmax - 1)$.
        \item \emph{Integrity:} If no honest and awake node at round $t_s$ inputs $x \in \valspace$, then no honest node outputs $(x, \cdot)$.
    \end{itemize}
\end{definition}
Note the unconventional notion that honest nodes may have no input ($\noinput$).
This is used in our agreement protocol for nodes in \emph{recovery} that have no input even though they are honest and awake at the start of the latest graded agreement.
To get up to date, such nodes only observe the output of graded agreement.

First, recall the \emph{time-shifted quorum} technique of~\authcite{MR22} that underlies the graded agreement protocol of~\authcite{DSTZ24}.
Traditional agreement protocols rely on quorums of more than $n/2$ nodes, which any node can verify and forward, since no two such quorums can disagree.
In the sleepy model, a node may never see that many nodes if a majority remain asleep, so it can only weigh a value against the inputs it actually receives.

Suppose, then, that a node accepts a value if more than half of the inputs it has received are for that value.
Consider three disjoint sets of awake nodes $P_1, P_2, P_3$ of equal size at round $t$, where the number of awake nodes $n_t = |\awake_{t}|$ is far less than $n$.
If the nodes in $P_3$ are corrupt and send their inputs only to $P_2$, then $P_1$ receives $n_t/3$ inputs while $P_2$ receives $2n_t/3$.
A value carried by just over $n_t/6$ inputs then suffices for $P_1$ but not for $P_2$, so these \emph{subjective} quorums do not transfer.
As with the challenge-response behind graded wakeness, the time-shifted quorum technique restores transferability.

After nodes send their input at the start of the protocol ($t=0$), the time-shifted quorum technique proceeds as follows for a node $p_i$:
\begin{itemize}
    \item For rounds $t = 1, 2, 3$, obtain the \emph{inputters} $\inputters^{t,i}$ as the set of nodes from which $p_i$ receives inputs by round $t$.
    \item For rounds $t = 2, 3$, obtain the \emph{equivocators} $\equivs^{t, i}$ as the set of nodes from which $p_i$ receives two distinct inputs by round $t$.
\end{itemize}
Throughout, nodes forward any received inputs to all nodes: for another node $p_j$, any inputs received by $p_i$ by round $t$ are guaranteed to be received by $p_j$ by the next round.
Therefore, $p_j$'s set of non-equivocating inputters at round $2$ includes all of $p_i$'s non-equivocating inputters from rounds $1$ to $3$, i.e., $\inputters^{1, i} \setminus \equivs^{3, i} \subseteq \inputters^{2, j} \setminus \equivs^{2, j}$.
This follows because $p_i$ forwards its inputs at round $1$, and $p_j$ forwards any evidence of equivocation among those inputters in round $2$.
Let $\inputters^{t, i}_{x}$ be the set of inputters in $\inputters^{t, i}$ that sent input $x$; then if $p_i$ has a value $x$ such that $|\inputters^{1, i}_{x} \setminus \equivs^{3, i}| > |\inputters^{3, i}|/2$, it outputs $x$ with grade $1$, and $p_j$ then outputs $x$ with at least grade $0$ since $|\inputters^{2, j}_{x} \setminus \equivs^{2, j}| > |\inputters^{2, j}|/2$.

Upgrading the protocol to tolerate fully fluctuating participation then amounts to obtaining graded wakeness sets for every round of forwarding, which would be $3$ graded wakeness sets for the above time-shifted quorum.
The above output invariants now become: $p_i$ outputs $x$ with grade $1$ if $|(\inputters^{1, i}_{x} \setminus \equivs^{3, i}) \cap \gawake[2]^{i}| > |\inputters^{3, i} \cap \gawake[0]^{i}|/2$, and $p_j$ then outputs $x$ with grade $0$ if $|(\inputters^{2, j}_{x} \setminus \equivs^{2, j}) \cap \gawake[1]^{j}| > |\inputters^{2, j} \cap \gawake[1]^{j}|/2$.
We can check that since $\gawake[2]^{i} \subseteq \gawake[1]^{j} \subseteq \gawake[0]^{i}$, we still have $(\inputters^{1, i}_{x} \setminus \equivs^{3, i}) \cap \gawake[2]^{i} \subseteq (\inputters^{2, j}_{x} \setminus \equivs^{2, j}) \cap \gawake[1]^{j}$ and $\inputters^{2, j} \cap \gawake[1]^{j} \subseteq \inputters^{3, i} \cap \gawake[0]^{i}$.

\begin{figure}[t]
  \centering
  \begin{tcolorbox}
    A node $p_i$ with input $x$ executes the below steps starting from a round $t_s$ and, at every round, forwards every received $\inputmsgtype$ message to all nodes.
    Let $t$ denote the rounds since $t_s$, where $t_s + t$ is the absolute round in the entire execution.
    During execution, node $p_i$ stores the sets $\inputters^{t}, \inputters^{t}_{y},$ and $\equivs^{t}$ at every round $t$ where, for nodes $p_j \in \nodes$:
    \begin{itemize}
      \item \emph{Inputters:} $p_j \in \inputters^{t}$ if $p_i$ has received $\signedinputmsg[j]{x}$ for any $x \in \valspace$ by round $t$.
      \item \emph{$y$-Inputters:} $p_j \in \inputters^{t}_y$ if $p_i$ has received $\signedinputmsg[j]{y}$ by round $t$.
      \item \emph{Equivocators:} $p_j \in \equivs^{t}$ if $p_i$ has received $\signedinputmsg[j]{x}$ and $\signedinputmsg[j]{y}$ for $x \neq y$ by round $t$.
    \end{itemize}
    \begin{protocolsteps}
      \item[$(t = 0):$]
      If $x \neq \noinput$: multicast $\signedinputmsg[i]{x}$.
      \item[$(t = 0, 1, 2, 3):$]
      Execute $\gw3$.
      \item[$(t = 4):$]
      Store $\inputters^{4}$. %
      \item[$(t = 5):$]
      Store $\inputters^{5}$ and $\equivs^{5}$. %
      \item[$(t = 6):$] If awake since round $0$:\leavevmode
      \begin{itemize}
        \item If $\exists y \in \valspace$ such that $|(\inputters^{4}_{y} \setminus \equivs^{6}) \cap \gawake[2]| > |\inputters^{6} \cap \gawake[0]|/2$: output $(y, 1)$.
        \item Else if $\exists y \in \valspace$ such that $|(\inputters^{5}_{y} \setminus \equivs^{5}) \cap \gawake[1]| > |\inputters^{5} \cap \gawake[1]|/2$: output $(y, 0)$.
        \item Otherwise, output $(\bot, 0)$.
      \end{itemize}
    \end{protocolsteps}
  \end{tcolorbox}
  \caption{Graded agreement with $2$ grades for a node $p_i$.}
  \label{fig:ga2}
\end{figure}

We give our $2$-grade graded agreement protocol in \cref{fig:ga2}.
As in \cref{immobile:gw}, we reference the rounds of the protocol relative to its start time $t_s$.
Note that the time-shifted quorum just described maps exactly to rounds $4$ to $6$.
We now prove that the protocol satisfies the properties of \cref{def:graded-agreement} for $\sbl \geq 6$ and $\rho \geq 1$ in the fully fluctuating sleepy model.
Furthermore, we assume \emph{very-stable input} as defined in \cref{def:immobile-ga-vstable-input}.
\begin{definition}[Very-Stable Input]
    \label{def:immobile-ga-vstable-input}
    For a start round $t_s$, a graded agreement protocol is said to have very-stable input if every \vstable node has input $x \neq \noinput$.
\end{definition}

\begin{lemma}[Graded Delivery]
    \label{lem:immobile-ga-gd}
    If a \stable node $p_i$ outputs $(x, 1)$, then every \stable node $p_j$ outputs $(x, g')$ for $g' \leq 1$.
\end{lemma}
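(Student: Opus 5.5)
The plan is to show that the grade-$1$ condition at $p_i$ forces the grade-$0$ condition at $p_j$ for the same value $x$. Then $p_j$ outputs either $(x,1)$ or $(x,0)$, so $g' \leq 1$ and the value is $x$. One subtlety: $p_j$ might satisfy the grade-$1$ test for some other value $y \neq x$ and output $(y,1)$. We rule this out by consistency-style counting, handled at the end. The main tool is graded delivery of $\gw3$ (\cref{lem:immobile-gw-gd}), applied to the instance run at rounds $0$ to $3$. It gives $\gawake[2]^{i} \subseteq \gawake[1]^{j}$ and $\gawake[1]^{j} \subseteq \gawake[0]^{i}$, the latter by applying the lemma with the roles of $i$ and $j$ swapped.

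\textbf{Step 1 (numerator containment).} We show $(\inputters^{4,i}_{x} \setminus \equivs^{6,i}) \cap \gawake[2]^{i} \subseteq (\inputters^{5,j}_{x} \setminus \equivs^{5,j}) \cap \gawake[1]^{j}$. Both nodes are stable, so they are awake and forwarding throughout rounds $t_s$ to $t_s+6$. Any $\inputmsgtype$ message $p_i$ holds by round $4$ reaches $p_j$ by round $5$, which gives $\inputters^{4,i}_{x} \subseteq \inputters^{5,j}_{x}$. For the equivocation part, suppose $p_k \in \equivs^{5,j}$. Then $p_j$ forwards both conflicting messages at round $5$, and $p_i$ holds them by round $6$, so $p_k \in \equivs^{6,i}$. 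Taking contrapositives and using $\gawake[2]^{i}\subseteq \gawake[1]^{j}$ gives the containment.

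\textbf{Step 2 (denominator containment).} We show $\inputters^{5,j} \cap \gawake[1]^{j} \subseteq \inputters^{6,i} \cap \gawake[0]^{i}$. This follows the same way: forwarding from round $5$ to round $6$ gives $\inputters^{5,j} \subseteq \inputters^{6,i}$, and graded delivery gives $\gawake[1]^{j}\subseteq\gawake[0]^{i}$.

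Chaining the two containments with $p_i$'s threshold gives
\[
|(\inputters^{5,j}_{x}\setminus\equivs^{5,j})\cap\gawake[1]^{j}| > |\inputters^{6,i}\cap\gawake[0]^{i}|/2 \geq |\inputters^{5,j}\cap\gawake[1]^{j}|/2,
\]
so $p_j$ meets the grade-$0$ test for $x$.

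\textbf{Step 3 (the obstacle).} It remains to exclude that $p_j$ outputs $(y,1)$ or $(y,0)$ with $y\neq x$ by taking an earlier branch. For the grade-$1$ branch, suppose $p_j$ passed the grade-$1$ test for $y$. By Steps 1 and 2 with the roles of $i$ and $j$ swapped, a strict majority of $p_i$'s grade-$0$ denominator set would consist of non-equivocating $y$-inputters. But $p_i$'s grade-$1$ set for $x$ is also a strict majority of $\inputters^{6,i}\cap\gawake[0]^{i}$. These two sets must be disjoint: a non-equivocating node (at the relevant rounds) cannot appear as both an $x$-inputter and a $y$-inputter, since $p_i$ would see both messages by round $6$ and mark it as an equivocator. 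Two disjoint strict majorities of the same set cannot exist, a contradiction.

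For the grade-$0$ branch, the argument works directly inside $p_j$'s own sets. Among the non-equivocators in $\inputters^{5,j}$, each node is counted for at most one value, so at most one value can exceed half of $|\inputters^{5,j}\cap\gawake[1]^{j}|$. Since $x$ already does, $y$ cannot. The bookkeeping of which rounds' equivocator sets are involved is the delicate point, and I would verify the round indices carefully there.
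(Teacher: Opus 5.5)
Steps 1 and 2 are exactly the paper's argument. Forwarding gives $\inputters^{4,i}\subseteq\inputters^{5,j}$, $\equivs^{5,j}\subseteq\equivs^{6,i}$ and $\inputters^{5,j}\subseteq\inputters^{6,i}$, and these combine with $\gawake[2]^{i}\subseteq\gawake[1]^{j}\subseteq\gawake[0]^{i}$. Your grade-$0$ uniqueness argument in Step 3 also matches the paper's last sentence.

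You differ from the paper in the grade-$1$ branch, and there your counting step is slightly off as written. After the role swap, the $y$-set is a strict majority of $\inputters^{5,i}\cap\gawake[1]^{i}$, while $p_i$'s $x$-set is a strict majority of $\inputters^{6,i}\cap\gawake[0]^{i}$. These are not the same set. Two disjoint sets that are strict majorities of a set and of a superset, respectively, give no contradiction by themselves.

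The repair is one line. Monotonicity of the inputter sets and $\gawake[2]^{i}\subseteq\gawake[1]^{i}$ give $(\inputters^{4,i}_{x}\setminus\equivs^{6,i})\cap\gawake[2]^{i}\subseteq\inputters^{5,i}\cap\gawake[1]^{i}$. Since also $|\inputters^{5,i}\cap\gawake[1]^{i}|\le|\inputters^{6,i}\cap\gawake[0]^{i}|$, the $x$-set is a strict majority of $\inputters^{5,i}\cap\gawake[1]^{i}$ as well. Your disjointness argument is correct, so this closes the contradiction.

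The paper avoids the role swap altogether. It reruns Steps 1 and 2 with $p_j$ in both roles, using $\inputters^{4,j}\subseteq\inputters^{5,j}$, $\equivs^{5,j}\subseteq\equivs^{6,j}$ and $\gawake[2]^{j}\subseteq\gawake[1]^{j}\subseteq\gawake[0]^{j}$. This shows that any value passing $p_j$'s grade-$1$ test also passes $p_j$'s grade-$0$ test. Since $x$ already passes that test, uniqueness at $p_j$ handles both branches at once.

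Your patched route is the mirror image, with uniqueness at $p_i$ instead of $p_j$. It needs the same self-application for $p_i$'s $x$-set, which is exactly the step you left implicit. The paper's version is therefore the more economical of the two.
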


    \begin{proof}%
        Since $p_i$ outputs $(x, 1)$, we have $|(\inputters^{4, i}_{x} \setminus \equivs^{6, i}) \cap \gawake[2]^{i}| > |\inputters^{6, i} \cap \gawake[0]^{i}|/2$, and $p_i$ has been awake since round $0$.
        Thus, $p_i$ forwards all $\inputmsgtype$ messages in $\inputters^{4, i}$ at round $4$, so $p_j$ receives them by round $5$, giving $\inputters^{4, i} \subseteq \inputters^{5, j}$.
        For any sender in $\inputters^{4, i}$ that $p_j$ observes equivocating by round $5$, $p_j$ forwards the equivocation evidence by round $5$, and $p_i$ receives it by round $6$, so $\equivs^{5, j} \subseteq \equivs^{6, i}$.
        Therefore, $\inputters^{4, i}_{x} \setminus \equivs^{6, i} \subseteq \inputters^{5, j}_{x} \setminus \equivs^{5, j}$.
        By graded delivery of $\gw3$, we have $\gawake[2]^{i} \subseteq \gawake[1]^{j}$.
        Combined, we get
        \begin{equation}
            \label{pf:immobile-ga-gd:vals}
            (\inputters^{4, i}_{x} \setminus \equivs^{6, i}) \cap \gawake[2]^{i} \subseteq (\inputters^{5, j}_{x} \setminus \equivs^{5, j}) \cap \gawake[1]^{j}.
        \end{equation}
        Similarly, $p_j$ forwards all values in $\inputters^{5, j}$ such that $p_i$ receives them by round $6$, i.e., $\inputters^{5, j} \subseteq \inputters^{6, i}$, and, by graded delivery of $\gw3$, we have $\gawake[1]^{j} \subseteq \gawake[0]^{i}$.
        Combined, we get
        \begin{equation}
            \label{pf:immobile-ga-gd:ins}
            \inputters^{5, j} \cap \gawake[1]^{j} \subseteq \inputters^{6, i} \cap \gawake[0]^{i}.
        \end{equation}
        Therefore, by \cref{pf:immobile-ga-gd:vals,pf:immobile-ga-gd:ins},
        $|(\inputters^{4, i}_{x} \setminus \equivs^{6, i}) \cap \gawake[2]^{i}| > |\inputters^{6, i} \cap \gawake[0]^{i}|/2 \geq |\inputters^{5, j} \cap \gawake[1]^{j}|/2$,
        so the grade $0$ condition holds at $p_j$ for $x$.
        Applying the same argument as above with $p_j$ in place of $p_i$ shows that if $p_j$ outputs with grade $1$, then its value meets $p_j$'s grade $0$ condition as well.
        Since the grade $0$ condition requires more than half of $|\inputters^{5, j} \cap \gawake[1]^{j}|$, only one value meets it, so node $p_j$ outputs $(x, g')$ for $g' \leq 1$.
    \end{proof}

\begin{lemma}[Consistency]
    \label{lem:immobile-ga-consistency}
    If an honest node $p_i$ outputs $(x, 1)$, then no honest node $p_j$ outputs $(x', \cdot)$ for $x' \neq x$.
\end{lemma}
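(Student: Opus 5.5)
The plan is to reduce \cref{lem:immobile-ga-consistency} to the argument already given in \cref{lem:immobile-ga-gd}. The only new issue is that $p_i$ and $p_j$ are merely assumed honest, not \stable.

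\emph{Step 1: honest outputting nodes are honest and awake throughout.} A node outputs in \cref{fig:ga2} only if it has been awake since round $0$ and executes round $6$. Since the adversary in this section is immobile, a node honest when it outputs has been honest throughout. Hence both $p_i$ and $p_j$ lie in $\hons{t_s}{t_s+6}$.

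\emph{Step 2: the graded-delivery chain holds for these nodes.} I would re-inspect the proof of \cref{lem:immobile-gw-gd}. It only uses that $p_i, p_j$ are honest and awake in rounds $0$ through $6$, so that nonces and $\ackmsgtype$ messages are forwarded. Therefore $\gawake[2]^{i} \subseteq \gawake[1]^{j}$ and $\gawake[1]^{j} \subseteq \gawake[0]^{i}$ hold for $p_i, p_j \in \hons{t_s}{t_s+6}$. The same forwarding argument also gives the symmetric inclusions with $i$ and $j$ swapped. Likewise, the $\inputmsgtype$ forwarding facts $\inputters^{4,i} \subseteq \inputters^{5,j}$, $\equivs^{5,j} \subseteq \equivs^{6,i}$ and $\inputters^{5,j} \subseteq \inputters^{6,i}$ only need $p_i, p_j$ awake and honest at rounds $4$ and $5$. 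Repeating the computation of \cref{lem:immobile-ga-gd} verbatim then shows that $x$ satisfies $p_j$'s grade-$0$ condition:
\[
|(\inputters^{5,j}_{x} \setminus \equivs^{5,j}) \cap \gawake[1]^{j}| > |\inputters^{5,j} \cap \gawake[1]^{j}|/2 .
\]

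\emph{Step 3: rule out every $x' \neq x$.} This is a case split on $p_j$'s output.
\begin{itemize}
\item $p_j$ cannot output $(\bot,0)$, since by Step 2 a value meets its grade-$0$ condition.
\item $p_j$ cannot output $(x',0)$ with $x' \neq x$. Such an output requires $x'$ to meet the grade-$0$ condition. Two distinct values cannot both exceed half of $|\inputters^{5,j} \cap \gawake[1]^{j}|$, because the sets $\inputters^{5,j}_{x} \setminus \equivs^{5,j}$ and $\inputters^{5,j}_{x'} \setminus \equivs^{5,j}$ are disjoint subsets of $\inputters^{5,j}$.
\item $p_j$ cannot output $(x',1)$ with $x' \neq x$. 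Apply Step 2 with the roles of $p_i$ and $p_j$ swapped: $x'$ would then meet $p_i$'s grade-$0$ condition. Applying the Step 2 argument with $p_i$ in both roles, using its own inclusions $\gawake[2]^{i} \subseteq \gawake[1]^{i} \subseteq \gawake[0]^{i}$, shows that $x$ also meets $p_i$'s grade-$0$ condition. The same majority-uniqueness argument at $p_i$ then gives a contradiction.
\end{itemize}

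\emph{Main obstacle.} The delicate point is Step 2. We must justify that the graded wakeness inclusions apply to nodes that are honest and awake only through round $6$, rather than fully \stable. This needs a careful check that \cref{lem:immobile-gw-gd} never uses awakeness beyond round $6$. The own-node inclusions $\gawake[2]^{i} \subseteq \gawake[1]^{i} \subseteq \gawake[0]^{i}$ also need justification. For these I would argue directly that $\nonces^{0,i} \subseteq \nonces^{1,i} \subseteq \nonces^{2,i}$ and that acknowledgments are only accumulated over time.
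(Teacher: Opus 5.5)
Your proposal is correct and follows essentially the paper's route: place both outputting nodes in $\hons{t_s}{t_s+6}$ and apply the graded-delivery argument of \cref{lem:immobile-ga-gd}, which rests on \cref{lem:immobile-gw-gd}. The paper simply cites that lemma; you instead re-check that its proof uses honesty and awakeness only through round $6$ (a point the paper leaves implicit, since the lemma is stated for stable nodes), and you rule out $(x',1)$ by swapping roles and using uniqueness at $p_i$ rather than uniqueness at $p_j$, which is equally valid.
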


    \begin{proof}%
        Nodes $p_i$ and $p_j$ will only output if they have been awake since the start of the protocol, so both lie in $\hons{t_s}{t_s + 6}$.
        Thus, by graded delivery (\cref{lem:immobile-ga-gd}), $p_j$ will not output $(x', \cdot)$ for $x' \neq x$.
    \end{proof}

For the following proofs of validity and integrity, we use \cref{lem:immobile-ins-majority}.
Let $\ins$ be the set of honest and awake nodes at round $t_s$ whose input is not $\noinput$.

\begin{lemma}[$\ins$-Majority]
    \label{lem:immobile-ins-majority}
    Assuming very-stable input (\cref{def:immobile-ga-vstable-input}), every \stable node $p_i$ has
    \[
        |(\ins \cap \gawake[2]^{i}) \setminus \advs{t_s}{t_s + 6}| > |\inputters^{6, i} \cap \gawake[0]^{i}|/2.
    \]
\end{lemma}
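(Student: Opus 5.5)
The plan is to bound both sides against the admissibility condition at round $t_s$. The condition gives $|\hons{t_s - \rec}{t_s + \sbl}| > \rho\,|\advs{t_s}{t_s + \sbl}| \ge |\advs{t_s}{t_s + 6}|$, using $\rho \ge 1$ and $\sbl \ge 6$. Write $V = \hons{t_s - \rec}{t_s + \sbl}$ for the \vstable nodes and $A = \advs{t_s}{t_s + 6}$. I will show that the left-hand side is at least $|V|$ and the right-hand side is at most $(|V| + |A|)/2$. Together these give the claim, since $|V| > (|V|+|A|)/2$ exactly when $|V| > |A|$.

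\emph{Lower bound.} Every node of $V$ is honest and awake at $t_s$. By very-stable input it has an input other than $\noinput$, so $V \subseteq \ins$. Every node of $V$ is also \stable. By honest inclusion (\cref{lem:immobile-gw-inclusion}), $V \subseteq \gawake[2]^{i}$. Nodes of $V$ are honest throughout $[t_s, t_s+\sbl]$, so they are disjoint from $A$. Hence $V \subseteq (\ins \cap \gawake[2]^{i}) \setminus A$.

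\emph{Upper bound.} I split $\inputters^{6,i} \cap \gawake[0]^{i}$ into its members in $A$ and the rest. Take a node $p$ in the remainder, so $p \notin A$.
\begin{itemize}
\item By asleep exclusion (\cref{lem:immobile-gw-exclusion}), $p$ is awake at some point after $t_s$.
\item Since $p$ is not in $A$, it is honest whenever awake in the window, and in the immobile model it has been honest since $t_s$.
\item By honest inclusion, $p \in \gawake[2]^{i}$.
\item $p$ sent an $\inputmsgtype$ message. An honest node multicasts input only at $t = 0$ when its input is not $\noinput$, and it cannot have been asleep at $t_s$, since signing requires being awake. So $p \in \ins$.
\end{itemize}
Hence the remainder lies in $(\ins \cap \gawake[2]^{i}) \setminus A$. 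Writing $L = |(\ins \cap \gawake[2]^{i}) \setminus A|$, this gives $|\inputters^{6,i} \cap \gawake[0]^{i}| \le L + |A|$.

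\emph{Conclusion.} It remains to show $L > (L + |A|)/2$, i.e.\ $L > |A|$. This follows from $L \ge |V| > |A|$.

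The main obstacle is the third step of the upper bound: arguing that an honest, non-$A$ inputter genuinely lies in $\ins$ and is not some sleepy or recovering node. This needs care with the mildly adaptive sleep and immobile corruption rules, so that a node not in $A$ cannot have been corrupt at $t_s$ or have pre-signed its input. I would also check that nodes corrupt before $t_s$ but asleep throughout the window are excluded from $\gawake[0]^{i}$ by asleep exclusion.
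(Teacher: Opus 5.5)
Your proof is correct and follows the paper's argument: you bound $|\inputters^{6,i} \cap \gawake[0]^{i}|$ by the honest $\ins$-inputters outside $\advs{t_s}{t_s+6}$ plus $|\advs{t_s}{t_s+6}|$, lower-bound that honest count by the very-stable nodes, and conclude via admissibility; the only variations are that you apply honest inclusion directly to land in $\gawake[2]^{i}$ rather than using the paper's identity $\gawake[2]^{i} \setminus \advs{t_s}{t_s+6} = \gawake[0]^{i} \setminus \advs{t_s}{t_s+6}$, and that you rule out pre-signed inputs via immobility more explicitly than the paper does. One small correction: your opening plan claims the right-hand side is at most $(|V|+|A|)/2$, which need not hold because stable but not very-stable members of $\ins$ also count, but the argument you actually carry out, the bound $(L+|A|)/2$ together with $L \ge |V| > |A|$, is exactly right.
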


\begin{proof}
    We first bound the number of inputs $p_i$ receives in $\inputters^{6, i} \cap \gawake[0]^{i}$.
    Nodes in $\ins$ are honest and awake at round $0$, so they send an input, which $p_i$ receives by round $6$ in $\inputters^{6, i}$.
    By asleep exclusion of $\gw3$, every node in $\gawake[0]^{i}$ is awake after $t_s$, so any other node in $\gawake[0]^{i}$ that could send an input is corrupt and awake during the protocol; so $(\ins \cap \gawake[0]^{i}) \cup \advs{t_s}{t_s + 6}$ includes all nodes in $\gawake[0]^{i}$ that could send an input by round $6$, giving us:
    \begin{equation}
        \label{eq:immobile-ins-majority:upper}
        |\inputters^{6, i} \cap \gawake[0]^{i}| \leq |(\ins \cap \gawake[0]^{i}) \setminus \advs{t_s}{t_s + 6}| + |\advs{t_s}{t_s + 6}|,
    \end{equation}
    where the left summand indicates the set of honest inputs and the right indicates the set of corrupt inputs.

    We now lower bound the number of inputs in $(\ins \cap \gawake[2]^{i}) \setminus \advs{t_s}{t_s + 6}$.
    Under the assumption of very-stable input (\cref{def:immobile-ga-vstable-input}), we have that $\ins$ includes all \vstable nodes, and by honest inclusion of $\gw3$, we have that $\gawake[2]^{i}$ includes all \stable nodes.
    Further, by graded delivery of $\gw3$, we have $\gawake[2]^{i} \subseteq \gawake[0]^{i}$, and by honest inclusion, we have $\gawake[0]^{i} \setminus \advs{t_s}{t_s + 6} \subseteq \gawake[2]^{i}$; therefore, it follows that $\gawake[2]^{i}$ and $\gawake[0]^{i}$ include the same set of honest nodes, i.e., $\gawake[2]^{i} \setminus \advs{t_s}{t_s + 6} = \gawake[0]^{i} \setminus \advs{t_s}{t_s + 6}$.
    Therefore, by the assumption that we are in a schedule admissible in the fully fluctuating sleepy model, we have
    \[
        |\ins \cap (\gawake[0]^{i} \setminus \advs{t_s}{t_s + 6})|                               \geq |\hons{t_s - \rec}{t_s + \sbl}|                                                       > |\advs{t_s}{t_s + 6}|,
    \]
    since $\rho \geq 1$ and $\sbl \geq 6$.
    Finally, by \cref{eq:immobile-ins-majority:upper}, it follows that $|\ins \cap (\gawake[2]^{i} \setminus \advs{t_s}{t_s + 6})| > |\inputters^{6, i} \cap \gawake[0]^{i}| / 2$.
\end{proof}

\begin{lemma}[Validity]
    \label{lem:immobile-ga-validity}
    If every node in $\ins$ inputs $x$, then every \stable node $p_i$ outputs $(x, 1)$.
\end{lemma}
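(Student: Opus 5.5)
The plan is to show that for every \stable node $p_i$ the grade-$1$ condition of \cref{fig:ga2} holds for $x$. That is,
\[
|(\inputters^{4, i}_{x} \setminus \equivs^{6, i}) \cap \gawake[2]^{i}| > |\inputters^{6, i} \cap \gawake[0]^{i}|/2 .
\]
Since the grade-$1$ branch is checked first and $p_i$ is awake since round $0$ (being \stable with $\sbl \geq 6$), $p_i$ then outputs $(x, 1)$. Uniqueness is not an issue, because the condition requires a strict majority of $|\inputters^{6, i} \cap \gawake[0]^{i}|$, and the sets $\inputters^{4,i}_y \setminus \equivs^{6,i}$ are disjoint across distinct $y$.

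The key step is a set inclusion: $(\ins \cap \gawake[2]^{i}) \setminus \advs{t_s}{t_s + 6} \subseteq (\inputters^{4, i}_{x} \setminus \equivs^{6, i}) \cap \gawake[2]^{i}$. Take $p_k$ in the left-hand set. Then $p_k$ is honest and awake at round $t_s$ with input in $\valspace$, so by hypothesis its input is $x$, and it multicasts $\signedinputmsg[k]{x}$ at round $0$.

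We then show $p_k \in \inputters^{4,i}_x$. By synchrony, $p_i$ receives $p_k$'s message by round $1$, hence by round $4$. This needs only that $p_k$ sent it at round $0$ and that $p_i$ is awake through round $4$, so $p_k \in \inputters^{4,i}_x$.

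Next we show $p_k \notin \equivs^{6,i}$. Since $p_k \notin \advs{t_s}{t_s+6}$, it is never corrupt while awake during $[t_s, t_s+6]$, so it signs no second $\inputmsgtype$ message in that window. It was also not corrupt at earlier rounds, because the adversary is immobile and $p_k$ is honest at $t_s$, so no conflicting pre-signed message exists. Signature unforgeability under the external adversary model (the oracle $\osig$ only answers awake $p_k$, which is honest) then gives $p_k \notin \equivs^{6,i}$ except with negligible probability. Membership in $\gawake[2]^i$ is immediate.

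Combining the inclusion with \cref{lem:immobile-ins-majority} gives
\[
|(\inputters^{4, i}_{x} \setminus \equivs^{6, i}) \cap \gawake[2]^{i}| \geq |(\ins \cap \gawake[2]^{i}) \setminus \advs{t_s}{t_s + 6}| > |\inputters^{6, i} \cap \gawake[0]^{i}|/2 ,
\]
which is exactly the grade-$1$ condition. The main subtlety is the non-equivocation argument. We must make sure an honest node cannot appear to send two inputs, which relies on the immobile adversary and the oracle model. We also need \cref{lem:immobile-ins-majority} to apply, which requires the very-stable input assumption, so that assumption is carried along as in the lemma.
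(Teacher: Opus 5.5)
Your proposal is correct and follows the paper's proof. You establish $(\ins \cap \gawake[2]^{i}) \setminus \advs{t_s}{t_s + 6} \subseteq (\inputters^{4, i}_{x} \setminus \equivs^{6, i}) \cap \gawake[2]^{i}$ and combine it with \cref{lem:immobile-ins-majority}, as the paper does; your added justifications for non-equivocation (immobility rules out conflicting pre-signed inputs) and for the grade-$1$ value being unique fill in steps the paper leaves implicit.
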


\begin{proof}
    Nodes in $\ins$ are honest and awake at round $0$, so they send $x$, which $p_i$ receives by round $4$ in $\inputters^{4, i}_{x}$.
    For the nodes in $\gawake[2]^{i}$ that remain honest until round $6$, they do not equivocate; by \cref{lem:immobile-ins-majority} ($\ins$-majority), we get
    $|(\inputters^{4, i}_{x} \setminus \equivs^{6, i}) \cap \gawake[2]^{i}| \geq |\ins \cap (\gawake[2]^{i} \setminus \advs{t_s}{t_s + 6})| > |\inputters^{6, i} \cap \gawake[0]^{i}| / 2$,
    and it follows that $p_i$ outputs $(x, 1)$.
\end{proof}

\begin{lemma}[Integrity]
    \label{lem:immobile-ga-integrity}
    If no node in $\ins$ inputs $x \in \valspace$, then no honest node $p_i$ outputs $(x, g)$ for every $g \in \{0, 1\}$.
\end{lemma}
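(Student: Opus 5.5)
The approach is to show that, for an honest node $p_i$ that outputs at all (so it has been awake since round $0$, and hence is \stable[$t_s$] in the relevant sense, lying in $\hons{t_s}{t_s+6}$), neither output condition of \cref{fig:ga2} can hold for a value $x$ that no node in $\ins$ input. Both conditions require a strict majority, among a filtered set of inputters, of non-equivocating $x$-inputters. So it suffices to show that the $x$-inputters are a minority of the relevant denominator. The key tool is \cref{lem:immobile-ins-majority} ($\ins$-majority): the honest, non-equivocating nodes of $\ins$ in $\gawake[2]^{i}$ outnumber half of $\inputters^{6,i}\cap\gawake[0]^{i}$.

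First, characterize who can appear in $\inputters^{t,i}_{x}\cap\gawake[0]^{i}$. By asleep exclusion of $\gw3$, every node in $\gawake[0]^{i}$ was awake after $t_s$. An input for $x$ signed by a node $p_k$ must have been produced while $p_k$ was awake, and $p_i$ only considers inputs it received during the protocol. If $p_k$ was honest and awake at round $t_s$ with an input, then $p_k \in \ins$ and it signs only its own input, which is not $x$. Otherwise $p_k$ is corrupt, so $p_k \in \advs{t_s}{t_s+6}$. One subtlety needs handling here: a node that was corrupt earlier and pre-signed $x$ but is asleep throughout the protocol is excluded from $\gawake[0]^{i}$, and hence from every $\gawake[g]^{i}$ by graded delivery. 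Therefore
\[
    \inputters^{t,i}_{x}\cap\gawake[0]^{i} \subseteq \advs{t_s}{t_s+6}\quad\text{for } t\le 6.
\]

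For the grade-$1$ condition, $\gawake[2]^{i}\subseteq\gawake[0]^{i}$ gives that the numerator is at most $|\advs{t_s}{t_s+6}\cap\inputters^{6,i}\cap\gawake[0]^{i}|$. Meanwhile $\inputters^{6,i}\cap\gawake[0]^{i}$ contains the disjoint set $(\ins\cap\gawake[2]^{i})\setminus\advs{t_s}{t_s+6}$, whose size exceeds half the total by \cref{lem:immobile-ins-majority}. The corrupt part is therefore less than half, and the grade-$1$ condition fails.

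The grade-$0$ condition is the main obstacle, because its denominator is $\inputters^{5,i}\cap\gawake[1]^{i}$ rather than the round-$6$ set. I will argue that the honest $\ins$-nodes in $\gawake[2]^{i}$ are also in $\inputters^{5,i}\cap\gawake[1]^{i}$. They sent inputs at round $0$, so $p_i$ has them by round $4$, and $\gawake[2]^{i}\subseteq\gawake[1]^{i}$ holds by graded delivery applied with $p_j=p_i$. The corrupt nodes in $\inputters^{5,i}\cap\gawake[1]^{i}$ are at most $\advs{t_s}{t_s+6}\cap\inputters^{6,i}\cap\gawake[0]^{i}$. Writing $H$ for the honest part and $C$ for the corrupt part of $\inputters^{6,i}\cap\gawake[0]^{i}$, \cref{lem:immobile-ins-majority} gives $|H|>|C|$. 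The $x$-numerator is at most $|C'|$, the corrupt part of the round-$5$ denominator, and that denominator has size at least $|H|+|C'|>2|C'|$. So the grade-$0$ condition fails as well. Since neither condition holds, $p_i$ does not output $(x,g)$ for $g\in\{0,1\}$; it either outputs $(\bot,0)$ or outputs some other value.
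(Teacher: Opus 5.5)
Your argument is correct and is essentially the paper's. Both rest on $\ins$-majority: the paper phrases it as a pigeonhole step (a majority of non-equivocating $x$-inputters in $\gawake[2]^{i}$ would have to meet $S = (\ins \cap \gawake[2]^{i}) \setminus \advs{t_s}{t_s+6}$, whose members never send $x$), and you as its contrapositive (the $x$-inputters lie in $\advs{t_s}{t_s+6}$, hence are disjoint from $S$ and cover less than half of $\inputters^{6,i} \cap \gawake[0]^{i}$).

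One small tightening for grade $0$: you only showed that $S$, not all of $H$, lies in $\inputters^{5,i} \cap \gawake[1]^{i}$. Either run the count with $S$ (since $S \subseteq H$, $\ins$-majority gives $|S| > |C| \geq |C'|$), or more simply note $S \subseteq \inputters^{5,i} \cap \gawake[1]^{i} \subseteq \inputters^{6,i} \cap \gawake[0]^{i}$, so the grade-$1$ bound transfers verbatim; this is why the paper calls that case identical.
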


\deferredproof{ga2proofs}{lem:immobile-ga-integrity}
\begin{defer}{ga2proofs}
    \begin{proof}[Proof (of \cref{lem:immobile-ga-integrity})]
        Note that for $p_i$ to output, it must be a stable node.
        For the sake of contradiction, assume node $p_i$ outputs $(x, 1)$, implying it received more than $|\inputters^{6, i} \cap \gawake[0]^{i}|/2$ inputs of $x$, i.e., $|(\inputters^{4, i}_{x} \setminus \equivs^{6, i}) \cap \gawake[2]^{i}| > |\inputters^{6, i} \cap \gawake[0]^{i}|/2$.
        By \cref{lem:immobile-ins-majority} ($\ins$-majority), we have
        $|\ins \cap (\gawake[2]^{i} \setminus \advs{t_s}{t_s + 6})| > |\inputters^{6, i} \cap \gawake[0]^{i}| / 2$.
        So, for $p_i$ to output $(x, 1)$, it must have received $x$ from a node in $\ins \cap (\gawake[2]^{i} \setminus \advs{t_s}{t_s + 6})$.
        However, no node in $\ins$ inputs $x$, so such a node would have to equivocate to send $x$ if it is corrupt after round $0$, contradicting that the input is in $\inputters^{4, i}_{x} \setminus \equivs^{6, i}$.
        Thus, no honest node outputs $(x, 1)$.
        The proof is identical for grade $0$, so we omit it.
    \end{proof}
\end{defer}

\begin{theorem}
    \label{thm:immobile-ga2}
    Let $t_s \geq 0$ be the start round.
    If every \vstable node has input $x \neq \noinput$ (very-stable input, \cref{def:immobile-ga-vstable-input}), then the protocol in \cref{fig:ga2} implements graded agreement with start round $t_s$ and $\gmax = 2$ for schedules admissible in the fully fluctuating sleepy model with $\sbl \geq 6$ and $\rho \geq 1$.
\end{theorem}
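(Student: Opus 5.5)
The theorem is an assembly of the lemmas already proved in this subsection, so the plan is to match each property of \cref{def:graded-agreement} (with $\gmax = 2$) to its lemma and check that the hypotheses line up. First, check that $\gw3$, run inside \cref{fig:ga2} from round $t_s$, satisfies the properties of \cref{def:graded-wakeness}. This holds by \cref{thm:immobile-gw3}, since $\sbl \geq 6$. Next, note that every \stable node is awake and honest throughout $[t_s, t_s+6]$, so it reaches round $6$ awake since round $0$ and outputs exactly one pair. Conversely, any honest node that outputs is awake since round $0$ through round $6$, hence lies in $\hons{t_s}{t_s+6}$. This second fact is what lets the lemmas stated for \stable nodes cover ``honest'' nodes in consistency and integrity.

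Then dispatch the four properties. \emph{Graded delivery} for $g = 1$ (the only grade $>0$) follows from \cref{lem:immobile-ga-gd}: every \stable node outputs $(x, g')$ with $g' \leq 1$, which is the required range $0 \leq g' \leq 2$ truncated to the valid grades. \emph{Consistency} is \cref{lem:immobile-ga-consistency}. \emph{Validity}: the hypothesis ``every honest and awake node at $t_s$ with an input in $\valspace$ inputs $x$'' says that every node in $\ins$ inputs $x$. \cref{lem:immobile-ga-validity} then gives output $(x, 1) = (x, \gmax - 1)$. \emph{Integrity} is \cref{lem:immobile-ga-integrity}, whose hypothesis is again a restatement of the definition's hypothesis in terms of $\ins$.

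The remaining work is to ensure the global assumptions of the lemmas are available. Very-stable input is needed for \cref{lem:immobile-ins-majority}, which underpins both validity and integrity. Admissibility with $\rho \geq 1$ and $\sbl \geq 6$ is used for the inequality $|\hons{t_s-\rec}{t_s+\sbl}| > |\advs{t_s}{t_s+6}|$, which follows by monotonicity since $\advs{t_s}{t_s+6} \subseteq \advs{t_s}{t_s+\sbl}$. Finally, the error probabilities inherited from $\gw3$ (nonce guessing, $\poly(n)/2^{\lambda}$) and from signature unforgeability must be collected. A union bound over the polynomially many nodes gives negligible total failure probability.

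I expect the main subtlety to be the case of ``$\bot$'' inputs. Such inputs are in $\valspace \cup \{\bot\}$ but not in $\valspace$, so the meaning of $\ins$ and of the $y$-inputter counts must be interpreted consistently: $\bot$ either counts as an input value or is excluded from $\ins$. I will fix $\ins$ as the set of honest awake nodes at $t_s$ whose input is not $\noinput$, which is how the lemmas use it. I will then check that the validity and integrity hypotheses in \cref{def:graded-agreement} translate exactly to those of \cref{lem:immobile-ga-validity,lem:immobile-ga-integrity}. If there is a mismatch, I would restrict validity to $x \in \valspace$, matching the protocol's output rule, which only considers $y \in \valspace$.
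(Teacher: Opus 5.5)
Your proposal is correct and matches the paper's proof. The paper likewise observes that $\sbl \geq 6$ places every $t_s$-stable node in $\hons{t_s}{t_s + 6}$, and then reads graded delivery, consistency, validity, and integrity directly off \cref{lem:immobile-ga-gd,lem:immobile-ga-consistency,lem:immobile-ga-validity,lem:immobile-ga-integrity}.

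Your $\bot$ caveat is well placed, and the direction you chose is the one the lemmas need. Integrity's counting step requires $\bot$-inputs to be counted in $\inputters^{6}$, so that $\ins \subseteq \inputters^{6}$. Validity must be read as ``every node in $\ins$ inputs the same $x \in \valspace$,'' which is how the paper itself later invokes it. That reading is an interpretation of \cref{def:graded-agreement} rather than a literal restatement, because the definition's wording leaves honest $\bot$-inputters unconstrained. Restricting to $x \in \valspace$ alone would therefore not close the mismatch.
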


\begin{proof}
    Since $\sbl \geq 6$, each \stable node $p_i \in \hons{t_s}{t_s + 6}$, so we have \emph{graded delivery} by \cref{lem:immobile-ga-gd}, \emph{consistency} by \cref{lem:immobile-ga-consistency}, \emph{validity} by \cref{lem:immobile-ga-validity}, and \emph{integrity} by \cref{lem:immobile-ga-integrity}.
\end{proof}

\begin{ifnotdeferred}{ga3proofs}
    \subsubsection{Graded Agreement with 3 Grades}
\end{ifnotdeferred}
\begin{defer}{ga3proofs}
    \label{immobile:ga3}

    We build $3$-grade graded agreement from two sequential instances of $2$-grade graded agreement, $\ga2_0$ and $\ga2_1$, as shown in \cref{fig:immobile-ga3}.
    A node without an output from $\ga2_0$ inputs $\noinput$ into $\ga2_1$.
    We prove that the protocol satisfies the properties of \cref{def:graded-agreement} with $\gmax = 3$ for $\rec \geq 6$, $\sbl \geq 12$, and $\rho \geq 1$ in the fully fluctuating sleepy model, again assuming very-stable input (\cref{def:immobile-ga-vstable-input}).
    Since $\rec \geq 6$, every \vstable node for $\ga2_1$'s start round $6$ has been awake since round $0$; it therefore has an output $y_0 \neq \noinput$ from $\ga2_0$ that it inputs into $\ga2_1$, so $\ga2_1$ also has very-stable input.

    \begin{figure}[t]
        \centering
        \begin{tcolorbox}
            A node $p_i$ with input $x$ executes the below steps starting from a round $t_s$.
            Let $t$ denote the rounds since $t_s$, where $t_s + t$ is the absolute round in the entire execution.
            \begin{protocolsteps}
                \item[$(t = 0 ... 6):$] Execute $\ga2_{0}$ with input $x$.
                \item[$(t = 6):$] Let $(y_0, g_0) \gets $ the output of $\ga2_{0}$ if it exists; otherwise, let $(y_0, g_0) \gets (\noinput, 0)$.
                \item[$(t = 6 ... 12):$] Execute $\ga2_{1}$ with input $y_0$.
                \item[$(t = 12):$] If awake since round $0$:\leavevmode
                \begin{itemize}
                    \item Let $(y_1, g_1) \gets $ the output of $\ga2_{1}$.
                    \item If $g_0 = 1$: output $(y_0, 2)$.
                    \item Otherwise, output $(y_1, g_1)$.
                \end{itemize}
            \end{protocolsteps}
        \end{tcolorbox}
        \caption{Graded agreement with $3$ grades for a node $p_i$.}
        \label{fig:immobile-ga3}
    \end{figure}

    \begin{lemma}[Graded Delivery]
        \label{lem:immobile-ga3-gd}
        For any $g > 0$, if a \stable node $p_i$ outputs $(x, g)$, then every \stable node $p_j$ outputs $(x, g')$ for $g - 1 \leq g' \leq g + 1$.
    \end{lemma}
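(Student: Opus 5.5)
The proof is a case analysis on the grade $g \in \{1,2\}$ output by $p_i$, pulling the guarantees of $\ga2_0$ and $\ga2_1$ back to the $3$-grade protocol. Both instances have very-stable input: for $\ga2_0$ by assumption, and for $\ga2_1$ by the remark that $\rec \geq 6$ forces every \vstable[$(t_s+6)$] node to be awake since round $0$ and hence to hold an output of $\ga2_0$. So \cref{thm:immobile-ga2} applies to both. Since $\sbl \geq 12$, every \stable[$t_s$] node is \stable[$t_s$] for $\ga2_0$ and \stable[$(t_s+6)$] for $\ga2_1$. Hence $p_i$ and $p_j$ output in both instances and execute the final step at round $12$.

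\emph{Case $g = 2$.} Then $p_i$ got $(x,1)$ from $\ga2_0$. By graded delivery of $\ga2_0$ (\cref{lem:immobile-ga-gd}), $p_j$ outputs $(x, g_0^j)$ from $\ga2_0$ with $g_0^j \in \{0,1\}$. If $g_0^j = 1$, $p_j$ outputs $(x,2)$ and we are done. Otherwise we must show $p_j$'s $\ga2_1$ output is $(x,1)$.

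Here I would use validity of $\ga2_1$. Take the set $\ins$ of $\ga2_1$: honest nodes awake at round $6$ with input $\neq \noinput$. Each such node was awake since round $0$ and hence \stable[$t_s$] for $\ga2_0$. Honest nodes asleep at some round in $[0,6]$ have no $\ga2_0$ output and input $\noinput$, so they are excluded. A subtlety is that a node in $\ins$ may be awake since round $0$ but not awake through round $6$ of the outer execution. This should follow from the sleep model: a node that is awake at round $6$ and was awake at round $0$ but slept in between does not count as ``awake since round $0$.'' Granting this, every node of $\ins$ outputs $x$ from $\ga2_0$ by graded delivery applied to $p_i$'s grade-$1$ output, and consistency (\cref{lem:immobile-ga-consistency}) covers any remaining honest output. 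So every node in $\ins$ inputs $x$ into $\ga2_1$. Validity (\cref{lem:immobile-ga-validity}) then gives that every \stable node, in particular $p_j$, outputs $(x,1)$ from $\ga2_1$. In every subcase $p_j$ outputs $(x,g')$ with $g' \in \{1,2\}$.

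\emph{Case $g = 1$.} Then $g_0^i = 0$ and $p_i$ got $(x,1)$ from $\ga2_1$. By graded delivery of $\ga2_1$, $p_j$'s $\ga2_1$ output is $(x,g_1^j)$ with $g_1^j \in \{0,1\}$. If $g_0^j = 0$, $p_j$ outputs exactly this, with grade in $\{0,1\}$. If $g_0^j = 1$, $p_j$ outputs $(y_0^j, 2)$, and we need $y_0^j = x$. Applying the argument of the $g=2$ case with $p_j$ in place of $p_i$, every \stable node, including $p_i$, outputs $(y_0^j,1)$ from $\ga2_1$. Hence $y_0^j = x$, and the grade $2$ lies within $g+1$.

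The main obstacle is the validity step for $\ga2_1$: identifying exactly who lies in $\ga2_1$'s $\ins$ and confirming each such node outputs $x$ from $\ga2_0$. If the ``awake since round $0$'' bookkeeping is delicate, I would fall back on integrity of $\ga2_1$ together with consistency of $\ga2_0$. Consistency guarantees that no honest node holds a value other than $x$, and integrity then excludes any other value from $p_j$'s $\ga2_1$ output. The grade-$1$ claim, however, still requires validity.
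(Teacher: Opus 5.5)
Your proof is correct and follows the paper's. For $g=2$ you combine graded delivery and consistency of $\ga2_0$ so that every honest non-$\noinput$ input to $\ga2_1$ equals $x$, then apply validity of $\ga2_1$; for $g=1$ you use graded delivery of $\ga2_1$ and treat the subcase $g_0^j=1$ separately. The only deviations are minor: in that subcase you rerun the validity argument (so $p_i$ itself outputs $(y_0^j,1)$ from $\ga2_1$) where the paper invokes integrity of $\ga2_1$, and both give $y_0^j=x$; and your claim that every node with a non-$\noinput$ input to $\ga2_1$ is stable for $\ga2_0$ is unneeded and need not hold (such a node need only be awake through round $6$), since consistency of $\ga2_0$ already fixes every honest node's $\ga2_0$ output.
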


    \begin{proof}
        Note that a node only outputs if it has been awake since round $0$, so it is \stable for both $\ga2_0$ and $\ga2_1$.
        First, suppose $g = 2$, so that $p_i$ outputs $(x, 1)$ from $\ga2_0$.
        By graded delivery of $\ga2_0$, every \stable node $p_j$ outputs $(x, g_0)$ from $\ga2_0$ for $g_0 \leq 1$, and, by consistency of $\ga2_0$, no honest node outputs a value other than $x$ from $\ga2_0$.
        Every honest and awake node at round $6$ with an input $\neq \noinput$ therefore inputs $x$ into $\ga2_1$, so by validity of $\ga2_1$, node $p_j$ outputs $(x, 1)$ from $\ga2_1$.
        Node $p_j$ then outputs $(x, 2)$ if $g_0 = 1$ and $(x, 1)$ otherwise, both within $g - 1 \leq g' \leq g + 1$.

        Now suppose $g = 1$, so that $p_i$ outputs $(x, 1)$ from $\ga2_1$.
        By graded delivery of $\ga2_1$, every \stable node $p_j$ outputs $(x, g_1)$ from $\ga2_1$ for $g_1 \leq 1$, so $p_j$ outputs $(x, 1)$ or $(x, 0)$ unless it obtains grade $1$ from $\ga2_0$.
        In the latter case, $p_j$ outputs $(y_0, 2)$ where $(y_0, 1)$ is its output from $\ga2_0$; by consistency of $\ga2_0$, every honest and awake node at round $6$ with an input $\neq \noinput$ inputs $y_0$ into $\ga2_1$, so by integrity of $\ga2_1$, no honest node outputs a value other than $y_0$ from $\ga2_1$, giving $y_0 = x$.
        Every \stable node $p_j$ thus outputs $(x, g')$ for $g' \leq 2$, again within $g - 1 \leq g' \leq g + 1$.
    \end{proof}

    \begin{lemma}[Consistency]
        \label{lem:immobile-ga3-consistency}
        If an honest node $p_i$ outputs $(x, g)$ for $g > 0$, then no honest node $p_j$ outputs $(x', \cdot)$ for $x' \neq x$.
    \end{lemma}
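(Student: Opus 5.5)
The plan mirrors the 2-grade case (\cref{lem:immobile-ga-consistency}): reduce consistency to graded delivery (\cref{lem:immobile-ga3-gd}) by noting that any honest node that outputs is \stable. In \cref{fig:immobile-ga3}, a node outputs at round $12$ only if it has been awake since round $0$. It also stays honest throughout, since consistency only concerns nodes that are honest when they output. So both $p_i$ and $p_j$ lie in $\hons{t_s}{t_s+12}$ and are \stable for $\ga3$; since they are \stable for $\ga3$, they are also \stable for $\ga2_0$ (start $t_s$) and $\ga2_1$ (start $t_s+6$).

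First I would apply \cref{lem:immobile-ga3-gd} directly. Since $p_i$ is \stable and outputs $(x,g)$ with $g>0$, every \stable node outputs $(x,g')$ for some $g'$. In particular $p_j$, being \stable, outputs $(x,\cdot)$. Each node outputs a single pair, so $p_j$ does not output $(x',\cdot)$ for $x'\neq x$.

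For robustness, I would also note the direct case split in case one prefers to avoid relying on the \stable quantifier. If $g=2$, $p_i$ got $(x,1)$ from $\ga2_0$. By consistency of $\ga2_0$, every honest output of $\ga2_0$ carries $x$. If $p_j$ outputs $(y_0,2)$, then $y_0=x$. Otherwise $p_j$ outputs from $\ga2_1$, whose honest inputs are all $x$ or $\noinput$, so integrity of $\ga2_1$ forces the value $x$. If $g=1$, $p_i$ got $(x,1)$ from $\ga2_1$. If $p_j$ also outputs from $\ga2_1$, consistency of $\ga2_1$ gives $x$. If instead $p_j$ outputs $(y_0,2)$, the integrity argument already given in \cref{lem:immobile-ga3-gd} yields $y_0=x$.

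The only delicate point is confirming that outputting honest nodes are indeed \stable for both sub-instances. This is also where very-stable input of $\ga2_1$ is needed for its integrity, and it was already established from $\rec\ge 6$ before \cref{lem:immobile-ga3-gd}. I would cite that remark rather than reprove it.
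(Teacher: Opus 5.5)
Your main argument is correct, but it takes a different route from the paper. You reduce consistency to the already-proved 3-grade graded delivery (\cref{lem:immobile-ga3-gd}), just as the 2-grade consistency lemma reduces to 2-grade graded delivery. This is not circular, because that lemma uses only the properties of $\ga2_0$ and $\ga2_1$.

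The paper instead argues directly from the sub-instances, and it splits on a global event rather than on $p_i$'s grade:
\begin{itemize}
\item If some honest node obtains $(y,1)$ from $\ga2_0$, then consistency of $\ga2_0$ makes every honest non-$\noinput$ input to $\ga2_1$ equal to $y$. Validity of $\ga2_1$ then gives every stable node $(y,1)$ there, so both output branches yield $y$.
\item Otherwise no honest node outputs grade $2$. Then $p_i$ and $p_j$ both output their $\ga2_1$ values, and consistency of $\ga2_1$ finishes.
\end{itemize}
Your route makes consistency a two-line corollary, but it pushes all the work into the graded-delivery lemma, which performs a very similar case analysis organized by $p_i$'s grade. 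The paper's route is self-contained relative to the $\ga2$ guarantees and does not depend on \cref{lem:immobile-ga3-gd}.

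Your secondary case split has a gap. Take the $g=2$ branch where $p_j$ outputs from $\ga2_1$. Integrity of $\ga2_1$ only excludes values in $\valspace$ other than $x$. It says nothing about $\bot$, so it does not rule out $p_j$ obtaining $(\bot,0)$ from $\ga2_1$ and hence outputting $(\bot,0)$ from $\ga3$. Integrity cannot be extended to $\bot$: $\ga2$ outputs $(\bot,0)$ whenever no value reaches a majority, even if no honest node inputs $\bot$.

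What you need there is validity of $\ga2_1$, which is exactly what the paper uses. Validity is stated only for stable nodes, so this split does not actually avoid relying on the stability quantifier, as you intended.

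One more point: an honest outputting node was honest throughout $[t_s, t_s+12]$ because the adversary is immobile ($\adv_t \subseteq \adv_{t+1}$), not merely because consistency concerns nodes honest at output time.

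Since your main argument stands on its own, neither issue affects its correctness.
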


    \begin{proof}
        Nodes output the value of their $\ga2_0$ output at grade $2$ and the value of their $\ga2_1$ output at grades $1$ and $0$.
        First, suppose some honest node outputs $(y, 1)$ from $\ga2_0$.
        By consistency of $\ga2_0$, no honest node outputs a value other than $y$ from $\ga2_0$, so every honest and awake node at round $6$ with an input $\neq \noinput$ inputs $y$ into $\ga2_1$.
        By validity of $\ga2_1$, every \stable node outputs $(y, 1)$ from $\ga2_1$, so both branches only produce the value $y$, and $x = x' = y$.
        Otherwise, no honest node outputs grade $1$ from $\ga2_0$, so no honest node outputs grade $2$; since $p_i$ and $p_j$ only output because they have been awake since round $0$, both are \stable for both $\ga2_0$ and $\ga2_1$ and therefore obtain outputs from $\ga2_1$, with $p_i$ obtaining $x$ from grade $1$ of $\ga2_1$ and $p_j$ obtaining $x'$ from $\ga2_1$; by consistency of $\ga2_1$, we have $x = x'$.
    \end{proof}

    \begin{lemma}[Validity]
        \label{lem:immobile-ga3-validity}
        If every honest and awake node at round $t_s$ with an input $\neq \noinput$ inputs $x$, then every \stable node $p_i$ outputs $(x, 2)$.
    \end{lemma}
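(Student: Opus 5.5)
The plan is to reduce directly to validity of the first instance $\ga2_0$. The hypothesis of the lemma is exactly the hypothesis of validity of $\ga2_0$ at start round $t_s$: every node in $\ins$ (honest and awake at round $t_s$ with input $\neq \noinput$) inputs $x$. Very-stable input for $\ga2_0$ is inherited from the $\ga3$ instance, since the inputs are the same. With $\sbl \geq 12 \geq 6$ and $\rho \geq 1$, \cref{thm:immobile-ga2} therefore applies to $\ga2_0$.

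Next, I would fix a \stable node $p_i$ for $\ga3$, meaning $p_i \in \hons{t_s}{t_s + \sbl}$ with $\sbl \geq 12$. In particular $p_i$ is awake and honest throughout rounds $0$ to $6$, so it is \stable for $\ga2_0$. By \cref{lem:immobile-ga-validity} applied to $\ga2_0$, $p_i$ outputs $(x, 1)$ from $\ga2_0$, so at round $6$ it sets $(y_0, g_0) = (x, 1)$.

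Finally, $p_i$ is also awake since round $0$ through round $12$, so at $t = 12$ it reaches the output step. Since $g_0 = 1$, it takes the first branch and outputs $(y_0, 2) = (x, 2)$. The output of $\ga2_1$ plays no role here, so we need neither validity of $\ga2_1$ nor its very-stable input condition, although the latter was established above anyway.

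The only delicate point is the bookkeeping: checking that stability for $\ga3$ implies stability for $\ga2_0$, which it does because the interval is nested, and that the hypotheses of \cref{thm:immobile-ga2} hold at the sub-instance's start round, which coincides with $t_s$. The statement then holds except with the negligible failure probability inherited from $\gw3$ inside $\ga2_0$.
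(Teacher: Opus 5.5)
Your proposal is correct and matches the paper's proof: apply validity of $\ga2_0$ to get $(x,1)$ at every \stable node, so $g_0 = 1$ and the node outputs $(x,2)$. Your extra bookkeeping (the inherited very-stable input, stability at the shared start round $t_s$, and the node still being awake at round $12$) is accurate and is left implicit in the paper.
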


    \begin{proof}
        By validity of $\ga2_0$, every \stable node $p_i$ outputs $(x, 1)$ from $\ga2_0$.
        Therefore, node $p_i$ has $g_0 = 1$ and outputs $(x, 2)$.
    \end{proof}

    \begin{lemma}[Integrity]
        \label{lem:immobile-ga3-integrity}
        If no honest and awake node at round $t_s$ inputs $x$, then no honest node $p_i$ outputs $(x, \cdot)$.
    \end{lemma}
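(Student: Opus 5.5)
The plan is to apply integrity of $\ga2_0$ and then integrity of $\ga2_1$. Suppose no honest and awake node at round $t_s$ inputs $x$. By integrity of $\ga2_0$ (\cref{lem:immobile-ga-integrity}, which applies because $\ga2_0$ has very-stable input by assumption), no honest node outputs $(x, \cdot)$ from $\ga2_0$. An honest node outputs $(y_0, 2)$ only when $y_0$ is its $\ga2_0$ output, so no honest node outputs $(x, 2)$.

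For grades $0$ and $1$, I then argue that no honest and awake node at round $t_s + 6$ inputs $x$ into $\ga2_1$. Such a node's input is either its $\ga2_0$ output $y_0$ or $\noinput$. If it has a $\ga2_0$ output, that output is not $x$ by the previous step. If it has no output, its input is $\noinput \notin \valspace$. One subtlety needs a check: a node honest at round $6$ might have been corrupt before round $6$. In the immobile model, though, $\adv_t \subseteq \adv_{t+1}$, so a node honest at round $6$ was honest throughout $[t_s, t_s+6]$. Any output it obtained from $\ga2_0$ is therefore an honest node's output, and integrity of $\ga2_0$ covers it.

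Next I invoke integrity of $\ga2_1$. Its hypothesis of very-stable input was already established in the preamble to this subsection from $\rec \geq 6$. It yields that no honest node outputs $(x, \cdot)$ from $\ga2_1$, so no honest node outputs $(x, 1)$ or $(x, 0)$ through the second branch.

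The remaining case is $x = \bot$, which an honest node might output as $(\bot, 0)$ through the ``otherwise'' branch. I would read the statement as quantifying over $x \in \valspace$, as in \cref{def:graded-agreement}, which makes this case vacuous. I would state that restriction explicitly.

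The main obstacle is confirming that ``honest'' is preserved correctly across the two instances, namely that an honest node's $\ga2_1$ input is an honest $\ga2_0$ output. Immobility handles this.
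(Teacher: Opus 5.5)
Your proposal is correct and follows essentially the same route as the paper. Integrity of $\ga2_0$ rules out grade-$2$ outputs of $x$ and shows that no honest node awake at round $t_s+6$ inputs $x$ into $\ga2_1$; integrity of $\ga2_1$ then rules out grades $0$ and $1$. Your explicit remarks that immobility makes an honest node's $\ga2_1$ input an honest $\ga2_0$ output, and that $x$ must range over $\valspace$ (since $(\bot,0)$ can legitimately be output), are left implicit in the paper's argument and are worth stating.
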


    \begin{proof}
        Node $p_i$ outputs $(x, 2)$ only with an output $(x, 1)$ from $\ga2_0$; by integrity of $\ga2_0$, no honest node outputs $x$ from $\ga2_0$, so no honest node outputs $(x, 2)$.
        Node $p_i$ outputs $(x, 1)$ or $(x, 0)$ only with an output $(x, \cdot)$ from $\ga2_1$.
        Recall that every honest node with an input $\neq \noinput$ into $\ga2_1$ inputs its output from $\ga2_0$; since no honest node outputs $x$ from $\ga2_0$, no honest and awake node at round $6$ inputs $x$ into $\ga2_1$.
        By integrity of $\ga2_1$, no honest node outputs $x$ from $\ga2_1$, so no honest node outputs $(x, 1)$ or $(x, 0)$.
    \end{proof}
\end{defer}

\begin{ifdeferred}{ga3proofs}
    Our agreement protocol in \cref{immobile:ba} requires a third grade.
    We obtain $3$-grade graded agreement, which we denote $\ga3$, by running two instances of $\ga2$ back to back, where each node inputs the value it outputs from the first instance into the second.
    This construction is standard, so we defer it and its analysis to \cref{sec:deferred-ga3proofs}, where \cref{thm:immobile-ga3} shows that $\ga3$ implements graded agreement (\cref{def:graded-agreement}) with $\gmax = 3$ for $\rec \geq 6$, $\sbl \geq 12$, and $\rho \geq 1$.
\end{ifdeferred}
\begin{defer}{ga3proofs}
    \begin{theorem}
        \label{thm:immobile-ga3}
        Let $t_s \geq 0$ be the start round.
        If every \vstable node has input $x \neq \noinput$ (very-stable input, \cref{def:immobile-ga-vstable-input}), then the protocol in \cref{fig:immobile-ga3} implements graded agreement (\cref{def:graded-agreement}) with start round $t_s$ and $\gmax = 3$ for schedules admissible in the fully fluctuating sleepy model with $\rec \geq 6$, $\sbl \geq 12$, and $\rho \geq 1$.
    \end{theorem}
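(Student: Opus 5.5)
The proof follows the same template as \cref{thm:immobile-ga2}. I will first check that the hypotheses of \cref{thm:immobile-ga2} hold for both inner instances, and then invoke \cref{lem:immobile-ga3-gd,lem:immobile-ga3-consistency,lem:immobile-ga3-validity,lem:immobile-ga3-integrity}, which were proved assuming $\ga2_0$ and $\ga2_1$ each satisfy \cref{def:graded-agreement} with $\gmax = 2$.

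\textbf{Step 1: the instance $\ga2_0$.} It starts at round $t_s$ with the same inputs as the outer protocol. Very-stable input for the outer protocol is exactly very-stable input for $\ga2_0$. Since $\sbl \geq 12 \geq 6$ and $\rho \geq 1$, \cref{thm:immobile-ga2} applies directly.

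\textbf{Step 2: the instance $\ga2_1$.} It starts at round $t_s + 6$, so two things must be checked.
\begin{enumerate}
\item \emph{Very-stable input.} A node that is \vstable[$(t_s+6)$] lies in $\hons{t_s + 6 - \rec}{t_s + 6 + \sbl}$. Since $\rec \geq 6$, this set is contained in $\hons{t_s}{t_s + 6}$, so the node is \stable[$t_s$] and obtains an output $(y_0, g_0)$ from $\ga2_0$. Hence its input $y_0$ to $\ga2_1$ is not $\noinput$. One subtlety: for $t < 0$ the schedule convention sets $\awake_t = \awake_0$ and $\hon_t = \hon_0$, so the containment also holds near round $0$.
\item \emph{Admissibility.} The admissibility condition is stated for every round $t$, so it holds in particular at $t = t_s + 6$ with the same $\rec$, $\sbl$ and $\rho$. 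Thus \cref{thm:immobile-ga2} applies to $\ga2_1$ with start round $t_s + 6$.
\end{enumerate}

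\textbf{Step 3: matching notions of stability.} The lemmas for $\ga3$ use ``\stable'' relative to $t_s$, meaning membership in $\hons{t_s}{t_s+\sbl}$ with $\sbl \geq 12$. Such a node lies in $\hons{t_s}{t_s+12}$. It is therefore \stable for $\ga2_0$, which needs only rounds $[t_s, t_s+6]$. It is also \stable for $\ga2_1$, since $\hons{t_s+6}{t_s+12}$ is all the instance uses and the stability interval of \cref{thm:immobile-ga2} can be taken with $\sbl = 6$ in its internal lemmas. In addition, such a node is awake since round $0$, so it produces an outer output.

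\textbf{Expected obstacle.} The main point needing care is this bookkeeping between the outer stability window and the inner windows. The inner proofs of \cref{thm:immobile-ga2} actually use only $\hons{t_s'}{t_s'+6}$ for an inner start round $t_s'$. For validity and integrity, however, they use admissibility at the inner start round. I would make this explicit: admissibility at $t_s+6$ gives $|\hons{t_s}{t_s+6+\sbl}| > |\advs{t_s+6}{t_s+6+\sbl}|$, which is what \cref{lem:immobile-ins-majority} needs for $\ga2_1$.

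Once Steps 1–3 are in place, \emph{graded delivery}, \emph{consistency}, \emph{validity} and \emph{integrity} follow from \cref{lem:immobile-ga3-gd,lem:immobile-ga3-consistency,lem:immobile-ga3-validity,lem:immobile-ga3-integrity} respectively. Each holds except with negligible probability, via a union bound over the two instances.
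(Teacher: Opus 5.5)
Your proposal is correct and matches the paper's proof. You use $\rec \geq 6$ to show that every node very-stable for $\ga2_1$'s start round $t_s+6$ was awake throughout $\ga2_0$ and so inputs $y_0 \neq \noinput$, apply \cref{thm:immobile-ga2} to both instances, and conclude via \cref{lem:immobile-ga3-gd,lem:immobile-ga3-consistency,lem:immobile-ga3-validity,lem:immobile-ga3-integrity}; your Step 3 observation (an outer $t_s$-stable node need not literally be $(t_s+6)$-stable for the same $\sbl$, but the inner lemmas only use $\hons{t_s'}{t_s'+6}$) usefully makes explicit what the paper compresses into ``stable for both $\ga2_0$ and $\ga2_1$.''
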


    \begin{proof}[Proof (of \cref{thm:immobile-ga3})]
        Since $\rec \geq 6$ and $\sbl \geq 12$, both $\ga2_0$ and $\ga2_1$ have very-stable input and implement graded agreement by \cref{thm:immobile-ga2}.
        We then have \emph{graded delivery} by \cref{lem:immobile-ga3-gd}, \emph{consistency} by \cref{lem:immobile-ga3-consistency}, \emph{validity} by \cref{lem:immobile-ga3-validity}, and \emph{integrity} by \cref{lem:immobile-ga3-integrity}.
    \end{proof}
\end{defer}

\subsection{Byzantine Agreement}
\label{immobile:ba}

\begin{figure}[t]
    \centering
    \begin{tcolorbox}
        A node $p_i$ with input $x$ starts execution at round $0$ and executes the below steps at the start of the round $13v$ for every view $v \geq 0$.
        Beyond the outputs of $\gw$ and $\ga$, $p_i$ maintains the following state variables with the following initial values:
        \begin{itemize}
            \item \emph{Candidate: } $x_C \gets x$.
            \item \emph{Lock: } $x_L \gets \bot$.
        \end{itemize}

        \begin{protocolsteps}
            \item[$(t = 0 ... 12):$]Execute $\ga3_{v}$ with input $x_C$.
            \item[$(t = 12):$] If awake since the beginning of view $v$:\leavevmode
            \begin{itemize}
                \item Let $(y, g) \gets $ the output of $\ga3_{v}$; if $y \neq \bot$, set $x_C \gets y$.
                \item If $g \geq 1$: set $x_L \gets y$; otherwise, set $x_L \gets \bot$.
                \item If $g = 2$ and not yet output: output $y$.
                \item Multicast $\signedproposemsg[i]{(x_C, \vrfo{v+1}, \pi)}$.
            \end{itemize}
            \item[$(t = 13):$]\leavevmode
            \begin{itemize}
                \item If $x_L = \bot$: set $x_{C} \gets $ the proposed value in $\valspace$ with the highest value VRF output for view $v + 1$.
                \item If not awake since the beginning of the view: set $x_{C} \gets \noinput$.
            \end{itemize}
        \end{protocolsteps}
        Note that $t = 13$ of view $v$ occurs alongside $t = 0$ of view $v + 1$.
    \end{tcolorbox}
    \caption{Byzantine agreement in the fully fluctuating sleepy model.}
    \label{fig:immobile-ba}
\end{figure}

Our agreement protocol in \cref{fig:immobile-ba} follows the phase-king approach \cite{BGP89}.
The protocol proceeds in a series of views, each calling an instance of $\ga3$ from \cref{immobile:ga3}.
For a view $v \geq 0$, we denote $\ga3_{v}$ as the $3$-grade graded agreement instance of view $v$.

For the starting view $0$, awake nodes pass their initial input (called the \emph{candidate} input) to $\ga3_{0}$.
For a node that remains awake, it updates its candidate input to the output value of $\ga3_{0}$.
If the grade is at least $1$, it also locks on that value; otherwise, it releases its lock.
If the grade is $2$, it outputs the value in the agreement protocol (but continues to participate to help other nodes).
It then multicasts and signs its candidate input in $\valspace$, in a $\proposemsgtype$ message, along with its VRF output and proof over the next view number.
Such a proposal always exists, since at least one \stable node exists by admissibility with $\rho \geq 1$ and proposes a candidate in $\valspace$.

At the end of the view, nodes without a lock update their candidate input to the proposal with the highest VRF output for the view.
This input is then passed on to the next $\ga3$ instance.
The adversary may pre-compute VRF outputs for many future rounds and then cherry-pick the corrupt node with the highest VRF output to be awake for the corresponding view.
This hurts the latency of the agreement protocol but does not affect safety, liveness, or validity. 

We prove that the agreement protocol satisfies the properties of \cref{def:agreement} for schedules admissible in the $(\rec = 13, \sbl = 13, \rho = 1)$-fully fluctuating sleepy model.
First, we show in \cref{lem:immobile-ba-vstable-input} that each $\ga3_{v}$ satisfies the very-stable input condition (\cref{def:immobile-ga-vstable-input}) for \cref{thm:immobile-ga3}.

\begin{lemma}[Very-Stable Input]
    \label{lem:immobile-ba-vstable-input}
    For every view $v \geq 0$, each \vstable[$13v$] node has an input $x_{C} \in \valspace$.
\end{lemma}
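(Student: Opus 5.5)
I will prove the statement by induction on the view $v$, showing that every \vstable[$13v$] node $p$ holds $x_C \in \valspace$ at the start of view $v$. The admissibility parameters $\rec = 13$ and $\sbl = 13$ do the work: a \vstable[$13v$] node lies in $\hons{13v - 13}{13v + 13}$. For $v \geq 1$ this means it has been honest and awake throughout the entire previous view $v-1$.

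\emph{Base case ($v = 0$).} A \vstable[$0$] node is honest and awake at round $0$. The protocol initializes $x_C \gets x$, where $x \in \valspace$ is its input to Byzantine agreement (\cref{def:agreement}). The assignment $x_C \gets \noinput$ at $t = 13$ is never executed before view $0$. Hence $x_C \in \valspace$.

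\emph{Inductive step.} Fix $v \geq 1$ and a \vstable[$13v$] node $p$. Since $\rec = 13$, $p$ has been honest and awake since the beginning of view $v-1$, i.e.\ since round $13(v-1)$. In particular, $p$ is \vstable[$13(v-1)$] provided it was also awake during the $\rec$ rounds before that. I only need \vstable-ness at the start of view $v - 1$ in order to invoke the induction hypothesis on $p$'s own state. Here is the one point needing care: $p$ may have been asleep before round $13(v-1)$, so it need not be \vstable[$13(v-1)$].

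I will therefore strengthen the invariant. Rather than rely on the hypothesis for $p$, I will argue directly from the updates at $t = 12$ and $t = 13$ of view $v-1$, using only that $p$ was awake for the whole of view $v-1$. There are two cases for $p$'s lock $x_L$ at the end of the view.

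\emph{Case $x_L \neq \bot$.} Then $g \geq 1$ and $x_L = y$, and $y \neq \bot$ holds because $\ga3$ outputs with positive grade carry values in $\valspace$ (an output in $\valspace$ at positive grade follows from integrity and consistency, or directly from the thresholds, since a $\bot$ input is never multicast). Hence $x_C \gets y \in \valspace$.

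\emph{Case $x_L = \bot$.} Then $p$ sets $x_C$ to the proposed value in $\valspace$ with the highest VRF output for view $v$. This is well defined as long as at least one valid proposal in $\valspace$ was received. I will argue this via admissibility with $\rho \geq 1$. The set $\hons{13(v-1) - 13}{13(v-1)+13}$ is nonempty, and $p$ itself is awake since the start of view $v-1$, so $p$ multicasts its own proposal. Its own $x_C$ must therefore be in $\valspace$. This again requires that $p$'s candidate was in $\valspace$ before the update, or was set from $y \neq \bot$.

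The main obstacle is exactly this well-definedness: ensuring some proposal in $\valspace$ exists. It suffices that every node awake through a whole view ends the view with $x_C \in \valspace$, which I will prove as the strengthened induction hypothesis. Such a node entered the view either with $x_C \in \valspace$ from the previous view's update, or from round $0$ with its input. If instead it entered with $x_C = \noinput$ because it was waking up, then at $t = 12$ it only sets $x_C \gets y$ if $y \neq \bot$. I will handle that subcase by noting that the proposer set always contains some node awake for the whole previous view, using nonemptiness of $\hons{\cdot}{\cdot}$ and descending induction on wake time. Its proposal is in $\valspace$, so the maximum-VRF selection yields a value in $\valspace$. Finally, the $t = 13$ clause resetting to $\noinput$ does not fire for $p$, since $p$ was awake since the beginning of view $v-1$.
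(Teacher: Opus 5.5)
Your argument is correct, and it takes a different route from the paper's that is also more careful. The paper argues directly, without induction. It notes that $p$ is \stable[$13(v-1)$] and sets $x_C \gets y$ only if $y \neq \bot$. If $p$ is unlocked, it adopts the highest-VRF proposal, which the paper says exists in $\valspace$ because $p$ multicasts its own.

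As you noticed, this overlooks one case: $p$ woke exactly at round $13(v-1)$, so $x_C$ was reset to $\noinput$, and $\ga3_{v-1}$ then returns $(\bot, 0)$. In that case $p$'s own proposal does not carry a value in $\valspace$. Your route closes this gap: it takes the $\valspace$-valued proposal from an honest node awake through both views $v-2$ and $v-1$, whose candidate is in $\valspace$ by induction. The price is an induction the paper avoids.

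You do not need a strengthened invariant or any ``descending induction on wake time,'' though. The lemma itself is a sufficient induction hypothesis if you apply it to some $q \in \hons{13(v-1)-13}{13(v-1)+13}$ rather than to $p$. This set is nonempty by admissibility at round $13(v-1)$. Such a $q$ enters view $v-1$ with $x_C \in \valspace$ and still holds a value in $\valspace$ after $t=12$. Its proposal reaches $p$ by round $13v$, so $p$'s selection at $t=13$ is well defined. Your strengthened claim is also true, but only when read as holding after the $t=13$ step. Right after $t=12$, a rejoining node may still hold $\noinput$.

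One small correction concerns why positive-grade outputs of $\ga3$ avoid $\bot$. The reason is simply that every positive-grade branch in \cref{fig:ga2} requires $\exists y \in \valspace$, and $\ga3$'s grade-$2$ branch reuses $\ga2_0$'s grade-$1$ value. Your appeal to integrity does not work, since integrity says nothing about $\bot$. Your claim that $\bot$ is never multicast is also false: in $\ga2_1$, nodes whose $\ga2_0$ output was $\bot$ do input and multicast $\bot$.
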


\begin{defer}{baproofs}
    \begin{proof}[Proof (of \cref{lem:immobile-ba-vstable-input})]
        For the very first view $v = 0$, each \vstable[$0$] node has input $x_{C} \in \valspace$ for $\ga3_{0}$.
        For views $v > 0$, each \vstable[$13v$] node $p_i$ has been awake since round $13v - 13$, making it \stable[$13(v - 1)$] at the start of $\ga3_{v - 1}$.
        By definition, $\ga3_{v-1}$ outputs value $y \in \valspace \cup \{\bot\}$, and $p_i$ only sets its candidate $x_{C}$ to $y$ if $y \neq \bot$, so $x_C \in \valspace$.
        If $p_i$ holds no lock, it replaces its candidate with the proposal of the highest VRF output; at least one such proposal exists, since $p_i$ is itself \stable[$13(v - 1)$] and multicasts its own proposal at the end of view $v - 1$.
        Every proposal carries a value in $\valspace$, so each \vstable[$13v$] node, then, has an input $x_{C} \in \valspace$.
    \end{proof}
\end{defer}

\begin{lemma}[Totality]
    \label{lem:immobile-ba-totality}
    If an honest node $p_i$ outputs $y$ during view $v$, then for every view $v' > v$, every node in $\hons{13v'}{13v' + 12}$ outputs $y$ if it has not output already.
\end{lemma}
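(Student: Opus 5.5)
The plan is an induction on views: once some honest node outputs $y$ in view $v$, every honest node entering $\ga3_{v'}$ for $v' > v$ with an input in $\valspace$ inputs $y$. Validity of $\ga3_{v'}$ (\cref{thm:immobile-ga3}, with very-stable input from \cref{lem:immobile-ba-vstable-input}) then makes every node of $\hons{13v'}{13v'+12}$ output $(y,2)$, and hence output $y$ in $\ba$ if it has not already. Nodes in $\hons{13v'}{13v'+12}$ are exactly the \stable[$13v'$] nodes for $\ga3_{v'}$, given $\sbl \geq 13 \geq 12$. So the work is in establishing the invariant
\[
    \text{every honest node awake at round } 13v' \text{ with } x_C \neq \noinput \text{ has } x_C = y \text{ and } x_L = y .
\]

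\textbf{Base step ($v' = v+1$).} Node $p_i$ output $y$ in view $v$, so it obtained $(y,2)$ from $\ga3_v$. By consistency of $\ga3_v$, no honest node obtains $(y',\cdot)$ with $y' \neq y$. By graded delivery, every \stable[$13v$] node obtains $(y,g)$ with $g \geq 1$. Now take any honest node awake at round $13(v+1)$ whose candidate is not $\noinput$. By the last step of \cref{fig:immobile-ba}, it was awake since the beginning of view $v$, so it is in $\hons{13v}{13v+12}$, i.e., \stable[$13v$]. Hence it sets $x_C \gets y$ (note $y \neq \bot$, since a value output at grade $2$ is in $\valspace$) and $x_L \gets y$, and it therefore skips the VRF-proposal update at round $13$.

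There is one subtlety. A node that is honest and awake at round $13(v+1)$ but not awake since $13v$ sets $x_C \gets \noinput$, so it is excluded from the hypothesis of validity. I also need to check that nodes that output in view $v$ but are not stable cannot exist: outputs only occur for nodes awake since the view began, so they are stable.

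\textbf{Inductive step.} Suppose the invariant holds at the start of view $v'$. Then every honest node awake at round $13v'$ with an input in $\valspace$ inputs $y$ into $\ga3_{v'}$. Validity of $\ga3_{v'}$ gives $(y,2)$ to every \stable[$13v'$] node, and integrity of $\ga3_{v'}$ gives that no honest node outputs any other value. So every node in $\hons{13v'}{13v'+12}$ outputs $y$ if it has not already output, and it locks on $y$ with candidate $y$. The same argument as in the base step then re-establishes the invariant at round $13(v'+1)$.

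\textbf{Main obstacle.} The delicate point is the base step. One must check that nodes awake at $13(v+1)$ but not stable throughout view $v$ cannot inject a different value. This is handled by the rule setting their candidate to $\noinput$, together with honest inclusion implicitly through the validity statement of \cref{def:graded-agreement}, which only quantifies over honest awake nodes with non-$\noinput$ input.
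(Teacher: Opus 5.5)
Your proof is correct and follows the paper's argument. Both use the same induction over views: graded delivery of $\ga3_v$ (applicable because $p_i$ was awake throughout view $v$) gives every node of $\hons{13v}{13v+12}$ a grade-$\geq 1$ lock on $y$, the $\noinput$ reset neutralizes nodes not awake for the whole previous view, and validity of $\ga3_{v'}$ (with very-stable input from \cref{lem:immobile-ba-vstable-input}) delivers $(y,2)$ to every node of $\hons{13v'}{13v'+12}$. The one difference is cosmetic: you carry a state invariant ($x_C = x_L = y$ for every honest awake node with $x_C \neq \noinput$) instead of the paper's ``some honest node obtained $(y,2)$ from $\ga3_{v'-1}$,'' which spares you the paper's appeal to admissibility for a nonempty stable set, and your appeal to integrity in the inductive step is unnecessary.
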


\begin{defer}{baproofs}
    \begin{proof}[Proof (of \cref{lem:immobile-ba-totality})]
        Since $p_i$ outputs $y$ during view $v$, it obtains $(y, 2)$ from $\ga3_{v}$.
        We first claim that, for any view $v' > v$, if some honest node obtains $(y, 2)$ from $\ga3_{v'-1}$, then every node in $\hons{13v'}{13v' + 12}$ obtains $(y, 2)$ from $\ga3_{v'}$.
        By graded delivery of $\ga3_{v'-1}$, every node in $\hons{13(v' - 1)}{13(v' - 1) + 12}$ obtains $(y, g)$ from $\ga3_{v'-1}$ for $g \geq 1$.
        Every such node therefore sets its candidate to $x_{C} = y$ and locks on $x_{L} = y$, so it keeps $x_{C} = y$ at round $13v'$, since a node holding a lock does not adopt a proposal.
        Any node honest and awake at round $13v'$ that was not awake throughout view $v' - 1$ resets its candidate to $\noinput$.
        Hence every honest and awake node at round $13v'$ with an input $\neq \noinput$ inputs $y$ into $\ga3_{v'}$, and by validity of $\ga3_{v'}$, whose very-stable input condition holds by \cref{lem:immobile-ba-vstable-input}, every $\hons{13v'}{13v' + 12}$ node obtains $(y, 2)$.
        The lemma then follows by induction on $v' > v$: for $v' = v + 1$, the claim applies with $p_i$ as the honest node, and for $v' > v + 1$, it applies with any node in $\hons{13v'}{13v' + 12}$, of which at least one exists since the schedule is admissible with $\rho \geq 1$, and which obtains $(y, 2)$ by the induction hypothesis.
    \end{proof}
\end{defer}

\begin{lemma}[Validity]
    \label{lem:immobile-ba-validity}
    If each honest node awake at round $0$ inputs $x$, then any node that outputs also outputs $x$.
\end{lemma}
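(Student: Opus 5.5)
The plan is to show by induction on the view $v$ that every honest node with a candidate input in $\valspace$ at the start of view $v$ holds $x_C = x$, and that every node in $\hons{13v}{13v+12}$ obtains $(x, 2)$ from $\ga3_v$. Any honest output is produced at round $12$ of some view $v$ from an output $(y, 2)$ of $\ga3_v$. Since outputting requires being awake since the start of the view, the outputting node lies in $\hons{13v}{13v+12}$, and the invariant then forces $y = x$.

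\emph{Base case $v = 0$.} Every honest node awake at round $0$ inputs $x$ as its candidate. Nodes not awake at round $0$ send no input to $\ga3_0$. Hence the hypothesis of validity of $\ga3_0$ holds, namely that every honest and awake node at round $t_s = 0$ with an input $\neq \noinput$ inputs $x$. Very-stable input holds by \cref{lem:immobile-ba-vstable-input}, so \cref{thm:immobile-ga3} applies. Validity then gives that every node in $\hons{0}{12}$ outputs $(x, 2)$. Integrity of $\ga3_0$ additionally gives that no honest node outputs any $(y, \cdot)$ with $y \neq x$. This is the fact we need for the inductive step, because nodes that are not stable may still obtain outputs.

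\emph{Inductive step.} This follows the argument of \cref{lem:immobile-ba-totality}. Suppose some honest node obtained $(x, 2)$ from $\ga3_{v-1}$; at least one node in $\hons{13(v-1)}{13(v-1)+12}$ exists by admissibility with $\rho \geq 1$. Graded delivery and consistency of $\ga3_{v-1}$ then imply that every node in $\hons{13(v-1)}{13(v-1)+12}$ obtains $(x, g)$ with $g \geq 1$. Each such node sets $x_C = x$ and locks on $x$, so it ignores proposals at round $13v$. Any honest node awake at round $13v$ that was not awake since the start of view $v-1$ resets its candidate to $\noinput$. Therefore every honest, awake node at round $13v$ with input $\neq \noinput$ inputs $x$ into $\ga3_v$. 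Validity of $\ga3_v$ gives $(x, 2)$ to every node in $\hons{13v}{13v+12}$, and integrity of $\ga3_v$ rules out any honest output of a value other than $x$. Rather than reprove this step, I would cite \cref{lem:immobile-ba-totality} directly with $p_i$ a node outputting $x$ in view $0$; any $p_i \in \hons{0}{12}$ works.

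\emph{Main obstacle.} The delicate point is an honest node that outputs in view $v$ while every earlier output came from some other node. We need every honest output to be $x$, not only outputs that come after some honest output. The totality-based argument handles this: every node in $\hons{13v}{13v+12}$ for $v \geq 1$ obtains $(x, 2)$, and an outputting node in view $v$ must belong to this set. The view-$0$ case is covered by validity of $\ga3_0$. The remaining check is that the base case has a node to seed totality, which is guaranteed by the existence of at least one stable node under admissibility.
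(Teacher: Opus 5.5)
Your proof is correct and is essentially the paper's argument: the paper likewise gets $(x,2)$ for the \stable[$0$] nodes from validity of $\ga3_0$, then covers every other outputting node via the safety lemma (\cref{lem:immobile-ba-agreement}), and that lemma is just consistency of $\ga3$ within a view plus \cref{lem:immobile-ba-totality} across views, so citing totality directly only inlines that step. Your appeal to integrity of $\ga3_0$ for view-$0$ outputs does the job that consistency of $\ga3_0$ does inside the safety lemma, and your explicit check that some node in $\hons{0}{12}$ exists to seed totality makes precise a point the paper leaves implicit.
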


\begin{defer}{baproofs}
\begin{proof}
    Since all honest nodes awake at round $0$ have input $x$, all \vstable[$0$] nodes input $x_{C} = x$ into $\ga3_0$.
    By validity of $\ga3$, every \stable[$0$] node obtains $(x, 2)$ from $\ga3_0$.
    Therefore, all \stable[$0$] nodes output $x$, and by \cref{lem:immobile-ba-agreement} (safety), any other honest node that outputs also outputs $x$.
\end{proof}
\end{defer}

\deferredproof{baproofs}{lem:immobile-ba-vstable-input,lem:immobile-ba-totality,lem:immobile-ba-validity}

\begin{lemma}[Safety]
    \label{lem:immobile-ba-agreement}
    For two honest nodes $p_i$ and $p_j$ that output $x$ and $x'$ at rounds $t$ and $t'$, respectively, $x = x'$.
\end{lemma}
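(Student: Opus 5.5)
Let $p_i$ output $x$ in view $v$ and $p_j$ output $x'$ in view $v'$; by symmetry assume $v \leq v'$. A node outputs in view $v$ only if it obtained $(x, 2)$ from $\ga3_{v}$, and that requires $p_i$ to have been awake since the beginning of view $v$. So $p_i \in \hons{13v}{13v+12}$, and likewise $p_j$ obtained $(x', 2)$ from $\ga3_{v'}$. The plan splits into the cases $v = v'$ and $v < v'$.

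\textbf{Case $v = v'$.} Both nodes obtain grade $2 > 0$ from the same instance $\ga3_{v}$. Consistency of $\ga3_{v}$ (\cref{lem:immobile-ga3-consistency}) then gives $x = x'$ directly. Its hypotheses come from \cref{thm:immobile-ga3}: very-stable input holds by \cref{lem:immobile-ba-vstable-input}, and the schedule is admissible with $\rec = 13 \geq 6$, $\sbl = 13 \geq 12$ and $\rho = 1$.

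\textbf{Case $v < v'$.} Here I would use the induction already carried out in \cref{lem:immobile-ba-totality}. That proof shows that for every view $v'' > v$, every node in $\hons{13v''}{13v''+12}$ obtains $(x, 2)$ from $\ga3_{v''}$. More importantly, it shows that every honest and awake node at round $13v''$ with input $\neq \noinput$ inputs $x$ into $\ga3_{v''}$. The inductive step runs as follows.
\begin{itemize}
\item Graded delivery of $\ga3_{v''-1}$ gives all stable nodes of that view grade $\geq 1$ on $x$, so they set $x_C = x$ and lock on $x_L = x$.
\item Consistency of $\ga3_{v''-1}$ ensures that no honest node of that view obtains a different value, even at grade $0$.
\item Nodes not awake throughout view $v''-1$ reset their candidate to $\noinput$.
\end{itemize}
Applying integrity of $\ga3_{v'}$ (\cref{lem:immobile-ga3-integrity}) at $v'' = v'$, no honest node can obtain a value other than $x$ from $\ga3_{v'}$. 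Hence $x' = x$.

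The main obstacle is the honest nodes that are awake at round $13v''$ but were not stable in view $v''-1$. For example, a node that was awake only partway through view $v''-1$ could otherwise carry a stale candidate. The step $x_C \gets \noinput$ in \cref{fig:immobile-ba} handles this, provided "awake since the beginning of the view" is read as covering the whole previous view. I would also check that such a node never reaches the $t = 12$ update in view $v''-1$. A second point to check is that the chain of inductive steps needs at least one honest grade-$2$ node in each intermediate view. Admissibility with $\rho \geq 1$ ensures $\hons{13v''}{13v''+12} \neq \emptyset$, which supplies one, exactly as in \cref{lem:immobile-ba-totality}. Finally, all $\ga3$ guarantees hold except with negligible probability, and a union bound over polynomially many views keeps the failure probability negligible.
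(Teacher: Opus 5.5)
Your proposal is correct and follows the paper's proof: the same split into same-view and later-view outputs, consistency of $\ga3_{v}$ in the first case, and the totality induction of \cref{lem:immobile-ba-totality} in the second. The only difference is cosmetic. The paper simply cites totality, whose last step uses validity of $\ga3_{v'}$ to give $p_j$ exactly $(x,2)$. You instead finish with integrity of $\ga3_{v'}$, which also works once you note that a grade-$2$ output of $\ga3$ lies in $\valspace$ (it comes from a grade-$1$ output of $\ga2_0$).
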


\begin{proof}
    Without loss of generality, assume $t \leq t'$.
    Then we have two cases: they output in the same view ($t = t'$), or $p_j$ outputs in a later view ($t < t'$).
    For $t = t'$, we directly get agreement since no two honest nodes can output two different values with grade $2$ by consistency of $\ga3$.
    For $t < t'$, let views $v$ and $v'$ be the views in which $p_i$ and $p_j$ output, respectively.
    Since $p_j$ outputs in view $v'$, it is in $\hons{13v'}{13v' + 12}$, as only such nodes obtain a grade $2$ output.
    \Cref{lem:immobile-ba-totality} then gives that $p_j$ outputs $x$, so $x' = x$.
\end{proof}

\begin{lemma}[Liveness]
    \label{lem:immobile-ba-liveness}
    There exists a round $t_0$ such that for every $t \geq t_0$, every \vstable[$t$] node outputs.
\end{lemma}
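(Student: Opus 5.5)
The plan is to show that some view $v^*$ has a VRF leader that is a \stable node. From $\ga3_{v^*+1}$ on, every \stable node then obtains grade $2$. Totality (\cref{lem:immobile-ba-totality}) extends this to every later view. A \vstable[$t$] node with $\rec = \sbl = 13$ covers a full view, so it then outputs.

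First I would isolate a good event. Let $\mathcal{E}_v$ be the event that the highest VRF output for view $v+1$, among all proposals received by any \stable[$13v$] node, belongs to an honest \stable[$13v$] node. Under $\mathcal{E}_v$, every \stable node receives the same maximal proposal, since honest proposals are multicast at round $12$ and arrive by round $13$. This proposal carries that node's candidate $x^*\in\valspace$.

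Next I would do the case analysis on view $v$ under $\mathcal{E}_v$. By graded delivery and consistency of $\ga3_v$ (\cref{thm:immobile-ga3}), either some honest node obtains grade $\ge 1$ for a value $y$, or none does.
\begin{itemize}
\item If the maximum honest grade is $2$, the argument inside \cref{lem:immobile-ba-totality} already yields grade $2$ in view $v+1$.
\item If some honest node has grade $1$ for $y$, consistency forces every honest output value to be $y$. The leader's candidate then equals $y$, because a leader that held a non-$\bot$ output adopted it, and a leader that output $\bot$ retained an earlier candidate. That last sub-case is the delicate point: consistency forbids $\bot$ whenever grade $1$ exists, since $\bot \ne y$ counts as an output. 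So locked nodes keep $y$ and unlocked nodes adopt the leader's $x^*=y$.
\item If no honest node has grade $\ge1$, then all are unlocked and adopt $x^*$.
\end{itemize}
In every case, each honest node awake at round $13(v+1)$ with input $\neq\noinput$ inputs the same value, since non-stable nodes reset to $\noinput$. Validity of $\ga3_{v+1}$, with very-stable input given by \cref{lem:immobile-ba-vstable-input}, then gives grade $2$ to all \stable[$13(v+1)$] nodes.

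The main obstacle is the probability argument that $\mathcal{E}_v$ occurs for some $v$, because the adversary may pre-compute VRF outputs and pick corrupt nodes with high outputs. My plan is to bound this using admissibility with $\rho\ge1$. The relevant proposers are nodes in $\advs{13v}{13v+13}$, which are strictly fewer than the \stable honest proposers. The VRF outputs of honest nodes are pseudorandom and independent of the adversary's choice of schedule. So for each view, the maximum lies with an honest node with probability at least roughly $1/2$. The adversary's knowledge of honest outputs is what makes this non-trivial: the schedule must be fixed without knowing honest VRF values. Alternatively, the adversary's awake corrupt set must be less than the honest set, whose outputs it cannot predict. Failure over $k$ consecutive views then happens with probability at most about $2^{-k}$, which gives liveness with overwhelming probability within $O(\lambda)$ views.

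Finally, I take $t_0 = 13(v^*+2)$. Any \vstable[$t$] node with $t\ge t_0$ lies in $\hons{13v'}{13v'+12}$ for some view $v'>v^*+1$, and so it outputs by totality.
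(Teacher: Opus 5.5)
Your deterministic part is correct and is essentially the paper's argument: an honest-leader view gives every $13(v+1)$-stable node grade $2$, and totality carries this forward. Your three-way case split via consistency is a valid substitute for the paper's graded-delivery argument that a locked node's value equals the leader's. The genuine gap is the probability step: the per-view bound of roughly $1/2$ is false.

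The claim that the relevant adversarial proposers are the nodes in $\advs{13v}{13v+13}$ is wrong. A $\proposemsgtype$ message carries no fresh nonce. So any node corrupted at \emph{any} earlier round could, while awake, have queried $\ovrf$ on $v+1$ and $\osig$ on a proposal for view $v+1$. The adversary can then deliver the best such pre-signed proposal at round $13v+12$, even if that node sleeps through view $v$.

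The adversarial pool is therefore $\advs{0}{13v+13}$, and admissibility does not bound its size against the honest set. Under fully fluctuating participation, the cumulative corrupt set can be all of $\nodes$ but one node. Your remark that the schedule cannot depend on honest VRF values is true but does not rescue the bound. The schedule \emph{can} depend on the precomputed corrupt VRF values, which is exactly the cherry-picking you flagged. So even restricting to currently awake corrupt nodes would not give $1/2$, because that set is chosen after seeing their outputs. With one stable honest node against $n-1$ ever-corrupt nodes, the honest win probability per view is only about $1/n$. Hence the $2^{-k}$ failure bound and ``$O(\lambda)$ views'' are incorrect.

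The fix, as in the paper, is as follows. In each view there is at least one $13v$-stable node, by admissibility with $\rho \ge 1$. Its VRF output on $v+1$ is uniform and unknown in advance, and there are at most $n$ possible proposers. This gives success probability at least $1/n$ per view. Because the VRF input is the view number, this bound holds regardless of the adversary's behavior in earlier views. Failure over $n\lambda$ consecutive views therefore has probability at most $(1-1/n)^{n\lambda} \le e^{-\lambda}$.

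With this replacement your proof goes through. For the final totality step, you should also note that admissibility guarantees at least one $13(v^*+1)$-stable node, so some honest node actually outputs in view $v^*+1$.
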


\begin{proof}
    Any \vstable[$t$] node with $t$ in view $v$ satisfies $[t - \rec, t + \sbl] \supseteq [13v, 13v + 12]$, so it is \stable[$13v$] for view $v$.
    We call a view $v$ an \emph{honest-leader view} if the proposal with the highest VRF output for view $v + 1$ is from a node $p_\ell$ that is \stable[$13v$].
    We first show that if $v$ is an honest-leader view, then every \stable[$13(v+1)$] node outputs in view $v + 1$.

    Let $x$ be the candidate that $p_\ell$ proposes, which is its output value from $\ga3_{v}$ unless that value is $\bot$, in which case it is the candidate $p_\ell$ already held; either way $x \in \valspace$.
    A \stable[$13v$] node that holds no lock adopts $p_\ell$'s proposal, so its candidate is $x$.
    A \stable[$13v$] node holds a lock only if it obtains grade at least $1$ from $\ga3_{v}$, and then graded delivery of $\ga3_{v}$ gives every \stable[$13v$] node the same output value; since $p_\ell$ is one of them, that value is $x$, which the node keeps as its candidate.
    Either way every \stable[$13v$] node has candidate $x$ at round $13(v+1)$, while a node honest and awake at that round but not awake throughout view $v$ sets its candidate to $\noinput$.
    Every honest and awake node at round $13(v+1)$ with an input $\neq \noinput$ therefore inputs $x$ into $\ga3_{v+1}$, so by validity of $\ga3_{v+1}$, every \stable[$13(v+1)$] node obtains $(x, 2)$ and outputs $x$ in view $v + 1$.

    We now show that an honest-leader view $v_0$ occurs within $n\lambda$ views except with negligible probability.
    Fix a view $v$ with $t_s = 13v$.
    Since the schedule is admissible with $\rho \geq 1$, at least one \stable[$t_s$] node exists in view $v$.
    The adversary may cherry-pick any corrupt nodes since the beginning of the execution for the highest VRF output.
    Therefore, there are $|\hons{t_s}{t_s + 12} \cup \advs{0}{t_s + 13}|$ \emph{possible} proposals in the view, making the probability that an honest node's proposal has the highest VRF output be
    \[
        \frac{|\hons{t_s}{t_s + 12}|}{|\hons{t_s}{t_s + 12}| + |\advs{0}{t_s + 13}|}.
    \]

    In the worst case, only one node is honest and awake, and the remaining nodes are corrupt and asleep.
    Since VRF outputs are independent and uniform, the highest output is equally likely to belong to any node in $\nodes$.
    The probability that view $v$ is an honest-leader view is then at least $1/n$.
    Since the VRF outputs for each view are sampled anew, with the view number as the oracle input, this bound holds for a view regardless of the adversary's choices in the preceding views.
    The probability that none of $n\lambda$ consecutive views is an honest-leader view is therefore at most $(1 - 1/n)^{n\lambda}$, which is negligible in $\lambda$.

    Let $v_0$ be the first honest-leader view and set $t_0 = 13(v_0 + 1)$.
    Since the schedule is admissible with $\rho \geq 1$, at least one \stable[$13(v_0+1)$] node exists, so honest nodes indeed output in view $v_0 + 1$.
    Any \vstable[$t$] node with $t \geq t_0$ lies in some view $v \geq v_0 + 1$ and is \stable[$13v$].
    If $v = v_0 + 1$, it outputs by the implication above for the honest-leader view $v_0$.
    If $v > v_0 + 1$, it outputs by \cref{lem:immobile-ba-totality}, since honest nodes output in view $v_0 + 1$.
\end{proof}

\begin{theorem}
    \label{thm:immobile-ba}
    The protocol in \cref{fig:immobile-ba} implements Byzantine agreement (\cref{def:agreement}) for schedules admissible in the $(\rec = 13, \sbl = 13, \rho = 1)$-fully fluctuating sleepy model.
\end{theorem}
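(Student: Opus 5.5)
The proof assembles the lemmas of this subsection, after first checking that the parameter choice $\rec = 13$, $\sbl = 13$, $\rho = 1$ meets the hypotheses of the building blocks.

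The first step is to confirm that each $\ga3_v$ instance, started at round $13v$, satisfies the assumptions of \cref{thm:immobile-ga3}. That theorem needs $\rec \geq 6$, $\sbl \geq 12$ and $\rho \geq 1$, all of which hold here. It also needs very-stable input, and \cref{lem:immobile-ba-vstable-input} provides this for every view $v$. Hence graded delivery, consistency, validity and integrity of $\ga3_v$ are available throughout the execution.

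The inner $\gw3$ and $\ga2$ instances must also get what they need. Each such instance starts at some round $13v$ or $13v+6$ and requires $\sbl \geq 6$, which the inner theorems cover. A node that is \stable[$13v$] with respect to $\sbl = 13$ is awake through $13v+13$. It therefore participates for all of rounds $0$ through $12$ of the view and also at round $13$, where the proposals are processed. This is why $\sbl = 13$ rather than $12$ is needed for the proposal step.

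The three properties of \cref{def:agreement} then follow directly:
\begin{itemize}
\item \emph{Safety} is \cref{lem:immobile-ba-agreement}.
\item \emph{Validity} is \cref{lem:immobile-ba-validity}.
\item \emph{Liveness} is \cref{lem:immobile-ba-liveness}, which holds except with negligible probability $(1-1/n)^{n\lambda} + \negl{\lambda}$; the second term comes from the graded-wakeness failure events, accumulated over polynomially many views by a union bound.
\end{itemize}

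The main point to check is the window inclusion used in liveness. For a \vstable[$t$] node with $t$ in view $v$, meaning $13v \leq t \leq 13v+12$, we need $[t-13, t+13] \supseteq [13v, 13v+12]$. This holds because $t - 13 \leq 13v - 1$ and $t + 13 \geq 13v + 13$. So such a node is \stable[$13v$], and it is also very-stable for that view's $\ga3_v$ instance, since $\rec = 13 \geq 6$.

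With these parameter checks in place, the theorem follows by citing the lemmas above.
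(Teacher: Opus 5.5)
Your proposal is correct and follows the paper's proof. You check that each $\ga3_v$ meets the hypotheses of \cref{thm:immobile-ga3}: $\rec \geq 6$, $\sbl \geq 12$, $\rho \geq 1$, and very-stable input from \cref{lem:immobile-ba-vstable-input}. You then cite \cref{lem:immobile-ba-agreement,lem:immobile-ba-validity,lem:immobile-ba-liveness}.

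One side remark is false, though nothing depends on it. A $t$-very-stable node with $t = 13v+12$ is only guaranteed awake from round $13v-1$, so it need not be $13v$-very-stable, which requires being awake from round $13v-13$. Liveness only needs such a node to be $13v$-stable, and your window check does establish that.
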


\begin{proof}
    Since each view $v \geq 0$ has very-stable input (\cref{def:immobile-ga-vstable-input}) by \cref{lem:immobile-ba-vstable-input}, $\rec \geq 6$, $\sbl \geq 12$, and $\rho \geq 1$, each $\ga3_{v}$ instance implements graded agreement by \cref{thm:immobile-ga3}.
    Thus, we have \emph{safety} by \cref{lem:immobile-ba-agreement}, \emph{validity} by \cref{lem:immobile-ba-validity}, and \emph{liveness} by \cref{lem:immobile-ba-liveness}.
\end{proof}

\section{Consensus in the Fully Fluctuating Sleepy Model with Uncorruption}
\label{mobile}

This section presents a discussion of our agreement protocol in \cref{immobile:ba} under an adversary that can uncorrupt corrupt nodes.
Uncorruption raises two challenges.
The first concerns both safety and liveness of the agreement protocol but is resolved with a minor change to $\gw3$ and $\ga2$.
The second addresses how the adversary may bias the VRF-based leader election, which requires a mild additional assumption to guarantee liveness.

\subsection{Challenge 1: Pre-Signed Equivocation}
\label{mobile:equivs-fix}

In the previous setting of \cref{immobile}, honest nodes are assumed to always have a ``clean slate,'' since a node made corrupt in a prior round remains corrupt.
With uncorruption, no node is guaranteed a clean slate.
In graded agreement, therefore, an honest node can appear as an equivocator in a view when its past corrupt-self pre-signed input messages for the view.
Honest and awake nodes are then unable to distinguish honest inputs from the pre-signed ones.
The graded agreement protocol loses validity since non-equivocating inputs may not include all honest inputs, so consequently, the agreement protocol loses safety and liveness.

The liveness violation follows since a set of non-equivocating inputters may never form a majority for a single input in $\ga2$ and subsequently in $\ga3$, so no node may ever output from $\ba$.
Safety is subtler: suppose an honest node decides $x$ in some view, causing the other honest and awake nodes to lock on $x$.
Then, in the next view, every honest inputter may be considered an equivocator, so $\ga3$ outputs $(\bot, 0)$ and honest nodes release their locks on $x$.
The adversary may then win the next leader election and propose $y \neq x$, which the honest nodes adopt and decide, violating agreement.

\begin{figure}[tb]
      \centering
      \begin{tcolorbox}
            A node $p_i$ \edit{with value $x$} executes the below steps starting from a round $t_s$.
            A node $p_j$ has \emph{acknowledged} a nonce $\nonce$ \edit{with value $x'$} once $p_i$ receives $\signedackmsg[j]{\nonce, \medit{x'}}$.
            \begin{protocolsteps}
                  \item[$(t = 0):$] %
                  Sample nonce $\nonces^{0} \overset{\$}{\gets} \{0, 1\}^{\lambda}$.
                  Multicast $\signedpingmsg[i]{\nonces^{0}}$.
                  \item[$(t = 3):$] %
                  Multicast $\signedackmsg[i]{\nonce, \medit{x}}$ for every nonce $\nonce \in \nonces^{3}$.
                  \item[$(t = 4):$] %
                  Output $\gawake[2]$ as the \edit{node-value pairs} that acknowledged every nonce in $\nonces^{2}$.
                  \item[$(t = 5):$] %
                  Output $\gawake[1]$ as the \edit{node-value pairs} that acknowledged every nonce in $\nonces^{1}$.
                  \item[$(t = 6):$] %
                  Output $\gawake[0]$ as the \edit{node-value pairs} that acknowledged $\nonces^{0}$.
            \end{protocolsteps}
      \end{tcolorbox}
      \caption{Graded wakeness with $3$ grades for a node $p_i$ with value $x$. Text in red indicates the changes to \cref{fig:gw3} that attach a node's value to each nonce acknowledgment.}
      \label{fig:mobile-gw3}
\end{figure}

Graded wakeness already solves the problem of distinguishing the latest messages from pre-signed ones.
We simply have $\gw3$ take in a value that nodes attach alongside their nonce acknowledgments.
\Cref{fig:mobile-gw3} shows the changes in red.
The graded agreement protocol then passes its input to $\gw3$ and is otherwise unchanged, except that a node counts as an inputter or an equivocator only for a value it acknowledged in $\gw3$, at the grade already used for that round in \cref{fig:ga2}.
A pre-signed input has no fresh nonce acknowledgment, so it counts neither as an input nor as equivocation evidence.
With these changes, the proofs of validity and integrity (\cref{lem:immobile-ga-validity,lem:immobile-ga-integrity}) remain unchanged since, except with negligible probability, no corrupt node can pre-sign nonce-value pairs.
Equivocation evidence remains transferable due to graded delivery of graded wakeness, so the proofs of graded delivery and consistency (\cref{lem:immobile-ga-gd,lem:immobile-ga-consistency}) also remain unchanged.

\subsection{Challenge 2: Biased Leader Election}
\label{mobile:leader-fix}

Uncorruption introduces a significant source of bias in the VRF-based leader election.
In the previous setting, the adversary could pull from its set of corrupt nodes since the beginning of execution as possible candidates for the leader election.
With uncorruption, the adversary could theoretically pre-compute VRF outputs for every node in $\nodes$ by corrupting and uncorrupting each node.
Therefore, the adversary could predict the outcome of every leader election.

In this extreme case, the protocol cannot guarantee liveness, so some additional assumption is needed.
In essence, any assumption that guarantees an eventual honest leader suffices.
We formalize this using the notion of an \emph{unpredictable} node in \cref{def:unpredictable}.
\begin{definition}[Unpredictable Nodes]
    \label{def:unpredictable}
    A node $p_i$ is said to be unpredictable for a view $v > 0$ if it is \stable[$13(v-1)$] and has never queried the VRF oracle $\ovrf$ for view $v$ before round $13(v - 1)$.
\end{definition}
The adversary does not know whether an unpredictable node in a view will win the leader election or not.
Unpredictability is a mild assumption in practice, since it is extremely unlikely that the adversary has obtained every node's VRF output before the election.
Some nodes might simply have never been corrupted, so they remain unpredictable.
Other nodes may be more reluctant to share VRF outputs with the adversary than signatures, since VRF outputs decide leader election, so the adversary obtains fewer of them.
Assuming such an unpredictable node occurs often enough, liveness is guaranteed.
\Cref{thm:mobile-ba} makes this precise.

\begin{theorem}
    \label{thm:mobile-ba}
    The protocol in \cref{fig:immobile-ba}, with $\gw3$ and $\ga2$ modified as in \cref{mobile:equivs-fix}, implements Byzantine agreement (\cref{def:agreement}) for schedules that:
    \begin{itemize}
        \item are admissible in the $(\rec = 13, \sbl = 13, \rho = 1)$-fully fluctuating sleepy model; and
        \item have at least $n\lambda$ views $v > 0$ for which some node is unpredictable (\cref{def:unpredictable}).
    \end{itemize}
\end{theorem}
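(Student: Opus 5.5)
The plan is to follow the structure of \cref{thm:immobile-ba}: re-establish the graded wakeness and graded agreement guarantees under uncorruption, then reuse the agreement-level lemmas almost verbatim, and change only the liveness argument.

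\textbf{Step 1: graded wakeness and graded agreement.}
First re-verify \cref{thm:immobile-gw3} for the modified $\gw3$ of \cref{fig:mobile-gw3}, where outputs are node-value pairs.

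Graded delivery (\cref{lem:immobile-gw-gd}) goes through unchanged, since acknowledgments are forwarded exactly as before.

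Asleep exclusion (\cref{lem:immobile-gw-exclusion}) still holds: a nonce sampled at round $t_s$ cannot be signed before $t_s$ except with probability $\poly(n)/2^\lambda$. This is true even for a node that was corrupt earlier and has since been uncorrupted, because signing keys stay with $\osig$ in the external adversary model.

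For honest inclusion, one point needs checking. A node honest since $t_s$ signs acknowledgments for the nonces in $\nonces^{3}$ only with its actual value $x$. Any pair carrying a different value would need a signature on a fresh nonce, so it can only come from a node in $\advs{t_s}{t_s+6}$. Hence each honest node contributes exactly one pair, and that pair appears at every grade.

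Next, reprove the $\ga2$ lemmas with ``inputter'' redefined as a node whose value was acknowledged at the relevant grade. The key observation is that inputs and equivocation evidence now live inside $\gawake[g]$. The inclusions used in \cref{lem:immobile-ga-gd} then follow directly from graded delivery of the modified $\gw3$.

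For \cref{lem:immobile-ins-majority}, the counting is unchanged once we note three facts:
\begin{itemize}
  \item honest nodes in $\ins$ never appear as equivocators, since pre-signed inputs carry no fresh acknowledgment;
  \item corrupt contributions are bounded by $|\advs{t_s}{t_s+6}|$;
  \item $|\hons{t_s-\rec}{t_s+\sbl}| > |\advs{t_s}{t_s+\sbl}|$ by admissibility.
\end{itemize}
This gives \cref{thm:immobile-ga2} and then \cref{thm:immobile-ga3}. The $\ga3$ composition is purely black-box, so it carries over.

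\textbf{Step 2: agreement-level lemmas.}
Next, check \cref{lem:immobile-ba-vstable-input}, \cref{lem:immobile-ba-totality}, \cref{lem:immobile-ba-validity} and \cref{lem:immobile-ba-agreement}. These use only the graded agreement properties and local state. The one point to handle is that an uncorrupted node restarts from its initial state. Such a node wakes as if for the first time, is not awake since the beginning of the view, and so sets $x_C \gets \noinput$. This is exactly the recovery behaviour already handled in the proofs. Locks held by \stable nodes persist, since those nodes were not corrupted during the view. So safety and validity follow as before.

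\textbf{Step 3: liveness (the main obstacle).}
The implication ``honest-leader view $v \Rightarrow$ every \stable[$13(v+1)$] node outputs in view $v+1$'' is reused verbatim from \cref{lem:immobile-ba-liveness}. What changes is the probability bound that a view is an honest-leader view.

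Consider a view $v+1$ with an unpredictable node $p_u$. At round $13v$, the adversary's information about $\vrfo[\sk_u]{v+1}$ is nil, because it was never queried before then. The adversary may know or choose all other VRF outputs. However, $p_u$ is \stable[$13v$] and cannot be corrupted before it proposes at round $13v+12$; any corruption would violate stability. So $p_u$'s output is uniform and independent of the rest.

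I would argue as follows. Condition on everything the adversary fixes; the remaining participants number at most $n$ nodes. By symmetry of the independent uniform output, $p_u$'s output is the maximum among all possible proposals with probability at least $1/n$. Note that the adversary can still add proposals after seeing $p_u$'s proposal, since proposals are multicast at round $13v+12$ and adopted at $13v+13$. This only enlarges the pool of competitors, which is bounded by $n$ in any case.

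Independence across the $n\lambda$ unpredictable views is the delicate part, because the adversary chooses adaptively which views contain unpredictable nodes. I would handle this with a martingale or sequential-conditioning argument: at each such view, conditioned on the entire past, the success probability is at least $1/n$. This yields failure probability at most $(1-1/n)^{n\lambda} \le e^{-\lambda}$.

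Finally, set $t_0$ after the first honest-leader view and conclude with the totality lemma, exactly as before.
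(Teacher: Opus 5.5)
Steps 1 and 2 follow the paper's sketch. The problem is the probability bound in Step 3.

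\textbf{Main gap: the $1/n$ bound fails under conditioning.} Conditioning on everything the adversary fixes destroys the symmetry you invoke. With uncorruption, the adversary can corrupt and uncorrupt every node other than $p_u$, which is exactly the behaviour \cref{mobile:leader-fix} describes. This lets it learn their VRF outputs for view $v+1$ and pre-sign their proposals long before round $13v$. Let $M$ be the maximum of those known outputs, normalized to be uniform on $[0,1]$. Given $M$, $p_u$ wins with probability $1-M$; the value $1/n$ is only the average over $M$.

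Moreover, whether view $v+1$ has an unpredictable node at all is the adversary's decision, taken after it knows $M$. A concrete strategy: in every view, keep the node with the largest known output from proposing honestly and deliver its pre-signed proposal instead. Let $p_u$ be stable in view $v$ only when $M \ge 1-\epsilon$. Then each counted view succeeds with probability at most $\epsilon$ given the past, so the $n\lambda$ counted views all fail with probability at least $(1-\epsilon)^{n\lambda} \ge 1-n\lambda\epsilon$. For $\epsilon = 1/(n\lambda^2)$ this is $1-1/\lambda$. Roughly one view in $1/((n-1)\epsilon)$ qualifies, so collecting $n\lambda$ of them takes only polynomially many views.

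So the premise of your martingale step (success probability at least $1/n$ at each counted view, conditioned on the past) is false. No sequential-conditioning argument yields $(1-1/n)^{n\lambda}$ from the hypothesis as stated. The paper's one-sentence proof asserts the same conclusion without addressing this, so following it does not close the gap.

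What is needed is a hypothesis the adversary cannot select on, e.g.\ that the views containing an unpredictable node are fixed independently of the VRF outputs. Under such a hypothesis, the unpredictable node holds the maximum of all $n$ outputs with probability $1/n$ in each such view. Independence across view numbers then gives your product bound.

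\textbf{Smaller point: the leader's proposal may be $\noinput$.} The honest-leader implication you reuse verbatim from \cref{lem:immobile-ba-liveness} needs the leader's proposal to carry a value in $\valspace$. Consider a node that is stable for view $v$ but was not awake throughout view $v-1$, e.g.\ one just reset by uncorruption, as you note in Step 2. It enters $\ga3_v$ with $x_C = \noinput$. If $\ga3_v$ returns $(\bot, 0)$, it proposes $\noinput$, which no node adopts. Its holding the top VRF output then does not make the view good. Since \cref{def:unpredictable} only requires stability, Step 3 needs the unpredictable node to be very-stable, or a change to the proposal rule.
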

The proof of liveness follows as outlined in \cref{lem:immobile-ba-liveness} because, except with negligible probability, eventually some unpredictable node will have the highest VRF output.
\section{Related Work}

\authcite{PS17} formalize the sleepy model.
Their original sleepy model assumes a fixed set of corrupt nodes that do not fluctuate.
The model has attracted significant attention, with many variants~\cite{DPS16,KRDO17,DGKR17,BGK+18} of the longest-chain approach of~\authcite{Nak09}, as well as works improving latency~\cite{MR22,GL23,ENRT26} and providing censorship resistance~\cite{ET25}.

The original sleepy model was extended in subsequent works~\cite{MMR23,DNTT23,DSTZ24} to allow for the corrupt set to grow, but not shrink, so these protocols are not fully fluctuating.
Recent work~\cite{TY26} gives tight upper and lower bounds on communication complexity in this model.
\authcite{ENP25} introduce the external adversary model, and, like PoSAT~\cite{DKT21}, utilize VDFs to achieve consensus under fully fluctuating participation.
\authcite{FLPA26} improve on the latency and security of fully permissionless PoW consensus.

\emph{Reconfigurable} consensus~\cite{LMZ09,JM14,LAB+06,SRMJ12,BSA14,DZ22,NNR26} also deals with participation changes over time but is not to be confused with the sleepy model.
Reconfiguration changes the set of nodes that \emph{can} participate in the protocol, which is orthogonal to a majority of the nodes being asleep at any given time.
The sleepy model allows for fluctuations in terms of which nodes \emph{do} participate.
Only a few works consider both reconfigurations and sleepy nodes~\cite{BGK+18,BHK+20,DPS16}.
\authcite{LR23b} show %
that any protocol in the reconfiguration setting must make use of non-time-malleable cryptography (e.g., VDFs).
Note that this result does not apply to our protocol since our model does not include reconfiguration.

Algorand~\cite{CM19} achieves consensus with \emph{player replaceability}, where each step of the protocol is executed by a different committee.
Algorand's committees are subsampled from the membership set via VRF-based sortition~\cite{MRV99}.
Algorand relies on a fixed threshold of honest and awake nodes, so it does not tolerate adversarially controlled sleepiness.
It requires a super-majority of honest and awake nodes, which more closely resembles traditional models.
Subsequent works~\cite{GHK+21,CGG+21,DDG+23} extend player replaceability to MPC and also require a super-majority of honest and awake nodes.

Works on mixed fault models~\cite{GP92,MNR19,ADKS22,GDCL22,AMN+20} consider a combination of crash and Byzantine faults simultaneously, but do not consider scenarios in which a majority of nodes may crash.
\authcite{AKSW22} implement storage primitives under \emph{churn}, where nodes continually enter and leave an asynchronous system.
They require a known threshold on the churn rate, so they also do not allow an arbitrary number of nodes to leave or crash at once. %

The mobile Byzantine fault model, formalized by~\authcite{G94}, captures an adversary that can reallocate its corruption budget across nodes, potentially \emph{curing} (i.e., uncorrupting) previously corrupt nodes.
The literature divides into constrained mobility, where the adversary moves with messages~\cite{BGH95}, and unconstrained mobility, where the adversary moves between rounds~\cite{G94}.
Another key axis is whether the cured node can detect that it was corrupt~\cite{G94,BSIW12} or may inadvertently continue sending corrupt messages~\cite{SYKM13,BDNP14}.
Our work more closely resembles the setting of unconstrained mobility with corruption detection.
Intuitively, the fully fluctuating sleepy model is unique in that nodes are already assumed to detect that they were asleep, so it is reasonable to assume they also detect that they were corrupt.

\section{Conclusion}

This work presents a consensus protocol in the fully fluctuating sleepy model without any hardware assumptions such as PoW or VDFs.
We introduce a new primitive, \emph{graded wakeness}, which we implement using several rounds of a simple interactive challenge-response protocol.
Graded wakeness allows us to upgrade the graded agreement of~\authcite{DSTZ24} to tolerate fully fluctuating participation.
We then give agreement protocols in the fully fluctuating sleepy model. 
Without uncorruption, no additional assumption is required.
With uncorruption, we make a minor change to the graded wakeness and graded agreement protocols and require an additional mild assumption.
We leave it for future work to optimize aspects of the protocols such as round and message complexity.

\subsection*{AI Disclosure}
We used Claude Code to improve the clarity, grammar, and readability of this work.

\bibliographystyle{plainurl}%
\bibliography{references}

\appendix
\deferredsection{ga3proofs}{Graded Agreement with 3 Grades}
\deferredsection{ga2proofs}{Deferred Proofs for Graded Agreement with 2 Grades}
\deferredsection{baproofs}{Deferred Proofs for Byzantine Agreement}

\end{document}